\newcommand{\mypara}[1]{\vspace{1.2ex}\noindent\textbf{\textit{#1}}\quad}
\newcommand{\mysubpara}[1]{\vspace{0.8ex}\noindent\textit{#1}\quad}
\newcommand{\cF}{\mathcal{F}}\newcommand{\bR}{\mathbb{R}}
\newcommand{\altc}{\hat{c}}
\newcommand{\bx}{\mathbb{X}}
\newcommand{\bu}{\mathbb{U}}
\newcommand{\bW}{\mathbb{W}}
\newcommand{\Exp}{\mathbb{E}}
\newcommand{\Pro}{\mathbb{P}}
\newcommand{\ct}{\mathcal{T}}
\newcommand{\ctp}{\mathcal{S}}
\newcommand{\gsp}{\gamma^{*}_{\altc,\cs}}
\newcommand{\jb}{J_{\beta}}
\newcommand{\wc}{\mathcal{W}}
\newcommand{\clip}{\Vert c\Vert_{\mathrm{Lip}(\bx)}}
\newcommand{\flip}{\Vert f\Vert_{\mathrm{Lip}}}
\newcommand{\gglip}{\Vert g\Vert_{\mathrm{Lip}}}
\newcommand{\cinf}{\Vert c\Vert_{\infty}}
\newcommand{\ctt}{\mathcal{T}}
\newcommand{\cs}{\mathcal{S}}
\newcommand{\tlip}{\Vert \ct\Vert_{\mathrm{Lip}}}
\newcommand{\jsc}{J_{\beta}^{*}(c,\ct)}
\newcommand{\jscp}{J_{\beta}^{*}(\altc,\ctp)}
\newcommand{\vcn}{v_{\altc,\cs}}
\newcommand{\hlip}{\left\Vert h_{c,\ct}^{*}\right\Vert_{\mathrm{Lip}}}
\newcommand{\flipx}{\Vert f\Vert_{\mathrm{Lip}(\bx)}}
\newcommand{\flipw}{\Vert f\Vert_{\mathrm{Lip}(\mathbb{W})}}
\newcommand{\flipxu}{\Vert f\Vert_{\mathrm{Lip}(\bx \times \bu)}}
\newcommand{\rlipx}{\Vert r\Vert_{\mathrm{Lip}(\bx)}}
\newcommand{\rlipxu}{\Vert r\Vert_{\mathrm{Lip}(\bx \times \bu)}}
\newcommand{\revise}[1]{#1}
\DeclareMathOperator*{\argmin}{argmin}
\definecolor{dmagenta}{rgb}{.4,.1,.5}
\definecolor{dblue}{rgb}{.0,.0,.5}
\definecolor{mblue}{rgb}{.0,.0,.7}
\definecolor{ddblue}{rgb}{.0,.0,.4}
\definecolor{dred}{rgb}{.7,.0,.0}
\definecolor{dgreen}{rgb}{.0,.5,.0}
\definecolor{Eeom}{rgb}{.0,.0,.5}
\newtheorem{lemma}{Lemma}[section]
\newtheorem{theorem}{Theorem}[section]
\newtheorem{corollary}{Corollary}[section]
\theoremstyle{definition}
\newtheorem{definition}{Definition}[section]
\newtheorem{assumption}{Assumption}[section]
\newtheorem{example}{Example}[section]
\theoremstyle{remark}
\newtheorem{remark}{Remark}[section]
\numberwithin{theorem}{section}
\numberwithin{equation}{section}
\crefname{section}{Section}{Sections}
\crefname{subsection}{Section}{Sections}
\crefname{condition}{Condition}{Conditions}
\crefname{hypothesis}{Hypothesis}{Conditions}
\crefname{assumption}{Assumption}{Assumptions}
\crefname{lemma}{Lemma}{Lemmas}
\crefname{fact}{Fact}{Facts}
\Crefname{figure}{Figure}{Figures}
\begin{document}
\title[Robustness to Model Approximation, Learning, and Sample Complexity]
{Robustness to Model Approximation, Model Learning From Data, and Sample Complexity in Wasserstein Regular MDPs}

\author[Yichen Zhou]{Yichen Zhou$^\dag$}
\address{$^\dag$Department of Mathematics and Statistics,
Queen's University, Kingston, ON, Canada}
\email{yichen.zhou@queensu.ca}

\author[Yanglei Song]{Yanglei Song$^{\dag}$}
\email{yanglei.song@queensu.ca}

\author[Serdar Y\"{u}ksel]{Serdar Y\"{u}ksel$^{\dag}$}
\email{yuksel@queensu.ca}

\begin{abstract}
The paper studies the robustness properties of discrete-time stochastic optimal control under Wasserstein model approximation for both discounted-cost and average-cost criteria. Specifically, we study the performance loss when applying an optimal policy designed for an approximate model to the true dynamics compared with the optimal cost for the true model under the sup-norm-induced metric, and relate it to the Wasserstein-1 distance between the approximate and true transition kernels. A primary motivation of this analysis is empirical model learning, as well as empirical noise distribution learning, where Wasserstein convergence holds under mild conditions but stronger convergence criteria, such as total variation, may not. We discuss applications of the results to the disturbance estimation problem, where sample complexity bounds are given, and also to a general empirical model learning approach, obtained under either Markov or i.i.d.~learning settings.
\end{abstract}

\keywords{Markov decision processes, robustness, model estimation, sample complexity}

\maketitle

\section{Introduction}
In this paper, we study the continuity and robustness properties of discrete-time stochastic optimal control under model approximation across various performance criteria. Specifically, we characterize how the accuracy of a model approximation affects the performance loss incurred when applying an optimal policy designed for the approximate model to the true system dynamics. This notion of robustness is of significant practical importance, as learning-based algorithms are seldom implemented with exact model knowledge. A particularly important context is in empirical model learning, as well as empirical noise distribution learning. We discuss such applications in detail and also arrive at sample complexity results.

As will be discussed in the literature review, the term \textit{robustness} has been interpreted in various ways across diverse contexts and methodological frameworks. In this paper, we define robustness in terms of performance degradation---specifically, the loss incurred when a control policy designed for an approximate or incorrect model is applied to the true system, measured relative to the optimal cost achievable with full knowledge of the true model.

\subsection{Preliminaries on Markov Decision Processes}\label{prelim}
Consider a discrete-time controlled Markov process $\{(X_t, U_t): t \geq 0\}$ defined on a probability space $(\Omega, \mathcal{F}, \mathbb{P})$, with Polish state and action spaces $(\bx, d_{\bx})$ and $(\bu, d_{\bu})$, respectively. These spaces are endowed with their corresponding Borel $\sigma$-algebras, denoted by $\mathcal{F}_{\bx}$ and $\mathcal{F}_{\bu}$. Let $(\bx \times \bu, \sigma(\mathcal{F}_{\bx} \times \mathcal{F}_{\bu}))$ be the product measurable space.

\begin{definition}
A map $\mathcal{T}:\mathcal{F}_{\bx}\times\bx\times\bu\to[0,1]$ is a \textit{controlled transition kernel} if a) for every  $A\in\mathcal{F}_{\bx}$, the map $(x,u)\mapsto \mathcal{T}(A|x,u)$ is $\sigma(\mathcal{F}_{\bx}\times\mathcal{F}_{\bu})$-measurable, and b) for every  $(x,u)\in \bx\times \bu$, the map $A\mapsto\mathcal{T}(A|x,u)$ is a probability measure on $(\bx,\mathcal{F}_{\bx})$.
\end{definition}

We assume that the true dynamics of $\{X_t,U_t: t \geq 0\}$ is given by a controlled transition kernel $\ct$, that is, for $(x,u) \in \bx \times \bu$ and $A \in \cF_{\bx}$, $\Pro\left(X_{t+1} \in A \vert X_{t} = x, U_{t} = u\right) = \ct(A|x,u)$ for $t \geq 0$. Further, we define the history (or path) space at time $t$ as $(\mathbb{H}_{t},\mathcal{F}_{\mathbb{H}_{t}}):=(\bx^{t+1}\times \bu^{t},\sigma\left(\mathcal{F}_{\bx}^{t+1}\times\mathcal{F}_{\bu}^{t}\right))$. An admissible policy at time $t$ is a measurable function $\gamma_t: \mathbb{H}_t \to \mathcal{P}(\bu)$, where $\mathcal{P}(\bu)$ denotes the space of probability measures on $(\bu, \mathcal{F}_{\bu})$, endowed with the weak topology. At each time $t \geq 0$, the decision-maker observes the realized history $h_t := \{X_{[0,t]}, U_{[0,t-1]}\} \in \mathbb{H}_t$ and selects an action $U_t \in \bu$ according to the distribution $\gamma_t(h_t)$. The system then incurs a cost $c(X_t, U_t)$ and transitions to a new state $X_{t+1}$ according to $\ct(\cdot \mid X_t, U_t)$. This procedure is repeated over time.

We refer to the quadruple $(\bx,\bu,\ct,c)$ as a discrete-time Markov decision process, abbreviated as MDP, and denote the set of admissible policies by
\begin{equation*}
\Gamma_{A}:=\{\{\gamma_{t}\}_{t=0}^{\infty}:\;\forall\;t\in\mathbb{N},\;\gamma_{t}:\;\mathbb{H}_{t}\to\mathcal{P}(\bu)\;\;\; \text{measurable}\}.
\end{equation*}
Our objective is to minimize the accumulated cost over time by selecting a control policy $\{\gamma_t : t \in \mathbb{N}\}$ from $\Gamma_A$, according to one of the performance criteria specified below.

\begin{itemize}
    \item[(a)] \textbf{Discounted-cost Criterion.}  Given a discount factor $\beta \in (0,1)$, for $x \in \bx$, define
\begin{align*}
&\text{the value function under policy $\gamma$:} &&J_{\beta}(c,\ctt, \gamma)(x):=\mathbb{E}^{\ctt,\gamma}\left[\left.\sum_{t=0}^{\infty}\beta^{t}c(X_{t}, U_{t})\right|X_{0}=x\right],\\
&\text{and the optimal value function}: &&J_{\beta}^{*}(c,\ctt)(x):=\inf_{\gamma\in\Gamma_{A}}J_{\beta}(c,\ctt, \gamma)(x).
\end{align*}

\item[(b)] \textbf{Average-cost criterion.}  For $x \in \bx$, define
    \begin{align*}
    &\text{the value function under policy $\gamma$} && J_{\infty}(c, \ct, \gamma)(x) := \limsup_{T \to \infty} \frac{1}{T} \mathbb{E}^{\ct, \gamma} \left[ \left. \sum_{t=0}^{T-1} c(X_t, U_t) \,\right| X_0 = x \right], \\
    &\text{and the optimal value function:} && J_{\infty}^*(c, \ct)(x) := \inf_{\gamma \in \Gamma_A} J_{\infty}(c, \ct, \gamma)(x).
    \end{align*}
\end{itemize}

Here, $\mathbb{E}^{\ct,\gamma}$ denotes the expectation under the controlled transition kernel $\ct$ and the policy $\gamma$. 

\subsection{Problem Statement, Organization, and Contributions}
Consider a reference MDP $(\bx, \bu, \ct, c)$, where a decision-maker seeks to perform optimal control. In practice, it is rarely the case that the decision-maker has complete knowledge of the MDP---particularly the cost function $c$ and the transition kernel $\ct$. It is therefore natural to consider a learning-based scheme in which the decision-maker first estimates an approximate MDP $(\bx, \bu, \ctp, \altc)$ using samples generated from the reference MDP, derives an optimal policy $\gamma_{\altc,\ctp}^*$ for the approximate model, and then applies this policy to the true MDP.

This gives rise to a fundamental question: how much performance is lost by applying an optimal policy derived from an approximate model to the true system? We refer to this performance degradation due to model mismatch as the \textit{robustness error}. Specifically, under the discounted and average-cost criteria, respectively, the robustness error due to model mismatch is given by
\begin{equation}\label{eq:approximationproblem}
\left\Vert J_{\beta}(c,\ct,\gamma_{\altc,\ctp}^{*})-J_{\beta}^{*}(c,\ct)\right\Vert_{\infty} \quad \text{ and } \quad
\left\Vert J_{\infty}(c,\ct,\gamma_{\altc,\ctp}^{*})-J^{*}_{\infty}(c,\ct)\right\Vert_{\infty},
\end{equation}
which we want to upper bound with a function of some discrepancy metric between $(c,\ct)$ and $(\altc,\ctp)$.

To study the robustness problem in \eqref{eq:approximationproblem}, as in \cite{kara2020robustness}, we first analyze the continuity of optimal value functions with respect to the model. Specifically, since $\gamma_{\altc,\ctp}^{*}$ is optimal for the MDP $(\ctp,\altc)$, we have the decomposition
\begin{align*}
    \left\Vert J_{\beta}(c,\ct,\gamma_{\altc,\ctp}^{*})-J_{\beta}^{*}(c,\ct)\right\Vert_{\infty}
    \leq \left\Vert J_{\beta}(c,\ct,\gamma_{\altc,\ctp}^{*})-J_{\beta}(\hat{c},\ctp,\gamma_{\altc,\ctp}^{*})\right\Vert_{\infty} 
     + \left\Vert J_{\beta}^{*}(\hat{c},\ctp)-J_{\beta}^{*}(c,\ct)\right\Vert_{\infty}.
\end{align*}
The first term measures the difference between the value functions when the same policy $\gamma_{\altc,\ctp}^{*}$ is applied to two MDPs. The second term,
$\left\Vert J_{\beta}^{*}(\hat{c},\ctp)-J_{\beta}^{*}(c,\ct)\right\Vert_{\infty}$, measures the difference between the \emph{optimal} value functions under the two MDPs. Bounding this quantity is therefore known as studying the continuity of the \emph{optimal} value function. It is important in its own right and also plays a key role in bounding \eqref{eq:approximationproblem}.

The above results are critical to related problems in machine learning and control, particularly model-based offline reinforcement learning and the certainty equivalence principle, where the goal is to determine the sample size needed to learn a sufficiently accurate model for optimal control under a performance criterion. Specifically, suppose that we are given transition data $\{(X_t,U_t,X_{t+1})\}_{t=1}^{N}$ of the true MDP
$(\bx,\bu,\ct,c)$, either generated independently via a transition kernel simulator or collected along a sample path.
Using these data, we construct an estimated model
$(\bx,\bu,\ct_N,c_N)$.
The expected robustness error under the statistical settings is defined as
\begin{equation}\label{eq:learningproblem}
\mathbb{E}\!\left[\left\Vert
J_{\beta}(c,\ct,\gamma^{*}_{c_N,\ct_N})
-
J_{\beta}^{*}(c,\ct)
\right\Vert_{\infty}\right]
\quad\text{and}\quad
\mathbb{E}\!\left[\left\Vert
J_{\infty}(c,\ct,\gamma^{*}_{c_N,\ct_N})
-
J_{\infty}^{*}(c,\ct)
\right\Vert_{\infty}\right],
\end{equation}   
where the expectations are taken with respect to the law of the data $\{(X_t,U_t,X_{t+1})\}_{t=1}^{N}$. We aim to bound \eqref{eq:learningproblem} with a decreasing function of $N$. Because the state and action spaces are general, we learn a finite
approximation of the MDP and evaluate it through the performance of the induced optimal policy. This creates an approximation–estimation trade-off, which we analyze rigorously.

\mypara{Organization of the paper.} Section \ref{sec:approximation} derives upper bounds on the quantity
in \eqref{eq:approximationproblem}. Sections \ref{secSampleCo}
and \ref{secDistApprx} translate this bound into statistical guarantees
for the quantity in \eqref{eq:learningproblem}. In particular, we study
(a) offline learning of continuous MDPs via state-space quantization and
(b) disturbance estimation, yielding \textit{parametric} sample-complexity
rates.

\mypara{Contributions.} We summarize our technical contributions as follows:
\begin{itemize}
\item[(i)] In Section \ref{discountedcontinuitysection}, we derive conditions under which
the optimal \textit{discounted-cost} value function is Lipschitz continuous
with respect to the model, namely the cost function and the transition kernel
(Theorem \ref{upperbound1proof}).
Section \ref{averagecontinuitysection} establishes analogous results for the
optimal \textit{average-cost} value function
(Theorems \ref{upperboundacoe1} and \ref{vanishingproof1}).
For the average-cost setting, we provide two approaches: one based on a
minorization condition and another based on the vanishing-discount method.

\item[(ii)] In Section \ref{discountrobustnesssection}, we derive conditions under which
the \textit{robustness error} of the optimal \textit{discounted-cost} value
function is Lipschitz continuous with respect to the model
(Theorem \ref{thrm:discounted_robust}).
Section \ref{averagerobustnesssection} establishes the corresponding result
for the \textit{average-cost} criterion
(Theorems \ref{upperbound3acoe} and \ref{vanishingproof2}), 
using the two approaches introduced earlier.

\item[(iii)] Section \ref{secSampleCo} develops robustness guarantees for empirical model
learning and provides explicit convergence rates leading to
\textit{parametric}  sample-complexity bounds. We propose data-driven learning
algorithms and analyze two data-generation scenarios:
(a) data collected along a controlled sample path
(Theorems \ref{cor:single_tranjectory} and \ref{cor:quantize}), and
(b) data generated by a simulator of the Markov kernel
(Theorem \ref{thm:independent_transitions_both}).

\item[(iv)] Section \ref{secDistApprx} extends the analysis to robustness measured with
respect to the distribution of the driving noise
(Theorems \ref{cor:drivingnoisemainresult1}
and \ref{drivingnoisemainresult}).
We derive convergence rates in the sample size when the driving noise
distribution is estimated empirically
(Theorem \ref{thrm:empirical_driving_noise_rates}),
and obtain improved, \textit{parametric} rates under additional uniform
regularity conditions \revise{under the discounted-cost criterion} (Theorem \ref{cor:mu_n_result}).
We further consider the case in which both the model and the driving noise
are learned from data (Theorem \ref{cor:model_noise_both_learned} and the
two subsequent examples).

\end{itemize}

\subsection{Literature Review}\label{previousresults}
Stochastic control and reinforcement learning under model misspecification are fundamental to their applications and have been extensively studied. %

We start by clarifying the difference between our definition of robustness to that of robust control literature. Classical robust control methodology such as $H_\infty$ control typically considers a deterministic family of models and focuses on designing controllers that work sufficiently well for all of them (see e.g. \cite{basbern,zhou1996robust}). Modern statistical extensions of this idea, such as coarse-ID control \cite{Dean18} and distributionally robust stochastic control \cite{MohajerinEsfahani2017,insoonyangdrc,joseblanchet2024statisticallearningdistributionallyrobust}, construct the family of uncertain models via confidence sets arising from parameter estimation. Such a robustness formulation is often used to account for guaranteed stabilization \cite{Dean18} and non-Markovian dynamics \cite{joseblanchet2024statisticallearningdistributionallyrobust}. It also has a natural connection to the minimax risk \cite{joseblanchet2024statisticallearningdistributionallyrobust,bauerle2022distributionally}. In recent studies of optimal control and reinforcement learning, another line of studies focuses on the robustness to a single model misspecification \cite{GheshlaghiAzar2013,Mania2019CertaintyEI,simchowitz20a}, in part due to the simpler controller design that arises from using a single nominal model. In the special case of linear quadratic regulator, \cite{Mania2019CertaintyEI} showed that when the approximate model is sufficiently close to the reference model, solving optimal control with respect to the single approximate model can also guarantee stabilization while achieving a second-order decay in the performance gap. In this paper, we focus on the latter setting.

\mypara{Robustness to model approximation.} The robustness formulation in this paper is therefore to consider a single approximate model, and establish upper bounds for the errors induced by applying a policy designed for the approximate model. 
Our setting has been considered in  \cite{Lan81,kara2020robustness, kara2022robustness,muller1997does,hernandez1987approximation,GordienkoSystemControlLetters, GordienkoODE, GordienkoKybernetic,bozkurt2024modelapproximationmdpsunbounded} (discrete-time) and in \cite{pradhan2022robustness} (continuous-time). To our knowledge, one of the earliest studies in this setting is \cite{Lan81}, where the author considered fully observed discrete-time controlled models,   established continuity of the optimal value function with respect to models, and gave a set convergence result for sets of optimal control actions. Related results are presented in \cite{muller1997does,hernandez1987approximation}. A closely related sequence of works is \cite{kara2020robustness, kara2022robustness}, where the authors study robustness to incorrect transition kernels, focusing primarily on asymptotic convergence under weak convergence and setwise convergence, as well as on strong uniformity results with respect to total variation convergence. Specifically, a robustness result is obtained, in an asymptotic convergence sense, for the discounted-cost criterion in \cite{kara2020robustness} and for the average-cost criterion in \cite{kara2022robustness}, respectively, under {\it continuous} weak convergence of transition kernels; uniformity results under total variation convergence are also established. A unified perspective is given in \cite{KaraYuksel2021Chapter}, which considers quantized approximations as a special case. 

Another closely related sequence of studies is \cite{GordienkoSystemControlLetters, GordienkoKybernetic, GordienkoODE}, which, to our knowledge, are among the first to consider quantitative performance bounds induced by model mismatch. In \cite{GordienkoSystemControlLetters}, the authors focused on the discounted-cost criterion, obtaining a weighted norm version of the first inequality in our Theorem \ref{upperbound3proof}. In \cite{GordienkoODE}, the authors focused on the average-cost criterion, obtaining results similar to Theorem \ref{upperbound3acoe} by imposing contraction assumptions on Bellman operators of the reference and approximate models. In \cite{GordienkoKybernetic}, the authors worked with the ``relative stability index'', i.e., the robustness error divided by the optimal value function under the true model, and obtained an upper bound under the discounted criterion that is independent of the discount factor under an assumption that is similar to that of our Theorem \ref{upperbound3acoe}. We remark that a recent work \cite{bozkurt2024modelapproximationmdpsunbounded} independently recovers and extends the results in \cite{GordienkoSystemControlLetters}. We acknowledge that Theorem \ref{upperbound1proof} and \ref{thrm:discounted_robust} already appeared in \cite{bozkurt2024modelapproximationmdpsunbounded} as Corollary 1, and our Theorem \ref{upperbound3proof} is a refinement of Theorem 5 of the same paper.
In the first half of this paper, our focus is to provide a unified proof strategy to obtain results mentioned above under both discounted and average-cost criterion, and discuss the set of assumptions behind them, what is shared and what is different. 

\mypara{Model learning from data.} In the second half of the paper, we connect our robustness error bounds to the statistical rates at which the performance of a policy, obtained under a model estimated via sampling, approaches that of the optimal policy. We will refer to this general setting as {\it model learning from data}. 

\mysubpara{Model-based offline learning.} If the model estimation is done via sampling the transition kernel and cost function, the setting will fall under model-based offline learning. For finite MDPs, the statistical rates of error bounds have been well studied for both model estimation with i.i.d.~samples \cite{GheshlaghiAzar2013,pmlr-v125-agarwal20b} and model estimation from one trajectory \cite{10.1609/aaai.v37i7.25989}. Results for continuous MDPs often impose structural assumptions on the transition kernel, such as the kernel can be parameterized  \cite{Dufour2015-qu, NEURIPS2021_c21f4ce7}. In Section \ref{secSampleCo}, we build on the rich literature concerning the fundamental performance loss of static discretization in MDPs \cite{VanRoy2006, yu2012discretizedapproximationspomdpaverage, Saldi2015, Saldi20152, SaLiYuSpringer} to establish statistical rates for model-based learning under MDPs with Wasserstein-1 Lipschitz transition kernels, under both i.i.d.~sampling (Theorem \ref{thm:independent_transitions_both}) and one trajectory of samples (Theorem \ref{cor:single_tranjectory} and \ref{cor:quantize}). To our knowledge, these results are new to both RL and control literature. We remark that our setting has been considered in recent works of regret minimization with adaptive discretization \cite{Sinclair2023,kar2025policyzoomingadaptivediscretizationbased, maran2024noregretreinforcementlearningsmooth}. However, their focus is on adaptively finding the best discretization scheme, while our work focuses on the statistical performance of a fixed state partition. We also remark a recent line of works \cite{kara2022nearoptimalityfinitememory, KSYContQLearning,karayukselNonMarkovian} connecting static discretization to learning on partially observable MDPs. 

\mysubpara{Disturbance estimation.} Another framework of model estimation is by approximating the distribution of the driving ``disturbance process'' \cite{GordienkoMinorant, Gordienko2007,Gordienko2008, Gordienko2022}: consider a stochastic dynamical system $X_{t+1}=f(X_{t}, U_{t},W_{t})$, where $\{W_{t}\}_{t=0}^{\infty}$ is an i.i.d.~process with distribution $\mu$. If an alternative distribution $\nu$ is proposed and an approximate optimal policy is computed under $\nu$, the objective is to characterize the relationship between the error induced by disturbance approximation and a certain distance between $\mu$ and $\nu$. In \cite{Gordienko2022}, a robustness result is given for the discounted-cost criterion with the bounded-Lipschitz distance between $\mu$ and $\nu$. In \cite{GordienkoMinorant},  similar results are obtained for the average-cost criterion, using either the total variation distance or bounded-Lipschitz distance. 

Results on robustness to weak convergence in \cite{kara2020robustness,kara2022robustness} and to Wasserstein convergence in \cite{Gordienko2007, Gordienko2008,Dufour2015-qu} imply empirical consistency with i.i.d.~learning of models, as noted in these studies. In \cite{Gordienko2007, Gordienko2008}, the authors consider a learning scenario  where $\nu=\frac{1}{n+1}\sum_{i=0}^{n}\delta_{w_{i}}$ is the empirical measure of $\mu$ under samples $\{w_{t}\}_{t=0}^{n}$. The chosen distance between measures is Wasserstein-1 distance, with the discounted-cost criterion considered in \cite{Gordienko2007}, and the average-cost criterion in \cite{Gordienko2008}. In Section \ref{secDistApprx}, we show that the disturbance distribution approximation problem can be viewed as a special case of model approximation, and through Theorem \ref{thrm:discounted_robust} and \ref{vanishingproof2}, we obtain results analogous to those in \cite{Gordienko2007, Gordienko2008}, with relaxed assumptions and improved bounds, as presented in Theorem \ref{thrm:empirical_driving_noise_rates} and \ref{cor:mu_n_result}. We remark a further related result on empirical consistency from a different perspective is presented in \cite{hanson2021learning}.

\subsection{Regularity of MDPs}
In this subsection, we introduce definitions and assumptions related to regularity of MDPs used throughout the paper. A central goal of this paper is to establish robustness of optimal control against perturbations of the MDP. This would require the MDP to possess certain regularities. Indeed, as we will thoroughly review in Section \ref{subsec:OE}, even classical existence of optimal control would require the following set of regularity conditions.

\begin{assumption}[Basic regularity]\label{basicassump}
Unless otherwise noted, we assume that any MDP $(\bx,\bu,\ct,c)$ in this paper satisfies the following:\\
\noindent \textbf{(a).} $(\bx, d_{\bx})$ is Polish, and $(\bu,d_{\bu})$ is compact.

\noindent \textbf{(b).} $c:\bx\times\bu\to\bR$ is nonnegative, bounded (i.e., $\cinf<\infty$), and  continuous on both $\bx$ and $\bu$.

\noindent \textbf{(c).} For any $v\in C_{b}(\bx)$, $\int_{\bx}v(y)\ct(dy|x,u)$ is a continuous function of $\bx\times \bu$, i.e., $\ct$ is weakly continuous on $\bx\times\bu$.
\end{assumption}

We now discuss additional regularities. For a test function $f:\bx \to \bR$, we define the uniform discrepancy between
two controlled transition kernels $\ct$ and $\cs$ by
\begin{equation*}
d_{f}(\ct,\cs)
:=\sup_{x \in \bx,\,u \in \bu }
\left|
\int f(y)\,\ct(dy|x,u)-\int f(y)\,\cs(dy|x,u)
\right|.
\end{equation*}
We can take a further supremum over all $1-$Lipschitz functions to get the uniform Wasserstein-1 discrepancy between kernels:
\begin{equation*}
d_{\wc_{1}}(\ct,\cs):=\sup_{\flip\leq 1}d_{f}(\ct,\cs),
\end{equation*}
where $\flip$ denotes the Lipschitz seminorm of a function $f:\bx\to\bR$. It is indeed a uniform Wasserstein-1 discrepancy due to an exchange of supremum:
$$
d_{\wc_{1}}(\ct,\cs)=\sup_{\flip\leq 1}
\sup_{x \in \bx,\,u \in \bu }\left|
\int f(y)\,\left(\ct(dy|x,u)-\cs(dy|x,u)\right)
\right|=\sup_{x \in \bx,\,u \in \bu }\wc_{1}\left(\ct(\cdot|x,u),\ctp(\cdot|x,u)\right),
$$
where at the last step we used the dual characterization of Wasserstein-1 distance between probability measures (see e.g. \cite[Chapter 6]{Villani2009}).

We introduce two discrepancy measures, as they serve complementary purposes. The bound on the robustness error expressed in terms of $d_f$ is tighter and more suitable for statistical analysis. 
In contrast, $d_{\wc_1}$ is more interpretable and is often sufficient in practice. Such comparison will be clearer in Section \ref{secSampleCo} and \ref{secDistApprx}. The following lemma, which follows directly from the definitions, compares $d_{\wc_1}$ and $d_f$.

\begin{lemma}\label{trivialinequality} 
Let $f:\bx\to\bR$ be $\flip$-Lipschitz. For  controlled
transition kernels  $\ct$ and $\cs$,
\begin{equation*}
d_{f}(\ct,\cs)\leq \flip d_{\wc_{1}}(\ct,\cs).
\end{equation*}
\end{lemma}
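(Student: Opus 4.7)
The plan is to chase definitions, using the Kantorovich--Rubinstein duality stated in the remark preceding the lemma. The core idea is to rescale $f$ so it has unit Lipschitz norm and then exploit the definition of $d_{\wc_1}$ as a supremum taken over exactly such test functions.

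I would start by disposing of the degenerate case $\flip = 0$, where $f$ is constant and both sides of the claimed inequality vanish. Otherwise, set $L := \flip$ and $g := f/L$, so that $\|g\|_{\text{Lip}} \leq 1$. By linearity of the integral, $d_f(\mu,\nu) = L \cdot d_g(\mu,\nu)$ for any two probability measures $\mu,\nu \in \mathcal{P}(\bx)$; applied to $\mu = \ct(\cdot|x,u)$ and $\nu = \cs(\cdot|x,u)$ at an arbitrary $(x,u)\in\bx\times\bu$, this reduces the lemma to a pointwise statement about $d_g(\ct(\cdot|x,u),\cs(\cdot|x,u))$.

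Next, because $g$ has Lipschitz norm at most one, the Kantorovich--Rubinstein formula bounds $d_g(\ct(\cdot|x,u),\cs(\cdot|x,u))$ by $\wc_1(\ct(\cdot|x,u),\cs(\cdot|x,u))$. Writing out the latter once more as a supremum over unit-Lipschitz test functions $h$, and then using that each pointwise discrepancy $d_h(\ct(\cdot|x,u),\cs(\cdot|x,u))$ is no larger than its uniform-in-$(x,u)$ counterpart $d_h(\ct,\cs)$, I would obtain $\wc_1(\ct(\cdot|x,u),\cs(\cdot|x,u)) \leq d_{\wc_1}(\ct,\cs)$. Taking the supremum over $(x,u)$ of the resulting pointwise estimate then delivers $d_f(\ct,\cs) \leq L \cdot d_{\wc_1}(\ct,\cs)$, which is the claim.

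I do not anticipate any real obstacle: the argument is a handful of lines of definition chasing layered on top of the Kantorovich--Rubinstein duality that is already available. The only point worth watching is the order of the two suprema --- the supremum over unit-Lipschitz test functions must sit \emph{outside} the uniform supremum over $(x,u)$, exactly as encoded in the definition of $d_{\wc_1}(\ct,\cs)$ --- but this is handled automatically by the chain of inequalities above.
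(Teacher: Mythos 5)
Your argument is correct and is exactly the definition-chasing the paper has in mind (the paper states the lemma "follows directly from the definition" and gives no written proof): rescale $f$ to the unit-Lipschitz $g=f/\flip$, note $d_f = \flip\, d_g$, and observe that $g$ is one of the test functions in the supremum defining $d_{\wc_1}(\ct,\cs)$. Your detour through the pointwise $\wc_1(\ct(\cdot|x,u),\cs(\cdot|x,u))$ is harmless since the two suprema commute, so $\sup_{x,u}\wc_1(\ct(\cdot|x,u),\cs(\cdot|x,u)) = d_{\wc_1}(\ct,\cs)$; one could equally bound $d_g(\ct,\cs)\le d_{\wc_1}(\ct,\cs)$ in a single step.
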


Next, we state the central assumption of the paper. 
Let $\mathcal{P}(\bx)$ denote the set of probability measures on $\bx$. 
We adopt the equivalent formulation of a kernel as a measurable mapping from $\bx \times \bu$ to $\mathcal{P}(\bx)$.

\begin{assumption}[Wasserstein Regular MDPs]\label{myassump} Consider an MDP $(\bx,\bu,\ct,c)$ and a discount factor $\beta<1$. In addition to Assumption \ref{basicassump}, assume the following holds:

\noindent \textbf{(a).} For any $u\in\mathbb{U}$, $c:\bx\times\bu\to\bR$ is $\clip$-Lipschitz continuous as a function of $x$.

\noindent \textbf{(b).} For any $u\in\bu$, $\ct:\;\bx\times\bu\to\mathcal{P}(\bx)$ is $\tlip$-Lipschitz continuous as a function of $x$, with respect to Wasserstein-1 distance, i.e.
\begin{equation*}
\wc_{1}\left(\ct(\cdot|x,u),\ct(\cdot|y,u)\right)\leq \tlip d_{\bx}(x,y),\;\forall\;u\in\bu, x,y \in \bx.
\end{equation*}

 \noindent \textbf{(c).} $\beta\tlip<1$.
\end{assumption}

\begin{remark}
Such Lipschitz-type assumptions are commonly used in continuous-state MDPs and reinforcement learning. They are used to establish continuity and stability
of value functions and policies
\cite{SaLiYuSpringer,bozkurt2024modelapproximationmdpsunbounded}
and to obtain regret guarantees in weakly continuous MDPs
\cite{Sinclair2023,kar2025policyzoomingadaptivediscretizationbased,maran2024noregretreinforcementlearningsmooth}.
Early implications were studied in \cite{hinderer2005} and refined in
\cite{SaLiYuSpringer,pmlr-v80-asadi18a}. They also appear in belief-state
reductions of POMDPs under additional Dobrushin-type conditions
\cite{demirci2023average}, and in applications such as policy gradient
analysis and continuity of $Q$-functions \cite{Pirotta2015,Rachelson2010OnTL}.
\end{remark}

\section{Robustness to Model Approximation}\label{sec:approximation}

This section develops the main analytical tools used to relate model approximation to control performance. In Subsection \ref{subsec:OE} we recall the optimality equations and the structural properties of optimal policies. Subsection \ref{subsec:continuity} establishes continuity of the optimal value functions with respect to the model, which is then used in Subsection \ref{subsec:robust_performance_lass} to derive robustness bounds for policies computed from approximate MDPs. Finally, Subsection \ref{Lipschitzmdp} presents sufficient conditions ensuring Lipschitz regularity of the optimal (relative) value functions.

\subsection{Optimality Equations} \label{subsec:OE}
In this subsection, we introduce assumptions under which stochastic optimal control problems admit fixed-point characterizations and optimal policies can be restricted to deterministic stationary controls. 
 We denote by $C_b(\bx)$ the set of bounded continuous measurable functions from $(\bx,\mathcal{F}_{\bx})$ to $(\bR,\mathcal{B}(\bR))$.  For a function $c:\bx\times\bu\to\bR$, define the supremum norm
$\|c\|_{\infty}=\sup_{x\in\bx,\,u\in\bu}|c(x,u)|$.
This notation extends analogously to functions defined on other domains.

As noted above, Assumption~\ref{basicassump} ensures a unique fixed-point characterization of the optimal value function through the discounted cost optimality equation (DCOE), as stated in the following theorem.

\begin{theorem}[DCOE, see e.g. {\cite[Theorem 16.2]{Schal}}] \label{dcoetheorem} 
Consider an MDP $(\bx,\bu,\ct,c)$ and a discount factor $\beta\in(0,1)$. Suppose Assumption \ref{basicassump} holds.

 \noindent \textbf{(a).} The optimal discounted-cost function $J_{\beta}^{*}(c,\ct)\in C_{b}(\bx)$ is the unique solution to the following fixed point equation in $v\in C_{b}(\bx)$, which we refer to as DCOE:
\begin{equation*}
v(x)= \mathbb{T}_{\beta}(v)(x),\;\forall\;x\in\bx,\;\; \text{ where } \mathbb{T}_{\beta}(v)(x) := \inf_{u\in\bu}\left\{c(x,u)+\beta \int_{\bx}v(y)\ct(dy|x,u) \right\}.
\end{equation*}

\noindent \textbf{(b).} There exists a deterministic stationary policy $\gamma^{*}:\;\bx\to\bu$ such that
\begin{equation*}
v(x)=c(x,\gamma^{*}(x))+\beta \int_{\bx}v(y)\ct(dy|x,\gamma^{*}(x)) ,\;\;\forall\;x\in\bx,
\end{equation*}
and $J_{\beta}(c,\ct,\gamma^{*})(x)=J_{\beta}^{*}(c,\ct)(x),\;\;\forall\;x\in\bx$.
\end{theorem}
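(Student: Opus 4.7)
The plan is to carry out the standard Banach contraction argument for the Bellman operator, adapted to the weakly continuous setting of Assumption~\ref{basicassump}. Define $T: C_b(\bx) \to C_b(\bx)$ by
\begin{equation*}
(Tv)(x) := \inf_{u \in \bu}\Bigl\{c(x,u) + \beta \int_{\bx} v(y)\,\ct(dy|x,u)\Bigr\}.
\end{equation*}
The first task is to verify that $T$ actually maps $C_b(\bx)$ into itself. For any $v \in C_b(\bx)$, Assumption~\ref{basicassump}(c) guarantees that $(x,u) \mapsto \int v\,d\ct(\cdot|x,u)$ is jointly continuous, and Assumption~\ref{basicassump}(b) gives joint continuity of $c$, so the integrand inside the infimum is continuous on $\bx \times \bu$. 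Compactness of $\bu$ (Assumption~\ref{basicassump}(a)) together with Berge's maximum theorem then implies that the infimum is attained and that $Tv$ is continuous in $x$; the bound $\|Tv\|_\infty \leq \|c\|_\infty + \beta\|v\|_\infty$ is immediate.

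Second, I would show that $T$ is a $\beta$-contraction on $(C_b(\bx), \|\cdot\|_\infty)$. This follows from the elementary inequality $|\inf_u f_1(u) - \inf_u f_2(u)| \leq \sup_u |f_1(u) - f_2(u)|$ applied pointwise, together with the factor of $\beta$ coming from the transition term. Banach's fixed point theorem then delivers a unique $v^* \in C_b(\bx)$ with $Tv^* = v^*$. To identify $v^*$ with $J_\beta^*(c,\ct)$, I would iterate $T$ from the zero function: the iterate $T^n 0$ coincides with the $n$-stage optimal expected cost truncated at time $n$, and the tail $\mathbb{E}^{\ct,\gamma}\bigl[\sum_{t \geq n} \beta^t c(X_t,U_t)\bigr]$ is uniformly (in $\gamma$) bounded by $\beta^n \|c\|_\infty/(1-\beta)$, so $T^n 0 \to J_\beta^*(c,\ct)$ pointwise; combined with the contraction convergence $T^n 0 \to v^*$, this gives $v^* = J_\beta^*(c,\ct)$.

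For part (b), I would appeal to a measurable selection theorem applied to the correspondence $x \mapsto \arg\min_{u \in \bu}\bigl\{c(x,u) + \beta \int v^*(y)\,\ct(dy|x,u)\bigr\}$. By the continuity and compactness properties just noted, this correspondence is nonempty, compact-valued, and upper hemicontinuous in $x$, so a measurable selector $\gamma^*:\bx \to \bu$ exists (e.g., by the Kuratowski--Ryll-Nardzewski or Brown--Purves selection theorem). The equality $J_\beta(c,\ct,\gamma^*) = v^*$ then follows from iterating the (linear) policy-evaluation operator $T_{\gamma^*}$, which has $v^*$ as its fixed point by construction and is itself a $\beta$-contraction on $C_b(\bx)$, so its unique fixed point must equal $J_\beta(c,\ct,\gamma^*)$.

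The main obstacle I anticipate is the measurable selection step, since the argmin correspondence must be shown to be sufficiently regular to invoke the selection theorem without extra continuity of $\gamma^*$. Under the weak continuity hypothesis in Assumption~\ref{basicassump}(c) this works because the map $(x,u) \mapsto c(x,u) + \beta\int v^*\,d\ct(\cdot|x,u)$ inherits joint continuity from $v^* \in C_b(\bx)$; the careful verification of upper hemicontinuity of the argmin set, rather than the contraction step itself, is where the argument requires the most attention.
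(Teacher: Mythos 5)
Your proposal is correct and follows essentially the same route the paper indicates: it cites this as a known result and sketches exactly this argument (the Bellman operator $\mathbb{T}$ is a $\beta$-contraction on $(C_b(\bx),\|\cdot\|_\infty)$, Banach's fixed point theorem gives uniqueness, and measurable selection under Assumption~\ref{basicassump} yields the optimal stationary policy, with $J_\beta(c,\ct,\gamma^*)$ identified via the contraction property of the policy-evaluation operator). Your added care about Berge's theorem, the value-iteration identification $T^n 0 \to J_\beta^*$, and the measurable-selection step are precisely the details delegated to the cited references, so there is nothing to flag.
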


Part (a) follows from the contraction property of the Bellman operator
$\mathbb{T}_{\beta}:C_b(\bx)\to C_b(\bx)$ on
$\left(C_b(\bx),\|\cdot\|_\infty\right)$. Specifically, for any
$f_1,f_2\in C_b(\bx)$,
\begin{equation*}
\|\mathbb{T}_{\beta}f_1-\mathbb{T}_{\beta}f_2\|_\infty
\le \beta\,\|f_1-f_2\|_\infty.
\end{equation*}
Hence, by the Banach fixed-point theorem, the DCOE admits a unique solution
(see also~\cite[p.~19]{HernndezLerma1989}).
As discussed in Section \ref{Lipschitzmdp}, this uniqueness is crucial for
establishing Lipschitz continuity of the value function under suitable
assumptions.

\begin{remark}[Discounted-cost Bellman consistency equation]
Given a deterministic stationary policy $\gamma$, consider
\begin{equation*}
v(x)=\mathbb{T}_{\beta}^{\gamma}(v)(x), \quad \forall x\in\bx,
\quad \text{where }\mathbb{T}_{\beta}^{\gamma}(v)(x)
:= c(x,\gamma(x))+\beta \int_{\bx}v(y)\ct(dy\mid x,\gamma(x)).
\end{equation*}
The operator $\mathbb{T}_{\beta}^{\gamma}$ is a contraction and therefore
admits a unique solution in the space of bounded measurable functions
(since $\gamma$ need not be continuous). The solution is
$J_{\beta}(c,\ct,\gamma)$. We refer to this equation as the
\emph{discounted-cost Bellman consistency equation} for $\gamma$.
\end{remark}

For the average-cost criterion, we introduce two additional sets of
assumptions, each sufficient to ensure the validity of the Average Cost Optimality Equation (ACOE).

\begin{definition}[Minorization Condition for  Kernels]\label{minordef} 
The controlled transition kernel $\ct:\;\mathcal{F}_{\bx}\times\bx\times\bu\to[0,1]$ is said to satisfy the minorization condition with a probability measure $\rho$ and a constant $\epsilon > 0$ if for all $x\in\bx,\;u\in\bu$, and $A\in\mathcal{F}_{\bx}$, we have
\begin{align}\label{minorizationCond}
\ct(A|x,u)\geq \epsilon \rho(A).
\end{align}
\end{definition}

\begin{theorem}[ACOE]\label{minoracoe} 
 Consider an MDP $(\bx,\bu,\ct, c)$ such that $\ct$ satisfies the minorization condition (\ref{minorizationCond}) with some probability measure $\rho$ and constant $\epsilon > 0$, and suppose Assumption \ref{basicassump} holds. Then there exists   $g \in \mathbb{R}$ and $h\in C_{b}(\bx)$ such that the following fixed point equation, referred to as ACOE, holds:
\begin{equation*}
g+h(x)=\inf_{u\in\bu} \left\{c(x,u)+\int_{\bx}h(y)\ct(dy|x,u)\right\},\;\forall\;x\in\bx,
\end{equation*}
where $g=\inf_{\gamma\in\Gamma_{A}}J_{\infty}(c,\ct,\gamma)(x)$ for all $x \in \bx$. Furthermore, there exists a deterministic stationary policy $\gamma^{*}:\;\bx\to\bu$ such that
\begin{equation*}
g+h(x)=c(x,\gamma^{*}(x))+\int_{\bx}h(y)\ct(dy|x,\gamma^{*}(x)),\;\forall\;x\in\bx,
\end{equation*}
and $J_{\infty}(c,\ct,\gamma^{*})(x)=g,\;\forall\;x\in\bx$.
\end{theorem}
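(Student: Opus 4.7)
The plan is to obtain the ACOE via a direct fixed-point argument that exploits the span-seminorm contraction implied by the minorization condition. Define the un-discounted Bellman operator
$$\tilde{\mathbb{T}}h(x) := \inf_{u \in \bu}\left\{c(x,u) + \int_{\bx} h(y)\,\ct(dy|x,u)\right\}, \quad x \in \bx,$$
and the span seminorm $\mathrm{span}(f) := \sup_{x \in \bx} f(x) - \inf_{x \in \bx} f(x)$. Assumption~\ref{basicassump} together with compactness of $\bu$ ensures that $\tilde{\mathbb{T}}$ maps $C_{b}(\bx)$ into itself (via a Berge-type continuity argument on the inner minimization), and the identity $\tilde{\mathbb{T}}(h+k) = \tilde{\mathbb{T}}h + k$ for constants $k$ means $\tilde{\mathbb{T}}$ descends to the quotient $C_{b}(\bx)/\mathbb{R}$, which is complete under the (now-norm) span seminorm.

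The first substantive step is to show that $\tilde{\mathbb{T}}$ contracts on this quotient with modulus $1-\epsilon$. Decomposing $\ct(\cdot|x,u) = \epsilon\rho(\cdot) + (1-\epsilon)Q(\cdot|x,u)$ for a stochastic kernel $Q$, the $\epsilon\rho$-component contributes the $x$-independent quantity $\epsilon\int(h_{1}-h_{2})\,d\rho$ to $\tilde{\mathbb{T}}h_{1} - \tilde{\mathbb{T}}h_{2}$ and so drops out of the span. Picking near-minimizers $u^{*}_{i}(x)$ for $h_{i}$ and exchanging infima in the standard way, one obtains, for any $x,x' \in \bx$,
$$(\tilde{\mathbb{T}}h_{1} - \tilde{\mathbb{T}}h_{2})(x) - (\tilde{\mathbb{T}}h_{1} - \tilde{\mathbb{T}}h_{2})(x') \leq (1-\epsilon) \int_{\bx} (h_{1}-h_{2})\,d\!\left(Q(\cdot|x,u^{*}_{2}(x)) - Q(\cdot|x',u^{*}_{1}(x'))\right),$$
which is at most $(1-\epsilon)\,\mathrm{span}(h_{1}-h_{2})$ since $\int f\,d(\mu_{1}-\mu_{2}) \leq \mathrm{span}(f)$ for any probability measures $\mu_{1},\mu_{2}$. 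Swapping $x \leftrightarrow x'$ yields $\mathrm{span}(\tilde{\mathbb{T}}h_{1} - \tilde{\mathbb{T}}h_{2}) \leq (1-\epsilon)\,\mathrm{span}(h_{1}-h_{2})$. Banach's fixed-point theorem then furnishes a unique class $[h]$ with $\tilde{\mathbb{T}}h - h$ constant; calling that constant $g$ produces the ACOE with $h \in C_{b}(\bx)$.

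A standard measurable selection theorem (e.g., Berge/Himmelberg) then delivers a deterministic stationary selector $\gamma^{*} : \bx \to \bu$ attaining the infimum in the ACOE, using compactness of $\bu$ and Assumption~\ref{basicassump}. Iterating the ACOE along any $\gamma \in \Gamma_{A}$ gives
$$\mathbb{E}^{\ct,\gamma}\!\left[\left.\sum_{t=0}^{T-1} c(X_{t},U_{t}) \,\right| X_{0}=x\right] \geq Tg + h(x) - \mathbb{E}^{\ct,\gamma}\!\left[h(X_{T}) \mid X_{0}=x\right],$$
with equality when $\gamma$ is replaced by $\gamma^{*}$; dividing by $T$, letting $T \to \infty$, and using $\|h\|_{\infty} < \infty$ simultaneously identifies $g = \inf_{\gamma \in \Gamma_{A}} J_{\infty}(c,\ct,\gamma)(x)$ for every $x \in \bx$ and shows $J_{\infty}(c,\ct,\gamma^{*})(x) = g$. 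The main obstacle is the contraction step itself: the minimizing actions in $\tilde{\mathbb{T}}$ generally differ between $h_{1},h_{2}$ and between $x,x'$, so the contraction must be organized around upper/lower envelopes rather than any fixed policy, and the minorization hypothesis does its essential work precisely where the factor $(1-\epsilon)$ is extracted via the decomposition $\ct = \epsilon\rho + (1-\epsilon)Q$.
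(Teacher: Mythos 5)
Your proof is correct, and the verification half (telescoping the ACOE along an arbitrary admissible policy and along $\gamma^{*}$) is exactly the standard argument the cited reference uses. Where you diverge is in how the fixed point is produced. The paper does not reprove the theorem (it cites \cite{serdarlectures}), but its working construction, spelled out in Remark~\ref{remark:g_c_T}, is to subtract the minorant inside the operator: $\mathbb{T}_{c,\ct}v(x)=\inf_{u}\{c(x,u)+\int v(y)(\ct(dy|x,u)-\epsilon\rho(dy))\}$ is a $(1-\epsilon)$-contraction in the \emph{sup norm} directly on $C_{b}(\bx)$, because $\ct(\cdot|x,u)-\epsilon\rho$ is a nonnegative measure of total mass $1-\epsilon$; its unique fixed point is the canonical $h_{c,\ct}^{*}$ and $g=\epsilon\int h_{c,\ct}^{*}\,d\rho$. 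You instead keep the original Bellman operator and contract in the span seminorm on the quotient $C_{b}(\bx)/\mathbb{R}$, extracting the same $(1-\epsilon)$ factor from the decomposition $\ct=\epsilon\rho+(1-\epsilon)Q$. The two routes are substantively equivalent — both lean on the minorization in the same place — but the paper's version avoids the quotient-space bookkeeping (completeness of the span norm, passing from ``$\tilde{\mathbb{T}}h-h$ is constant'' back to a representative) and, more importantly for this paper, it pins down a \emph{specific} representative $h_{c,\ct}^{*}$ with the normalization $g=\epsilon\int h_{c,\ct}^{*}\,d\rho$, which Theorems~\ref{upperboundacoe1} and~\ref{upperbound3acoe} rely on; your construction determines $h$ only up to an additive constant, so if you wanted to feed it into the later estimates you would still need to fix that normalization. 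Your approach buys a marginally more classical statement (uniqueness of the ACOE solution modulo constants) at the cost of that extra step.
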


\begin{remark} \label{remark:g_c_T}
We refer to $(g^{*},h^{*},\gamma^{*})$ satisfying ACOE as a canonical triplet. To emphasize the dependence on the cost function $c$ and kernel $\ct$, we use the following notation for the canonical triplet:
$(g_{c,\ct}^{*}, h_{c,\ct}^{*}, \gamma^{*}_{c,\ct})$. Note that 
$g_{c,\ct}^{*} = J_{\infty}^{*}(c,\ct)$.

In general, the function $h_{c,\ct}^{*}$ solving the ACOE is not unique. However, under the minorization condition, such a function can be obtained constructively. Define $h_{c,\ct,\epsilon}^{*}$ as the unique fixed point of the contraction map on $C_b(\bx)$
\begin{equation}\label{equ:minor_operator}
\mathbb{T}_{c,\ct,\epsilon}v(x):=\inf_{u\in\bu}\left\{c(x,u)+\int_{\bx}v(y)\left(\ct(dy|x,u)-\epsilon\rho(dy)\right)\right\},\;\forall\;x\in\bx.
\end{equation}
Then $g_{c,\ct}^*:=\epsilon\int_{\bx}h_{c,\ct,\epsilon}^{*}\rho(dx)$, and the pair $(g_{c,\ct}^{*}, h_{c,\ct,\epsilon}^{*})$ satisfies the ACOE. Finally, via a verification theorem (e.g. \cite[Theorem 2.2 (a)+(b)]{HernandezLermaMCP}), the above construction validates the theorem.
\end{remark}

\begin{remark}[Average-cost Bellman consistency equation]\label{remakr:avg_cost_bell}
Under Assumption \ref{basicassump} and the minorization condition (\ref{minorizationCond}), for a deterministic stationary policy $\gamma$, the operator
\begin{equation*}
\mathbb{T}^{\gamma}v(x)
:=c(x,\gamma(x))
+\int_{\bx}v(y)\bigl(\ct(dy\mid x,\gamma(x))-\epsilon\rho(dy)\bigr),
\quad \forall x\in\bx,
\end{equation*}
is a contraction under the supremum norm. Consequently, there exist a constant $g^{\gamma} \in \mathbb{R}$ and a
bounded measurable function $h^{\gamma}$ such that
\begin{equation*}
g^{\gamma}+h^{\gamma}(x)
=c(x,\gamma(x))
+\int_{\bx}h^{\gamma}(y)\ct(dy\mid x,\gamma(x)),
\end{equation*}
where $g^{\gamma}=J_{\infty}(c,\ct,\gamma)$. We refer to this equation as the \emph{average-cost Bellman consistency equation}.
\end{remark}

Using the ACOE and the minorized fixed-point equation \eqref{equ:minor_operator},
we state several properties of $h_{c,\ct,\epsilon}^{*}$. These results are
classical (see, e.g., \cite[p.~61]{HernndezLerma1989}); for completeness,
we provide a proof in Appendix \ref{sec:hpropertiesproof}.

\begin{lemma}[Properties of $h_{c,\ct,\epsilon}^{*}$]\label{lemma:hproperties}
Consider an MDP $(\bx,\bu,\ct, c)$ such that $\ct$ satisfies the minorization condition with some probability measure $\rho$ and constant $\epsilon_{1} > 0$, and suppose Assumption \ref{basicassump} holds.  

\noindent \textbf{(a).} Let $\epsilon_{2}$ be such that $\epsilon_{1}>\epsilon_{2}>0$. The difference between $h_{c,\ct,\epsilon_{1}}^{*}$ and $h_{c,\ct,\epsilon_{2}}^{*}$ is a constant function. 

 \noindent \textbf{(b).} For each $\epsilon\in (0,\epsilon_{1}]$, 
we have $\left\Vert h_{c,\ct,\epsilon}^{*}\right\Vert_{\mathrm{Lip}} = \left\Vert h_{c,\ct,\epsilon_1}^{*}\right\Vert_{\mathrm{Lip}}$.

\noindent \textbf{(c).} For each $\epsilon \in (0,\epsilon_{1}]$, $h_{c,\ct,\epsilon}^{*}$ is a bounded function. In particular,
$\left\Vert h_{c,\ct,\epsilon}^{*} \right\Vert_{\infty}\leq \Vert c\Vert_{\infty}/\epsilon$.
\end{lemma}

\begin{remark}
    \label{rk:small_epsilon}
Lemma \ref{lemma:hproperties}(b) allows us to denote
$\|h_{c,\ct,\epsilon'}^{*}\|_{\mathrm{Lip}}$ for $\epsilon' \in (0,\epsilon_1]$ by $\hlip$.
\end{remark}

We now introduce a second set of assumptions sufficient for existence of a solution to the ACOE, based on the transition kernel being a Wasserstein contraction.

\begin{assumption} \label{equicontinuousassump} Let $L, \widetilde{L} > 0$ be constants. Assume the following hold for an MDP $(\bx,\bu,\ct,c)$:

\noindent \textbf{(a).}  $\bx$ is compact.

\noindent \textbf{(b).} There exists some $\beta^* \in (0,1)$ such that
$\|\jsc\|_{\mathrm{Lip}} \leq L, \text{ for all } \beta \in [\beta^*,1)
$.

\noindent \textbf{(c).} The map $\ct:\;\bx\times\bu\to\mathcal{P}(\bx)$ is $\widetilde{L}$-Lipschitz continuous with respect to Wasserstein-1 distance, in the sense that
\begin{equation*}
\wc_{1}\left(\ct(\cdot|x,u),\ct(\cdot|y,u')\right)\leq \widetilde{L}\left(d_{\bx}(x,y) + d_{\bu}(u,u')\right),\;\forall\;u,u'\in\bu,\; x,y \in \bx.
\end{equation*}
\end{assumption}

Under Assumption \ref{equicontinuousassump}, the average-cost setting can be analyzed via the vanishing discount factor approximation.

\begin{theorem}\label{thrm:vanish_main}
Suppose Assumption \ref{basicassump} and   Assumption \ref{equicontinuousassump}(a)-(b) hold for an MDP $(\bx,\bu,\ct, c)$. Then there exist a \textit{canonical triplet} $(g_{c,\ct}^{*}, h_{c,\ct}^{*}, \gamma^{*}_{c,\ct})$, a state $z \in \bx$, and an increasing subsequence  $\{\beta(n), n \geq 1\} \subset (0,1)$ such that $\lim_{n \to \infty}\beta(n) = 1$,  and for all $x \in \bx$,
\begin{align*}
\lim_{n\to\infty}\left(1-\beta(n)\right)J_{\beta(n)}^{*}(c,\ct)(x)=g_{c,\ct}^{*}=J_{\infty}^{*}(c,\ct), \quad
\lim_{n\to\infty} h_{c,\ct,\beta(n)}(x) = h_{c,\ct}^{*}(x).
\end{align*}
where $h_{c,\ct,\beta(n)}(x) := J_{\beta(n)}^{*}(c,\ct)(x) - J_{\beta(n)}^{*}(c,\ct)(z)$, and $h_{c,\ct}^{*}$ is $L$-Lipschitz with respect to the state variable.

Suppose, in addition, Assumption \ref{equicontinuousassump}(c) holds. Define for each $n\geq 1$, and $(x,u) \in \bx \times \bu$
$$
\mathcal{I}_n(x,u) = \int h_{c,\ct,\beta(n)}(x') \ct(dx'|x,u), \quad
\mathcal{I}_{\infty}(x,u) = \int h_{c,\ct}^*(x') \ct(dx'|x,u).
$$
Then $\{\mathcal{I}_n: n \geq 1\}$ and $\mathcal{I}_{\infty}$ are $L \times \widetilde{L}$-Lipschitz functions on $\bx \times \bu$, and $\lim_{n \to \infty} \mathcal{I}_n(x,u) = \mathcal{I}_{\infty}(x,u)$ for each $(x,u) \in \bx \times \bu$.
\end{theorem}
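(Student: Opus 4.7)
The plan is to carry out the classical vanishing discount argument, adapted to the Wasserstein regularity setting. First, observe that under Assumption~\ref{equicontinuousassump}(b), for every $\beta \in [\beta^*,1)$ the function $h_{c,\ct,\beta}(x) := J_{\beta}^{*}(c,\ct)(x) - J_{\beta}^{*}(c,\ct)(z)$ is $L$-Lipschitz (subtracting a constant does not change the Lipschitz norm) and satisfies $h_{c,\ct,\beta}(z) = 0$. Since $\bx$ is compact, these facts imply that the family $\{h_{c,\ct,\beta}: \beta \in [\beta^*,1)\}$ is uniformly bounded by $L \cdot \mathrm{diam}(\bx)$ and equicontinuous. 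Fix any increasing sequence $\beta_k \uparrow 1$ and apply Arzel\`a--Ascoli to extract a subsequence, which we denote $\{\beta(n)\}$, along which $h_{c,\ct,\beta(n)}$ converges uniformly to some $L$-Lipschitz function $h_{c,\ct}^{*} \in C_b(\bx)$ with $h_{c,\ct}^{*}(z) = 0$.

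Second, since $c$ is bounded by $\|c\|_\infty$, the scalars $(1-\beta(n)) J_{\beta(n)}^{*}(c,\ct)(z) \in [0,\|c\|_\infty]$, so passing to a further subsequence (re-labeled) I obtain a limit $g^{*}\in\bR$. I then rewrite the DCOE for $J_{\beta(n)}^{*}(c,\ct)$ in the centered form
\begin{equation*}
(1-\beta(n))J_{\beta(n)}^{*}(c,\ct)(z) + h_{c,\ct,\beta(n)}(x) = \inf_{u \in \bu}\left\{c(x,u) + \beta(n) \int_{\bx} h_{c,\ct,\beta(n)}(y)\, \ct(dy|x,u)\right\}.
\end{equation*}
Uniform convergence of $h_{c,\ct,\beta(n)}$ and weak continuity of $\ct$ (Assumption~\ref{basicassump}(c)) ensure that the integrand in the right-hand side converges uniformly in $(x,u)$; compactness of $\bu$ together with continuity of the inf-expression in $u$ allows interchange of limit and infimum. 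Passing to the limit yields the ACOE $g^{*} + h_{c,\ct}^{*}(x) = \inf_{u \in \bu}\{c(x,u) + \int h_{c,\ct}^{*}(y)\ct(dy|x,u)\}$. A standard measurable selection theorem (applicable since $\bu$ is compact and the map is lower semicontinuous) produces $\gamma^{*}_{c,\ct}: \bx \to \bu$. Iterating the ACOE along $\gamma^{*}_{c,\ct}$ and dividing by $T$, then using boundedness of $h_{c,\ct}^{*}$, gives $J_{\infty}(c,\ct,\gamma^{*}_{c,\ct})(x) = g^{*}$, while a symmetric dynamic programming argument with any admissible policy shows $g^{*} \leq J_{\infty}(c,\ct,\gamma)(x)$; hence $g^{*} = J_{\infty}^{*}(c,\ct)$. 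Finally, the pointwise conclusion for $x \neq z$ follows from writing $(1-\beta(n))J_{\beta(n)}^{*}(c,\ct)(x) = (1-\beta(n)) h_{c,\ct,\beta(n)}(x) + (1-\beta(n))J_{\beta(n)}^{*}(c,\ct)(z)$ and noting that the first term vanishes because $h_{c,\ct,\beta(n)}$ is uniformly bounded.

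For the second part, the Lipschitz bound on $\mathcal{I}_n$ is an immediate consequence of the Kantorovich--Rubinstein dual representation of $\wc_1$ combined with Assumption~\ref{equicontinuousassump}(c): for any $(x,u),(y,u') \in \bx \times \bu$,
\begin{equation*}
|\mathcal{I}_n(x,u) - \mathcal{I}_n(y,u')| \leq L \cdot \wc_1\bigl(\ct(\cdot|x,u), \ct(\cdot|y,u')\bigr) \leq L \widetilde{L} \bigl(d_{\bx}(x,y) + d_{\bu}(u,u')\bigr),
\end{equation*}
using that $h_{c,\ct,\beta(n)}$ is $L$-Lipschitz; the same bound holds for $\mathcal{I}_\infty$ since $h_{c,\ct}^{*}$ is $L$-Lipschitz. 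Pointwise convergence follows from
$|\mathcal{I}_n(x,u) - \mathcal{I}_\infty(x,u)| \leq \|h_{c,\ct,\beta(n)} - h_{c,\ct}^{*}\|_\infty \to 0$, which was established in the first part.

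The main obstacle is the verification that the limiting constant $g^{*}$ produced by the compactness argument coincides with $J_\infty^{*}(c,\ct)$. The upper bound (that $\gamma^{*}_{c,\ct}$ attains $g^{*}$) is straightforward once ACOE is in hand, but the matching lower bound---showing no other admissible policy can beat $g^{*}$---requires iterating the inequality form of the ACOE and handling the boundary term $\mathbb{E}^{\ct,\gamma}[h_{c,\ct}^{*}(X_T)]/T$; boundedness of $h_{c,\ct}^{*}$ (which we have from compactness of $\bx$ and its Lipschitz property) makes this term vanish and closes the argument.
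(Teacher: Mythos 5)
Your proof is correct and follows essentially the same route as the paper: the paper delegates the first claim to the vanishing-discount argument of Lemma 2.2 in the cited reference (which is precisely the Arzel\`a--Ascoli plus centered-DCOE argument you spell out), and establishes the second claim via the Kantorovich--Rubinstein bound under Assumption (c) together with convergence of the relative value functions, exactly as you do. The only cosmetic difference is that the paper invokes the bounded convergence theorem for the pointwise convergence of $\mathcal{I}_n$, whereas you use the uniform convergence $\|h_{c,\ct,\beta(n)} - h_{c,\ct}^{*}\|_\infty \to 0$, which is also available here since the family is equi-Lipschitz on a compact space.
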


\begin{proof} The first claim follows from the proof of Lemma 2.5 in \cite{demirci2023average}. We now focus on the second claim. Note that for each $n \geq 1$, $h_{c,\ct,\beta(n)}(z) = 0$ and $h_{c,\ct,\beta(n)}$ is ${L}$-Lipschitz on $\bx$. Since $\bx$ is compact and $\lim_{n\to\infty} h_{c,\ct,\beta(n)}(x) = h^{*}(x)$ for each $x \in \bx$, by bounded convergence theorem, we have $\lim_{n \to \infty} \mathcal{I}_n(x,u) = \mathcal{I}_{\infty}(x,u)$ for each $(x,u) \in \bx \times \bu$. Finally, for each $n \geq 1$, $x,y \in \bx$ and $u,u' \in \bu$, note that 
\begin{align*}
&\left|\mathcal{I}_n(x,u) - \mathcal{I}_{n}(y,u')\right| = \left|\int h_{c,\ct,\beta(n)}(x') \ct(dx'|x,u) - \int h_{c,\ct,\beta(n)}(x') \ct(dx'|y,u')\right| \\
\leq &{L} \wc_{1}\left(\ct(\cdot|x,u),\ct(\cdot|y,u')\right) \leq L \widetilde{L}\left(d_{\bx}(x,y) + d_{\bu}(u,u')\right),
\end{align*}
where we use the fact that $\| h_{c,\ct,\beta(n)}\|_{\mathrm{Lip}}$ is bounded by $L$ and Assumption \ref{equicontinuousassump}(c). Thus, $\mathcal{I}_n: \bx \times \bu \to \mathbb{R}$ is Lipschitz with a constant $L \times \widetilde{L}$. By the same argument, we have $\mathcal{I}_\infty: \bx \times \bu \to \mathbb{R}$ is also $L\times \widetilde{L}$-Lipschitz. The proof is complete.
\end{proof}

\begin{remark}\label{remark:further_subseq}
By the proof of Lemma 2.5 in \cite{demirci2023average}, in fact, we can show that for any increasing sequence $\{\beta(n); n \geq 1\} \subset (0,1)$ such that $\lim_{n \to \infty}\beta(n) = 1$, we can extract a further subsequence $\{\beta(n_k); n_k \geq 1\} \subset (0,1)$ such that the conclusions in Theorem \ref{thrm:vanish_main} continue to hold for this subsequence. Further, the state $z \in \bx$ may be chosen arbitrarily.
\end{remark}

\revise{
We conclude this subsection with a comparison lemma that converts a one-step Bellman-type inequality into an upper bound on the average cost of a fixed stationary policy. This result will be useful in deriving average-cost robustness bounds, especially when we do not impose the preceding regularity assumptions on the model under which the policy is evaluated.

\begin{lemma}[{\cite[Lemma 5.2.5 (a)]{HernandezLermaMCP}}]\label{lemma:comparison}
Let $g\in\mathbb{R}$, let $h: \mathbb{X}\to\mathbb{R}$ be a bounded measurable function, and let $\gamma: \mathbb{X}\to \bu$ be a measurable selector. If
\begin{equation*}
g + h(x) \geq c(x, \gamma(x)) + \int_{\mathbb{X}} h(y)\, \mathcal{T}(dy \mid x, \gamma(x)), \quad \forall\, x \in \mathbb{X},
\end{equation*}
then $g \geq J_\infty(c, \mathcal{T}, \gamma)(x)$ for all $x \in \mathbb{X}$. 
\end{lemma}}

\subsection{Continuity of Optimal Value Functions in Models}\label{subsec:continuity}
We begin by establishing the continuity of the optimal value functions with respect to the model components, including the cost function and the transition kernel. These results serve as key intermediate steps in our robustness analysis in Subsection \ref{subsec:robust_performance_lass}. The analysis is conducted separately for the discounted-cost criterion in Subsection \ref{discountedcontinuitysection} and the average-cost criterion in Subsection \ref{averagecontinuitysection}. Related continuity results can be found in \cite{kara2022robustness, kara2020robustness}.

\subsubsection{Discounted-cost Criterion}\label{discountedcontinuitysection}
\begin{theorem}\label{upperbound1proof} Suppose that Assumption \ref{basicassump} holds for two MDPs, (reference) $(\bx,\bu,\ct,c)$ and (approximation) $(\bx,\bu,\ctp,\altc)$. Then
\begin{equation*}
\left\Vert\jsc-\jscp\right\Vert_{\infty}\leq \frac{1}{1-\beta}\left\Vert c-\altc\right\Vert_{\infty}+\frac{\beta}{1-\beta} d_{\jsc}(\ct,\cs).
\end{equation*}
If in addition, $\jsc$ is Lipschitz continuous with respect to the state variable, then
\begin{equation*}
\left\Vert\jsc-\jscp\right\Vert_{\infty}\leq \frac{1}{1-\beta}\left\Vert c-\altc\right\Vert_{\infty}+\frac{\beta}{1-\beta} \left\Vert J_{\beta}^{*}(c,\ct)\right\Vert_{\mathrm{Lip}} d_{\wc_{1}}(\ct,\ctp).
\end{equation*}
\end{theorem}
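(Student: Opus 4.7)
\medskip
\noindent\textbf{Proof proposal.} The plan is a standard fixed-point argument built on the DCOE (Theorem \ref{dcoetheorem}), exploiting the contraction property of the Bellman operators under the sup norm. Let $v_1 := \jsc$ and $v_2 := \jscp$; both belong to $C_b(\bx)$ under Assumption \ref{basicassump}. Denote by $\mathbb{T}_1$ and $\mathbb{T}_2$ the Bellman operators associated with the reference model $(c,\ct)$ and the approximate model $(\altc,\ctp)$, respectively, so that $v_1 = \mathbb{T}_1 v_1$ and $v_2 = \mathbb{T}_2 v_2$.

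First I would write, via the triangle inequality and the fact that $\mathbb{T}_2$ is a $\beta$-contraction on $(C_b(\bx),\|\cdot\|_\infty)$,
\begin{equation*}
\|v_1 - v_2\|_\infty = \|\mathbb{T}_1 v_1 - \mathbb{T}_2 v_2\|_\infty \leq \|\mathbb{T}_1 v_1 - \mathbb{T}_2 v_1\|_\infty + \beta \|v_1 - v_2\|_\infty,
\end{equation*}
so that $(1-\beta)\|v_1 - v_2\|_\infty \leq \|\mathbb{T}_1 v_1 - \mathbb{T}_2 v_1\|_\infty$. The second step is to bound the right-hand side pointwise using the elementary inequality $|\inf_u a(u) - \inf_u b(u)| \leq \sup_u |a(u) - b(u)|$ applied to $a(u) = c(x,u) + \beta \int v_1\,d\ct(\cdot|x,u)$ and $b(u) = \altc(x,u) + \beta \int v_1\,d\ctp(\cdot|x,u)$, giving
\begin{equation*}
|\mathbb{T}_1 v_1(x) - \mathbb{T}_2 v_1(x)| \leq \sup_{u\in\bu}|c(x,u)-\altc(x,u)| + \beta \sup_{u\in\bu}\left|\int v_1(y)\,(\ct(dy|x,u) - \ctp(dy|x,u))\right|.
\end{equation*}
Taking the sup over $x$ and recognizing the second term as $\beta\, d_{v_1}(\ct,\ctp) = \beta\, d_{\jsc}(\ct,\cs)$ by the definition of $d_f$ for controlled transition kernels yields the first inequality after dividing by $(1-\beta)$.

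For the second inequality, when $\jsc$ is Lipschitz, I would invoke Lemma \ref{trivialinequality} directly to replace $d_{\jsc}(\ct,\ctp)$ by $\|\jsc\|_{\text{Lip}}\, d_{\wc_1}(\ct,\ctp)$. No serious obstacle is expected; the only mild care is in ensuring $v_1 \in C_b(\bx)$ so that the $d_{v_1}$ notation is meaningful (covered by Theorem \ref{dcoetheorem}(a)) and in handling the inf-of-sums comparison cleanly, which is the standard one-line trick noted above.
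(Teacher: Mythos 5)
Your proposal is correct and follows essentially the same route as the paper: the paper likewise subtracts the two DCOEs, splits the integral term into a same-function/different-kernel piece (giving $\beta\,d_{\jsc}(\ct,\cs)$) and a same-kernel/difference-of-functions piece (giving $\beta\|v_1-v_2\|_\infty$), then rearranges; your phrasing via the contraction property of $\mathbb{T}_2$ is just a cleaner packaging of that second piece. The second inequality is obtained identically via Lemma \ref{trivialinequality}.
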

\begin{proof} For ease of notation, we use the following abbreviation:
\begin{equation}\label{ease_of_notation}
v_{c,\ct}:=J_{\beta}^{*}(c,\ct),\quad v_{\altc,\ctp}:=J_{\beta}^{*}(\altc,\ctp).
\end{equation}
By Theorem \ref{dcoetheorem}, we have that for any $x\in\bx$:
\begin{align*}
&v_{c,\ct}(x)=\inf_{u\in\bu}\left(c(x,u)+\beta\int v_{\ct}(x')\ct(dx'|x,u) \right),\;\;
v_{\altc,\cs}(x)=\inf_{u\in\bu}\left(\altc(x,u)+\beta\int v_{\cs}(x')\ctp(dx'|x,u) \right).
\end{align*}
Therefore, by the triangular inequality, we have that for any $x\in\bx$:
\begin{align*}
&|v_{c,\ct}(x)-v_{\altc,\cs}(x)| \leq \sup_{u\in\bu}\left|c(x,u)-\altc(x,u)\right|+\beta\sup_{u\in\bu}\left|\int v_{c,\ct}(x')\ct(dx'|x,u)-\int v_{\altc,\cs}(x')\ctp(dx'|x,u) \right|\\
&\leq \left\Vert c-\altc\right\Vert_{\infty}+\beta\sup_{u\in\bu}\left| \int v_{c,\ct}(x')\ct(dx'|x,u)-\int v_{c,\ct}(x')\ctp(dx'|x,u)\right|\\
&\qquad \qquad \quad \;+ \beta\sup_{u\in\bu} \left|\int v_{c,\ct}(x')\ctp(dx'|x,u)-\int v_{\altc,\ctp}(x')\ctp(dx'|x,u) \right|\\
&\leq \left\Vert c-\altc\right\Vert_{\infty}+\beta d_{\jsc}(\ct,\cs) +\beta\left\Vert v_{c,\ct}-v_{\altc,\ctp}\right\Vert_{\infty}.
\end{align*}
Taking the supremum over all $x\in\bx$, and using the boundedness of $v_{c,\ct}$ and $v_{\altc,\ctp}$, we can rearrange terms to establish the first claim. The second claim then follows from Lemma \ref{trivialinequality}. 
\end{proof}

\subsubsection{Average-cost criterion}\label{averagecontinuitysection}
In this subsection, we first establish an analogous result to Theorem \ref{upperbound1proof} under the average-cost criterion, assuming the minorization condition (Definition \ref{minordef}). Next, we present an alternative method---based on a vanishing discount factor and Theorem \ref{upperbound1proof}---that yields a similar result without requiring the minorization condition.

Recall that $h_{c,\ct,\epsilon}^{*}$ is the unique fixed-point of $\mathbb{T}_{c,\ct,\epsilon}$ defined in \eqref{equ:minor_operator}. Further, recall the definition of $\hlip$ in Remark \ref{rk:small_epsilon}.

\begin{theorem}\label{upperboundacoe1}
 Suppose that Assumption \ref{basicassump} holds for two MDPs, (reference) $(\bx,\bu,\ct,c)$ and (approximation) $(\bx,\bu,\ctp,\altc)$. Further, assume $\ct$ (resp.~$\ctp$) satisfies the minorization condition with a probability measure $\rho$ (resp.~$\tau$) and a constant $\epsilon > 0$. Then
 \begin{align*}
     \left\Vert J_{\infty}^{*}(c,\mathcal{T})-J_{\infty}^{*}(\altc,\cs)\right\Vert_{\infty} \leq 2\epsilon \Delta_{c,\ct,\epsilon}(\rho,\tau) +
          \|c - \altc\|_{\infty} \; + \; d_{h_{c,\ct,\epsilon}^{*}}(\ct, \ctp),
 \end{align*}
where we define
\begin{equation}
    \label{def:Delta_epsilon_rhotau}
    \Delta_{c,\ct,\epsilon}(\rho,\tau) := \left|\int_{\bx}  h_{c,\ct,\epsilon}^{*}(x)(\rho-\tau)(dx)\right|.
\end{equation}
 If in addition, $h_{c,\ct,\epsilon}^{*}$ is Lipschitz continuous with respect to the state variable, we have
\begin{align*}
\left\Vert J_{\infty}^{*}(c,\mathcal{T})-J_{\infty}^{*}(\altc,\cs)\right\Vert_{\infty}  \leq \left\Vert c-\altc\right\Vert_{\infty} + \hlip d_{\wc_{1}}(\ct,\ctp).
\end{align*}
 \end{theorem}
\begin{proof} 
Recall that $J_{\infty}^{*}(c,\mathcal{T})(x) = g_{c,\ct}^{*}$ and $J_{\infty}^{*}(\altc,\cs)(x) = g_{\altc,\ctp}^{*}$ for $x \in \bx$. 
By Remark  \ref{remark:g_c_T}, we have for $x \in \bx$,
\begin{align}\label{aux:ACOE_eq}
\begin{split}
&h_{c,\ct,\epsilon}^{*}(x)=\inf_{u\in\bu}\left(c(x,u)+\int_{\bx} h_{c,\ct,\epsilon}^{*}(x')\left(\mathcal{T}(dx'|x,u)-\epsilon\rho(dx')\right)\right),\quad
g_{c,\ct}^{*}=\epsilon\int_{\bx}h_{c,\ct,\epsilon}^{*}(x')\rho(dx'),\\
& h_{\altc,\cs,\epsilon}^{*}(x)=\inf_{u\in\bu}\left(\altc(x,u)+\int_{\bx} h_{\altc,\cs,\epsilon}^{*}(x')\left(\mathcal{S}(dx'|x,u)-\epsilon\tau(dx')\right)\right), \quad
g_{\altc,\cs}^{*}=\epsilon\int_{\bx} h_{\altc,\cs,\epsilon}^{*}(x')\tau(dx').
\end{split}
\end{align} 
By subtracting the second equation from the first, we have that for each $x \in \bx$,
\begin{align*}
&\left|h_{c,\ct,\epsilon}^{*}(x)-h_{\altc,\ctp,\epsilon}^{*}(x)\right| \leq \sup_{u\in\bu}\left|c(x,u)-\altc(x,u)\right| + (I) + (II)
\end{align*}
where we define
\begin{align*}
&(I) := \sup_{u\in\bu}\left| \int_{\bx} h_{c,\ct,\epsilon}^{*}(x')\left(\ct(dx'|x,u)-\epsilon\rho(dx')\right)\right.
\left.-\int_{\bx} h_{c,\ct,\epsilon}^{*}(x')\left(\ctp(dx'|x,u)-\epsilon\tau(dx')\right) \right|,\\
& (II) := \sup_{u\in\bu}  \int_{\bx} \left|h_{c,\ct,\epsilon}^{*}(x') - h_{\altc,\ctp,\epsilon}^{*} (x')\right| \left(\ctp(dx'|x,u)-\epsilon\tau(dx')\right).
\end{align*}
For the two terms, by definition,
\begin{align*}
    (I) \leq  d_{h_{c,\ct,\epsilon}^{*}}(\ct, \ctp) \; + \;
     \epsilon \Delta_{c,\ct,\epsilon}(\rho,\tau),\quad
     (II) \leq \|h_{c,\ct,\epsilon}^{*} - h_{\altc,\ctp,\epsilon}^{*}\|_{\infty} (1-\epsilon).
\end{align*}
As a result, by rearranging terms and taking the supremum over $x \in \bx$, we have
\begin{align}\label{aux:avg_minor_cont}
    \|h_{c,\ct,\epsilon}^{*} - h_{\altc,\ctp,\epsilon}^{*}\|_{\infty} \leq \frac{1}{\epsilon} \left(\|c - \altc\|_{\infty} \; + \; d_{h_{c,\ct,\epsilon}^{*}}(\ct, \ctp)\right) \; + \;\Delta_{c,\ct,\epsilon}(\rho,\tau).
\end{align}
Finally, by the relationships between $h_{c,\ct,\epsilon}^{*}$ and $g_{c,\ct,\epsilon}^{*}$, and between $h_{\altc,\ctp,\epsilon}^{*}$ and $g_{\altc,\ctp,\epsilon}^{*}$, and due to the triangle inequality, we have
\begin{align*}
    |g_{c,\ct}^{*} - g_{\altc,\ctp}^{*}| & \leq \epsilon 
    \left|\int_{\bx}h_{c,\ct,\epsilon}^{*}(x')\rho(dx') - \int_{\bx}h_{c,\ct,\epsilon}^{*}(x')\tau(dx')\right|
    +
    \epsilon \int_{\bx} |h_{c,\ct,\epsilon}^{*} - h_{\altc,\ctp,\epsilon}^{*}| \tau(dx') \\
    &\leq \epsilon \Delta_{c,\ct,\epsilon}(\rho,\tau) +
      \epsilon\|h_{c,\ct,\epsilon}^{*} - h_{\altc,\ctp,\epsilon}^{*}\|_{\infty}  \leq 2\epsilon \Delta_{c,\ct,\epsilon}(\rho,\tau) +
          \|c - \altc\|_{\infty} \; + \; d_{h_{c,\ct,\epsilon}^{*}}(\ct, \ctp).
\end{align*}
This proves the first claim. For the second part, we note by Lemma \ref{trivialinequality}, we have
$$
|g_{c,\ct}^{*} - g_{\altc,\ctp}^{*}|\leq 2\epsilon \hlip d_{\wc_{1}}(\rho, \tau)+\|c - \altc\|_{\infty} \; + \; \hlip d_{\wc_{1}}(\ct, \ctp),
$$
where, due to Lemma \ref{lemma:hproperties} \textbf{(b)}, we use the fact that $\left\Vert h_{c,\ct,\epsilon}^{*}\right\Vert_{\mathrm{Lip}}$ is independent of $\epsilon$ (see Remark \ref{rk:small_epsilon}). Finally, we send $\epsilon \to 0$, and the proof of the second claim is complete.
\end{proof}

\begin{remark} 
Above we assumed that the minorants $\rho$ and $\tau$ share the same scaling constant $\epsilon$. This entails no loss of generality: if a minorant $\rho$ admits a scaling constant $\epsilon_{1}$, then any $\epsilon_{2}\in (0, \epsilon_{1}]$ is also valid. Hence the assumption can always be satisfied by choosing the smaller constant.
\end{remark}

Next, we present an approach based on the vanishing-discount method to achieve similar results under different assumptions, utilizing Theorem \ref{thrm:vanish_main}. 

\begin{theorem}\label{vanishingproof1}
Suppose Assumption \ref{basicassump} and   Assumption \ref{equicontinuousassump}(a)-(b) hold for two MDPs, (reference) $(\bx,\bu,\ct,c)$ and (approximation) $(\bx,\bu,\ctp,\altc)$. Then
\begin{equation*}
\left\Vert J_{\infty}^{*}(c,\mathcal{T})-J_{\infty}^{*}(\altc,\cs)\right\Vert_{\infty}\leq \left\Vert c-\altc\right\Vert_{\infty}+ L d_{\wc_{1}}(\ct,\ctp),
\end{equation*}
where $L$ is the constant in Assumption \ref{equicontinuousassump}. If additionally Assumption \ref{equicontinuousassump}(c) holds for both MDPs, then 
\begin{equation*}
\left\Vert J_{\infty}^{*}(c,\mathcal{T})-J_{\infty}^{*}(\altc,\cs)\right\Vert_{\infty}\leq \left\Vert c-\altc\right\Vert_{\infty}\;+\; d_{h_{c,\ct}^{*}}(\ct, \ctp),
\end{equation*}
where $h_{c,\ct}^{*}$ appears in Theorem \ref{thrm:vanish_main}.
\end{theorem}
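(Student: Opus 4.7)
The plan is to run the vanishing-discount approach: apply Theorem \ref{upperbound1proof} at discount factor $\beta$, multiply both sides by $(1-\beta)$, and pass to the limit $\beta \to 1$ along a common subsequence guaranteed by Theorem \ref{thrm:vanish_main} and Remark \ref{remark:further_subseq}. Concretely, Theorem \ref{upperbound1proof} gives
\begin{equation*}
(1-\beta)\left\Vert J_{\beta}^{*}(c,\ct) - J_{\beta}^{*}(\altc,\ctp)\right\Vert_{\infty}
\;\leq\; \|c-\altc\|_{\infty} \;+\; \beta\, d_{J_{\beta}^{*}(c,\ct)}(\ct,\ctp).
\end{equation*}
First I would fix an arbitrary $x \in \bx$ and bound $(1-\beta)|J_{\beta}^{*}(c,\ct)(x) - J_{\beta}^{*}(\altc,\ctp)(x)|$ by the right-hand side; since $J_{\infty}^{*}(c,\ct)$ and $J_{\infty}^{*}(\altc,\ctp)$ are state-independent constants $g^{*}_{c,\ct}$ and $g^{*}_{\altc,\ctp}$, taking limits point-wise will suffice.

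For the first claim, I would use Assumption \ref{equicontinuousassump}(b) to bound $\|J_{\beta}^{*}(c,\ct)\|_{\text{Lip}} \leq L$ for all $\beta \geq \beta^{*}$, then apply Lemma \ref{trivialinequality} to get $d_{J_{\beta}^{*}(c,\ct)}(\ct,\ctp) \leq L\, d_{\wc_{1}}(\ct,\ctp)$. Theorem \ref{thrm:vanish_main} applied to the reference MDP yields a subsequence $\beta(n) \to 1$ along which $(1-\beta(n))J_{\beta(n)}^{*}(c,\ct)(x) \to g^{*}_{c,\ct}$; by Remark \ref{remark:further_subseq}, passing to a further subsequence yields the analogous convergence $(1-\beta(n))J_{\beta(n)}^{*}(\altc,\ctp)(x) \to g^{*}_{\altc,\ctp}$. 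Letting $\beta(n) \to 1$ in the displayed inequality gives
\begin{equation*}
|g^{*}_{c,\ct} - g^{*}_{\altc,\ctp}| \;\leq\; \|c-\altc\|_{\infty} + L\, d_{\wc_{1}}(\ct,\ctp),
\end{equation*}
which is the first assertion.

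For the refined second claim, the key observation is that $d_{J_{\beta}^{*}(c,\ct)}(\ct,\ctp) = d_{h_{c,\ct,\beta}}(\ct,\ctp)$, since shifting the integrand by the constant $J_{\beta}^{*}(c,\ct)(z)$ does not alter the integral against $\ct(\cdot|x,u)-\ctp(\cdot|x,u)$ (a signed measure of total mass zero). So it suffices to show
\begin{equation*}
d_{h_{c,\ct,\beta(n)}}(\ct,\ctp) \;\longrightarrow\; d_{h^{*}_{c,\ct}}(\ct,\ctp).
\end{equation*}
Here Assumption \ref{equicontinuousassump}(c) on both $\ct$ and $\ctp$ enters: by Theorem \ref{thrm:vanish_main}, the functions $\mathcal{I}_n(x,u) := \int h_{c,\ct,\beta(n)}(x')\,\ct(dx'|x,u)$ and their analogues $\widetilde{\mathcal{I}}_n(x,u) := \int h_{c,\ct,\beta(n)}(x')\,\ctp(dx'|x,u)$ are all $L\widetilde{L}$-Lipschitz on the compact space $\bx \times \bu$ and converge point-wise to $\mathcal{I}_{\infty}$, $\widetilde{\mathcal{I}}_{\infty}$ respectively. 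Uniform Lipschitz continuity on a compact set upgrades point-wise convergence to uniform convergence (Arzelà--Ascoli), so $\|\mathcal{I}_n - \widetilde{\mathcal{I}}_n\|_{\infty} \to \|\mathcal{I}_{\infty} - \widetilde{\mathcal{I}}_{\infty}\|_{\infty} = d_{h^{*}_{c,\ct}}(\ct,\ctp)$. Plugging this into the $(1-\beta(n))$-weighted bound and letting $n \to \infty$ yields the second assertion.

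The main obstacle I anticipate is the uniform-convergence step in the second claim; point-wise convergence of $\mathcal{I}_n$ on the non-finite set $\bx \times \bu$ would not by itself let us pass $\sup_{x,u}$ through the limit, and the equi-Lipschitz property established in Theorem \ref{thrm:vanish_main} (together with compactness of $\bx \times \bu$) is exactly what allows this. The remaining ingredients---subsequential extraction for both MDPs and the constant-shift invariance of $d_f$---are routine.
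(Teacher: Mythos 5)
Your proposal is correct and follows essentially the same route as the paper: multiply the bounds of Theorem \ref{upperbound1proof} by $(1-\beta)$, pass to the limit along the common subsequence from Theorem \ref{thrm:vanish_main} and Remark \ref{remark:further_subseq}, and for the second claim use the mass-zero shift invariance together with the equi-Lipschitz property from Theorem \ref{thrm:vanish_main} (under Assumption \ref{equicontinuousassump}(c) for both kernels) and compactness of $\bx\times\bu$ to upgrade pointwise to uniform convergence of the integral differences. The only cosmetic difference is that the paper works directly with the single family $\widetilde{\mathcal{I}}_n = \mathcal{I}_n - \widetilde{\mathcal{I}}_n$ rather than the two families separately.
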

\begin{proof}
By Theorem \ref{upperbound1proof} and due to Assumption \ref{equicontinuousassump}(b), for any $\beta\in [\beta^*,1)$, we have
\begin{align}\label{aux_ys}
\begin{split}
    &\left\Vert(1-\beta)\left(\jsc-\jscp\right)\right\Vert_{\infty}\leq \left\Vert c-\altc\right\Vert_{\infty}+\beta L d_{\wc_{1}}(\ct,\ctp), \\
&\left\Vert(1-\beta)\left(\jsc-\jscp\right)\right\Vert_{\infty}\leq \left\Vert c-\altc\right\Vert_{\infty}+ \beta d_{\jsc}(\ct,\cs).
\end{split}
\end{align}
By Theorem \ref{thrm:vanish_main} and Remark \ref{remark:further_subseq}, there exist  \textit{canonical triplets} $(g_{c,\ct}^{*}, h_{c,\ct}^{*}, \gamma^{*}_{c,\ct})$ and 
 $ (g_{\altc,\cs}^{*}, h_{\altc,\cs}^{*}, \gamma^{*}_{\altc,\cs})$, a state $z \in \bx$, and an increasing subsequence  $\{\beta(n); n \geq 1\} \subset (0,1)$   such that $\lim_{n \to \infty}\beta(n) = 1$,  and for all $x \in \bx$,
\begin{align*}
&\lim_{n\to\infty}\left(1-\beta(n)\right)J_{\beta(n)}^{*}(c,\ct)(x)=g_{c,\ct}^{*}=J_{\infty}^{*}(c,\ct), \quad
\lim_{n\to\infty} h_{c,\ct,\beta(n)}(x) = h_{c,\ct}^{*}(x),\\
&\lim_{n\to\infty}\left(1-\beta(n)\right)J_{\beta(n)}^{*}(\altc,\cs)(x)=g_{\altc,\cs}^{*}=J_{\infty}^{*}(\altc,\cs), \quad
\lim_{n\to\infty} h_{\altc,\cs,\beta(n)}(x) = h_{\altc,\cs}^{*}(x).
\end{align*}
where $h_{c,\ct,\beta(n)}(x) := J_{\beta(n)}^{*}(c,\ct)(x) - J_{\beta(n)}^{*}(c,\ct)(z)$ and $h_{\altc,\cs,\beta(n)}(x) := J_{\beta(n)}^{*}(\altc,\cs)(x) - J_{\beta(n)}^{*}(\altc,\cs)(z)$.

Thus, by taking the limit of the first equation in \eqref{aux_ys} along the subsequence  $\{\beta(n); n \geq 1\}$, we immediately obtain the first claim.

Further,  by taking the limit of the second equation in \eqref{aux_ys} along the subsequence, we have
\begin{align*}
 \left\Vert J_{\infty}^{*}(c,\mathcal{T})-J_{\infty}^{*}(\altc,\cs)\right\Vert_{\infty}   \leq  \left\Vert c-\altc\right\Vert_{\infty}+ \limsup_{n \to \infty}  d_{J_{\beta(n)}^{*}(c,\ct)}(\ct,\ctp)  
 = \left\Vert c-\altc\right\Vert_{\infty}+ \limsup_{n \to \infty} \sup_{(x,u) \in \bx \times \bu} \left|\widetilde{\mathcal{I}}_{n}(x,u)\right|,
\end{align*}
where we define
\begin{align*}
   &\widetilde{\mathcal{I}_{n}}(x,u) :=  \int h_{c,\ct,\beta(n)}(x')\ct(dx'|x,u)-\int h_{c,\ct,\beta(n)}(x')\ctp(dx'|x,u), \\
 &\widetilde{\mathcal{I}}_{\infty}(x,u) :=  \int h_{c,\ct}^{*}(x')\ct(dx'|x,u)-\int h_{c,\ct}^{*}(x')\ctp(dx'|x,u).
\end{align*}
By Theorem \ref{thrm:vanish_main}, since Assumption \ref{equicontinuousassump}(c) is imposed for the second claim, we have: (i). $\bx$ is compact; (ii). $\lim_{n \to \infty} \widetilde{\mathcal{I}}_{n} (x,u) =  \widetilde{\mathcal{I}}_{\infty}(x,u)$ for each $(x,u) \in \bx \times \ \bu$; (iii).   $\{|\widetilde{\mathcal{I}}_n|: n \geq 1\}, |\widetilde{\mathcal{I}}_{\infty}|$ are $L \times \widetilde{L}$-Lipschitz functions on $\bx \times \bu$, and thus equicontinuous. Now, by the definition of $\limsup$, we can find a subsequence $\{\kappa_n; n \geq 1\}$ such that 
\begin{align*}
  \limsup_{n \to \infty} \sup_{(x,u) \in \bx \times \bu} \left|\widetilde{\mathcal{I}}_{n}(x,u)\right| = \lim_{n \to \infty} \sup_{(x,u) \in \bx \times \bu} \left|\widetilde{\mathcal{I}}_{\kappa_n}(x,u)\right|.
\end{align*}
Due to (i) and (iii) (which implies uniform boundedness), by Arzel\`a-Ascoli Theorem, we can find a further subsequence $\{\kappa_{\ell_n}; n \geq 1\}$ of $\{\kappa_n: n\geq 1\}$ such that $\widetilde{\mathcal{I}}_{\kappa_{\ell_n}}$ converges to $\widetilde{\mathcal{I}}_{\infty}$ uniformly on $\bx \times \bu$ as $n \to \infty$, which implies
     \begin{align*}
    \lim_{n \to \infty} \sup_{(x,u) \in \bx \times \bu} \left|\widetilde{\mathcal{I}}_{\kappa_n}(x,u)\right| =\lim_{n \to \infty} \sup_{(x,u) \in \bx \times \bu} \left|\widetilde{\mathcal{I}}_{\kappa_{\ell_n}}(x,u)\right| 
= \sup_{(x,u) \in \bx \times \bu}\left|\widetilde{\mathcal{I}}_{\infty}(x,u)\right|
= d_{h_{c,\ct}^{*}}(\ct, \ctp).
\end{align*}
The proof is then complete.
\end{proof}
\begin{remark}
Note that the first claim in Theorem \ref{vanishingproof1} does not require Assumption \ref{equicontinuousassump}(c), while the second claim imposes it for \textit{both} MDPs.
\end{remark}

\subsection{Bounds on Robustness Errors due to Model Misspecification}\label{subsec:robust_performance_lass}
We now use the continuity results in Section \ref{subsec:continuity} to establish our main results: an upper bound on the performance loss incurred by applying a policy that is optimal with respect to an \textit{approximate} MDP. This bound is expressed in terms of the supremum norm between cost functions and two types of distances between the transition kernels: one based on their discrepancy with respect to optimal value functions, and the other given by the uniform Wasserstein-1 distance.

\subsubsection{Discounted-cost Criterion}\label{discountrobustnesssection}

We denote by $\gamma_{\altc,\cs,\beta}^{*}$ the optimal policy for the approximate MDP $(\bx,\bu,\ctp,\altc)$ under the \textit{$\beta$-discounted cost} criterion in this subsection.

\begin{theorem}\label{thrm:discounted_robust}\label{upperbound3proof}
 Suppose that Assumption \ref{basicassump} holds for two MDPs, (reference) $(\bx,\bu,\ct,c)$ and (approximation) $(\bx,\bu,\ctp,\altc)$. Then
\begin{align}
&\left\Vert\jb(c,\ct,\gamma_{\altc,\cs,\beta}^{*})-\jsc\right\Vert_{\infty} \leq \frac{2}{(1-\beta)^{2}}\left\Vert c-\altc\right\Vert_{\infty} +\frac{2\beta }{(1-\beta)^{2}}d_{\jsc}(\ct,\cs),\label{eq:discountedrobust1}\\
&\left\Vert\jb(c,\ct,\gamma_{\altc,\cs,\beta}^{*})-\jsc\right\Vert_{\infty} \leq \frac{2}{1-\beta}\left\Vert c-\altc\right\Vert_{\infty} +\frac{\beta }{1-\beta}\left(d_{\jsc}(\ct,\cs) 
+  d_{J_{\beta}^*(\altc, \ctp)}(\ct,\cs) 
\right),\label{eq:discountedrobust2}\\
&\left\Vert\jb(c,\ct,\gamma_{\altc,\cs,\beta}^{*})-\jsc\right\Vert_{\infty} \leq \frac{2}{1-\beta}\left\Vert c-\altc\right\Vert_{\infty} +\frac{2\beta }{1-\beta}  d_{J_{\beta}^*(\altc, \ctp)}(\ct,\cs). \label{eq:discountedrobust3}
\end{align}
\end{theorem}

\begin{proof}
For ease of notation, we adopt the abbreviations in \eqref{ease_of_notation}, and 
let
$$
\tilde{v}(x):=\jb(c,\ct,\gamma_{\altc,\cs,\beta}^{*})(x) \quad \text{ for } \quad x \in \bx.
$$
By the triangle inequality, we have $\left\Vert \tilde{v}-v_{c,\ct}\right\Vert_{\infty}  \leq \left\Vert \tilde{v}-v_{\altc,\cs}\right\Vert_{\infty}+\left\Vert v_{\altc,\cs}-v_{c,\ct}\right\Vert_{\infty}$. Then, by Theorem \ref{upperbound1proof} and the symmetry between $\ct$ and $\ctp$, we have
\begin{align}
   \label{robust_disc_aux}
   \begin{split}
       &\left\Vert \tilde{v}-v_{c,\ct}\right\Vert_{\infty}  \leq \left\Vert \tilde{v}-v_{\altc,\cs}\right\Vert_{\infty} + \frac{1}{1-\beta}\left\Vert c-\altc\right\Vert_{\infty}+\frac{\beta}{1-\beta}d_{\jsc}(\ct,\cs) \\
    &\left\Vert \tilde{v}-v_{c,\ct}\right\Vert_{\infty}  \leq \left\Vert \tilde{v}-v_{\altc,\cs}\right\Vert_{\infty} + \frac{1}{1-\beta}\left\Vert c-\altc\right\Vert_{\infty}+\frac{\beta}{1-\beta}d_{J_{\beta}^*(\altc, \ctp)}(\ct,\cs).
       \end{split}
\end{align}

Next, we bound $\left\Vert \tilde{v}-v_{\altc,\cs}\right\Vert_{\infty}$. By the discounted-cost Bellman consistency equation, we have:
\begin{align*}
&\tilde{v}(x)=c(x,\gamma_{\altc,\cs,\beta}^{*}(x))+\beta\int \tilde{v}(x')\ct(dx'|x,\gamma_{\altc,\cs,\beta}^{*}(x)),  \text{ for } x \in \bx, \\
&v_{\altc,\cs}(x)=\altc(x,\gamma_{\altc,\cs,\beta}^{*}(x))+\beta\int v_{\altc,\cs}(x')\ctp(dx'|x,\gamma_{\altc,\cs,\beta}^{*}(x)), \text{ for } x \in \bx.
\end{align*}
Taking the difference between them, we obtain
\begin{align}
\begin{split}
|\tilde{v}(x)-\vcn(x)|
&\leq \| c - \hat{c} \|_{\infty} + \beta \left| \int \left(\tilde{v}(x')-v_{\altc,\cs}(x')\right)\ct(dx'|x,\gamma_{\altc,\cs,\beta}^{*}(x))\right| + \beta (I) \\
&\leq \| c - \hat{c} \|_{\infty} + \revise{\beta \left\Vert  \tilde{v}-v_{\hat{c},\cs}\right\Vert_{\infty}}+ \beta (I)
\end{split}
\label{aux:discounted_robustness}
\end{align}
where we define
\begin{align*}
(I) := \left|\int v_{\altc,\cs}(x')\left(\ct(dx'|x,\gamma_{\altc,\cs,\beta}^{*}(x))-\ctp(dx'|x,\gamma_{\altc,\cs,\beta}^{*}(x))\right)\right|.
\end{align*}

\noindent \textbf{Approach 1.} We apply the following bound:
$(I) \leq d_{J_{\beta}^*(\altc, \ctp)}(\ct,\cs)$.
By taking the supremum over $x \in \bx$ \revise{of the inequality \eqref{aux:discounted_robustness}} and  rearranging the above terms, we have
$$
\revise{\left\Vert  \tilde{v}-v_{\hat{c},\cs}\right\Vert_{\infty}} \leq \frac{1}{1-\beta} \|c-\hat{c}\|_{\infty} + \frac{\beta}{1-\beta} d_{J_{\beta}^*(\altc, \ctp)}(\ct,\cs).
$$
Then, in view of \eqref{robust_disc_aux}, the second and third claims follow immediately.

\noindent \textbf{Approach 2.}  Note that by the triangle inequality,
\begin{align*}
    (I) \leq  &\left|\int v_{c,\ct}(x')\left(\ct(dx'|x,\gamma_{\altc,\cs,\beta}^{*}(x))-\ctp(dx'|x,\gamma_{\altc,\cs,\beta}^{*}(x))\right)\right|  \\
&+    \biggl|\int \left(v_{\altc,\cs}(x')-v_{c,\ct}(x')\right)\left(\ct(dx'|x,\gamma_{\altc,\cs,\beta}^{*}(x))-\ctp(dx'|x,\gamma_{\altc,\cs,\beta}^{*}(x))\right)\biggr| \\
\leq & d_{\jsc}(\ct,\cs) + 2 \left\Vert v_{\altc,\cs}  -v_{c,\ct}\right\Vert_{\infty}.
\end{align*}
Now, by applying Theorem \ref{upperbound1proof} to the last term above,  taking the supremum over $x \in \bx$ \revise{of the inequality \eqref{aux:discounted_robustness}}, and  rearranging the above terms, we have
\begin{align*}
   \revise{\left\Vert  \tilde{v}-v_{\hat{c},\cs}\right\Vert_{\infty}} \leq    \frac{1+ \beta}{(1-\beta)^{2}}\left\Vert c-\altc\right\Vert_{\infty}+\frac{\beta + \beta^{2}}{(1-\beta)^{2}}d_{\jsc}(\ct,\cs).
\end{align*}
Then, in view of \revise{the first inequality} of \eqref{robust_disc_aux}, the first claim follows.
\end{proof}

\begin{remark}[Comparison and applicability of the three bounds]\label{rmk:applicability}
The upper bound in \eqref{eq:discountedrobust1} only depends on the optimal value function $\jsc$ under the reference MDP, while \eqref{eq:discountedrobust2} and \eqref{eq:discountedrobust3} also involve the optimal value function $J_{\beta}^*(\altc, \ctp)$ under the approximating MDP. However, \eqref{eq:discountedrobust1} exhibits a worse dependence on $\beta$ compared to the others. In particular, to apply the vanishing discounted factor approach to the average-cost setting, as in Theorem \ref{vanishingproof1}, we consider the following quantity and let $\beta \to 1$ along a subsequence:
\[
(1-\beta)\bigl(\jb(c,\ct,\gamma_{\altc,\cs,\beta}^{*}) - \jsc \bigr).
\]
Therefore, in order for the vanishing discounted factor approach to succeed, the upper bound on 
$\left\Vert \jb(c,\ct,\gamma_{\altc,\cs,\beta}^{*}) - \jsc \right\Vert_{\infty}$
must scale as $1/(1-\beta)$ rather than $1/(1-\beta)^2$.

We require three bounds to cover different estimation settings. In
Section \ref{secSampleCo}, we estimate a general continuous MDP via
state-space quantization and an empirical occupancy measure. In this setting, as we will discuss in detail in Section \ref{finiteMDPApp}, \eqref{eq:discountedrobust1} remains valid when the approximate model is a quantized model, even though a quantized model does not satisfy weak continuity imposed by Assumption \ref{basicassump}. In Section \ref{secDistApprx}, we study the sample complexity of disturbance estimation; there, the estimated transition kernel is weakly continuous. \eqref{eq:discountedrobust2} and \eqref{eq:discountedrobust3} are therefore useful: they have better dependence on the discount factor and, as we show later, allow the vanishing-discount approach to be applied to the average-cost case.
  
Comparing \eqref{eq:discountedrobust2} and \eqref{eq:discountedrobust3}, the regularity of $J_{\beta}^{*}(c,\ct)$ is often stronger than that of $J_{\beta}^{*}(\altc,\ctp)$. Consequently, when the reference model is sufficiently regular, the second bound may be tighter. On the other hand, the third bound can accommodate, for example, a Lipschitz approximation of a non-Lipschitz model.

As mentioned earlier in the review section, compared to some (but not all) of the prior literature, an important attribute of the above bounds is that an optimal control policy can be \textit{discontinuous}. The implication is that robustness analysis only requires (weak) continuity of the model and not that of the policies.
\end{remark}

\begin{remark}
If $\jsc$ and $J_{\beta}^*(\altc, \ctp)$ are Lipschitz with respect to the state variable, by Lemma \ref{trivialinequality}, we have
$$
d_{\jsc}(\ct,\cs) \leq \|\jsc\|_{\mathrm{Lip}} d_{\wc_1}(\ct,\ctp), \quad
d_{J_{\beta}^*(\altc, \ctp)}(\ct,\cs) \leq \left\Vert J_{\beta}^*(\altc, \ctp)\right\Vert_{\mathrm{Lip}} d_{\wc_1}(\ct,\ctp).
$$
Using the loose bounds, the third inequality becomes:
\begin{align*}
    \left\Vert\jb(c,\ct,\gamma_{\altc,\cs,\beta}^{*})-\jsc\right\Vert_{\infty} \leq \frac{2}{1-\beta}\left\Vert c-\altc\right\Vert_{\infty} +\frac{2\beta }{1-\beta}  \left\Vert J_{\beta}^*(\altc, \ctp)\right\Vert_{\mathrm{Lip}} d_{\wc_1}(\ct,\ctp).
\end{align*}
This recovers part (2) of Corollary 1 in \cite{bozkurt2024modelapproximationmdpsunbounded}.
 \end{remark}
  
\subsubsection{Average-cost criterion}\label{averagerobustnesssection}

In this subsection, we consider the average-cost criterion and bound the performance loss incurred by applying a policy that is optimal with respect to an \textit{approximate} MDP. 
 Theorem \ref{upperbound3acoe} follows the minorization approach, assuming both kernels satisfy a minorization condition. Theorem \ref{vanishingproof2} builds on the vanishing-discount approach in Theorem \ref{vanishingproof1}, assuming compactness and Lipschitz regularity conditions. 

We denote by $\gamma_{\altc,\cs}^{*}$ the optimal policy learned under an approximate MDP $(\bx,\bu,\ctp,\altc)$ under the \textit{average}-cost criterion in this subsection. 
We also recall that $h_{c,\ct,\epsilon}^{*}$ is the unique fixed-point of $\mathbb{T}_{c,\ct,\epsilon}$ defined in \eqref{equ:minor_operator}. \revise{
We first state the minorization-based robustness bound.}

\begin{theorem}\label{upperbound3acoe}
 Suppose that Assumption \ref{basicassump} holds for two MDPs, (reference) $(\bx,\bu,\ct,c)$ and (approximation) $(\bx,\bu,\ctp,\altc)$. Further, assume $\ct$ (resp.~$\ctp$) satisfies the minorization condition with a probability measure $\rho$ (resp.~$\tau$) and a constant $\epsilon > 0$. Then
 \begin{align*}
\left\Vert J_{\infty} (c,\ct,\gamma_{\altc,\cs}^{*}) - J_{\infty}^{*}(c,\mathcal{T})\right\Vert_{\infty} 
 \leq  \revise{   \frac{2}{\epsilon} \left(\|c - \altc\|_{\infty} \; + \; d_{h_{c,\ct,\epsilon}^{*}}(\ct, \ctp)\right) \; + \;2 \Delta_{c,\ct,\epsilon}(\rho,\tau)},
 \end{align*}
 where we recall that $\Delta_{c,\ct,\epsilon}(\rho,\tau)$ is defined in \eqref{def:Delta_epsilon_rhotau}. In addition,
  \begin{align*}
\left\Vert J_{\infty} (c,\ct,\gamma_{\altc,\cs}^{*}) - J_{\infty}^{*}(c,\mathcal{T})\right\Vert_{\infty} 
 \leq  \revise{   2\|c - \altc\|_{\infty} \; + \; d_{h_{c,\ct,\epsilon}^{*}}(\ct, \ctp)+ \;d_{h_{\altc,\ctp,\epsilon}^{*}}(\ct, \ctp) + \;2 \epsilon\Delta_{c,\ct,\epsilon}(\rho,\tau)}.
 \end{align*}
\end{theorem}

\begin{proof}
By Theorem \ref{minoracoe} and Remark \ref{remark:g_c_T}, under the minorization condition for both MDPs, there exist canonical triplets $(g_{c,\ct}^{*}, h_{c,\ct,\epsilon}^{*}, \gamma^{*}_{c,\ct})$ and 
 $ (g_{\altc,\cs}^{*}, h_{\altc,\cs,\epsilon}^{*}, \gamma^{*}_{\altc,\cs})$ such that for each $x \in \bx$,
 \begin{align}\label{aux_eq:acoe_reference_approx}
 \begin{split}
 &g^*_{c,\ct}
 +
 h^*_{c,\ct,\epsilon}(x)
 =
 c(x,\gamma^*_{c,\ct}(x))
 +
 \int_{\bx} h^*_{c,\ct,\epsilon}(y)
 \ct(dy\mid x,\gamma^*_{c,\ct}(x)),\\
&
 g^*_{\hat c,\ctp}
 +
 h^*_{\hat c,\ctp,\epsilon}(x)
 =
 \hat c(x,\gamma^*_{\hat c,\ctp}(x))
 +
 \int_{\bx} h^*_{\hat c,\ctp,\epsilon}(y)
 \ctp(dy\mid x,\gamma^*_{\hat c,\ctp}(x)),
  \end{split}
\end{align}
and further
 \begin{align}\label{aux:acoe_minor1}
      h_{\altc,\cs,\epsilon}^{*}(x)=\altc(x,\gamma^{*}_{\revise{\altc,\cs}}(x))+\int_{\bx}h_{\altc,\cs,\epsilon}^{*}(x') \left(\ctp(dx'|x,\gamma^{*}_{\revise{\altc,\cs}}(x)) -\epsilon \tau(dx')\right),\;\forall\;x\in\bx.
 \end{align}
In addition, by Remark \ref{remakr:avg_cost_bell}, there exists a triplet 
 $\left(\tilde{g}, \tilde{h},\gamma_{\altc,\cs}^{*}\right)$ such that for each $x \in \bx$,
\begin{align}\label{aux:acoe_minor2}
\begin{split}
&\tilde{h}(x) =c(x,\gamma_{\altc,\cs}^{*}(x))+\int \tilde{h}(x')\left(\mathcal{T}(dx'|x,\gamma_{\altc,\cs}^{*}(x))-\epsilon\rho(dx')\right),\\
&\tilde{g}=\epsilon\int_{\bx}\tilde{h}(x')\rho(dx') =  J_{\infty} (c,\ct,\gamma_{\altc,\cs}^{*})(x).
\end{split}
\end{align}
Since $g_{c,\ct}^{*}$ is the optimal cost,   we have that for any $x \in \bx$,
\begin{align}\label{aux:acoe_v1}
\begin{split}
        0 \leq \tilde{g}-g_{c,\ct}^{*} &= J_{\infty} (c,\ct,\gamma_{\altc,\cs}^{*})(x)- J_{\infty}^{*}(c,\mathcal{T})(x).
\end{split}
\end{align}

 \revise{
\noindent \textbf{Proof for the first bound.}
By the representations for $\tilde{g}$ and $g_{c,\ct}^{*}$, we have
\begin{align*}
    0 \leq \tilde{g}-g_{c,\ct}^{*} = \epsilon\int_{\bx}\left(\tilde{h}(x')-h_{c,\ct,\epsilon}^{*}(x')\right)\rho(dx'),
\end{align*}
which, by the triangle inequality, implies that
\begin{align}\label{aux:average_a1}
 |\tilde{g}-g_{c,\ct}^{*}| \leq \epsilon \|\tilde{h} - h_{\altc,\ctp,\epsilon}^{*}\|_{\infty} +    \epsilon\|h_{\altc,\ctp,\epsilon}^{*}  -  h_{c,\ct,\epsilon}^{*}\|_{\infty}.
\end{align}
}
  
\revise{We start by focusing on the first term in \eqref{aux:average_a1}}. 
By subtracting the  fixed-point equation in \eqref{aux:acoe_minor2} from \eqref{aux:acoe_minor1}, we obtain  that for $x \in \bx$,
\begin{align*}
&|\tilde{h}(x)-h_{\altc,\cs,\epsilon}^{*}(x)|\leq \left\|c -\altc \right\|_{\infty} + (I) + (II) + (III),
\end{align*}
where we define
\begin{align*}
    &(I) := \left| \int \left(\tilde{h}(x')-h_{\altc,\cs,\epsilon}^{*}(x')\right)\left(\ct(dx'|x,\gsp(x))-\epsilon\rho(dx')\right)\right|, \\
    &(II) := \left|\int (h_{\altc,\cs,\epsilon}^{*}(x') - h_{c,\ct,\epsilon}^{*}(x'))\left(\ct(dx'|x,\gsp(x))-\ctp(dx'|x,\gsp(x))+\epsilon\tau(dx')-\epsilon\rho(dx')\right)\right|,\\
    &(III) := \left|\int h_{c,\ct,\epsilon}^{*}(x')\left(\ct(dx'|x,\gsp(x))-\ctp(dx'|x,\gsp(x))+\epsilon\tau(dx')-\epsilon\rho(dx')\right)\right|.
\end{align*}
By definition, we have
\begin{align*}
 &   (I) \leq (1-\epsilon)\left\Vert \tilde{h}-h_{\altc,\ctp,\epsilon}^{*}\right\Vert_{\infty}, \quad (II) \leq (2-2\epsilon) 
     \|h_{c,\ct,\epsilon}^{*} - h_{\altc,\ctp,\epsilon}^{*}\|_{\infty},\\ 
&    (III) \leq  d_{h_{c,\ct,\epsilon}^{*}}(\ct, \ctp) + \epsilon \Delta_{c,\ct,\epsilon}(\rho,\tau).
\end{align*}
Taking the supremum over all $x\in\bx$, and rearranging terms, we have
\begin{align*}
  &\epsilon \left\Vert \tilde{h}-h_{\altc,\ctp,\epsilon}^{*}\right\Vert_{\infty} \leq   \left\|c -\altc \right\|_{\infty}+(2-2\epsilon) 
     \|h_{c,\ct,\epsilon}^{*} - h_{\altc,\ctp,\epsilon}^{*}\|_{\infty}+d_{h_{c,\ct,\epsilon}^{*}}(\ct, \ctp) + \epsilon  \Delta_{c,\ct,\epsilon}(\rho,\tau).
\end{align*}
\revise{
Applying  \eqref{aux:avg_minor_cont} to bound $\|h_{c,\ct,\epsilon}^{*} - h_{\altc,\ctp,\epsilon}^{*}\|_{\infty}$ and due to \eqref{aux:average_a1}, we have
\begin{align*}
    \revise{|\tilde{g}-g_{c,\ct}^{*}| \leq  \frac{2}{\epsilon} \left(\|c - \altc\|_{\infty} \; + \; d_{h_{c,\ct,\epsilon}^{*}}(\ct, \ctp)\right) \; + \;2 \Delta_{c,\ct,\epsilon}(\rho,\tau)},
\end{align*}
which completes the proof of the first bound.
}

\smallskip

\revise{\noindent \textbf{Proof of the second bound.}
Due to \eqref{aux:acoe_v1}, note  the following decomposition: for each $x \in \bx$,
\begin{align}\label{aux:acoe_v2}
0\leq  \tilde{g}-g_{c,\ct}^{*} = J_{\infty} (c,\ct,\gamma_{\altc,\cs}^{*})(x) -
J_{\infty} (\altc,\ctp,\gamma_{\altc,\cs}^{*})(x) +
J_{\infty}^* (\altc,\ctp)(x)
- J_{\infty}^{*}(c,\mathcal{T})(x),
\end{align}
For the second term in \eqref{aux:acoe_v2}, by Theorem \ref{upperboundacoe1},
\begin{align}\label{aux:from_Theorem25}
   \| J_{\infty}^* (\altc,\ctp)
- J_{\infty}^{*}(c,\mathcal{T})\|_{\infty} \leq 2\epsilon \Delta_{c,\ct,\epsilon}(\rho,\tau) +
          \|c - \altc\|_{\infty} \; + \; d_{h_{c,\ct,\epsilon}^{*}}(\ct, \ctp).
\end{align}
It remains to  bound the first term in \eqref{aux:acoe_v2}. Due to the second equation in \eqref{aux_eq:acoe_reference_approx}, 
\begin{align*}
g^*_{\hat c,\ctp}
 +
 h^*_{\hat c,\ctp,\epsilon}(x)
 =&  c(x,\gamma^*_{\hat c,\ctp}(x))
 +
 \int_{\bx} h^*_{\hat c,\ctp,\epsilon}(y)
 \ct(dy\mid x,\gamma^*_{\hat c,\ctp}(x)) \\
 & +
 (\hat c(x,\gamma^*_{\hat c,\ctp}(x))- c(x,\gamma^*_{\hat c,\ctp}(x)))
 +
 \int_{\bx} h^*_{\hat c,\ctp,\epsilon}(y)
 (\ctp-\ct)(dy\mid x,\gamma^*_{\hat c,\ctp}(x)),
\end{align*}
which implies that
\begin{align*}
   g^*_{\hat c,\ctp}
 +
 h^*_{\hat c,\ctp,\epsilon}(x)
 \geq c(x,\gamma^*_{\hat c,\ctp}(x))
 +
 \int_{\bx} h^*_{\hat c,\ctp,\epsilon}(y)
 \ct(dy\mid x,\gamma^*_{\hat c,\ctp}(x)) - \|c-\hat{c}\|_{\infty} - d_{h_{\hat c,\ctp,\epsilon}^{*}}(\ct, \ctp).
\end{align*}
Now we apply the inequality in Lemma \ref{lemma:comparison} with  
$$\gamma = \gamma_{\altc,\cs}^{*},\quad h =h_{\altc,\ctp,\epsilon}^{*},\quad g = g^*_{\hat c,\ctp} + \|c-\hat{c}\|_{\infty} + d_{h_{\altc,\ctp,\epsilon}^{*}}(\ct, \ctp),$$
and obtain for each $x \in \bx$,
\begin{align*}
    g^*_{\hat c,\ctp} + \|c-\hat{c}\|_{\infty} + d_{h_{\altc,\ctp,\epsilon}^{*}}(\ct, \ctp) \geq J_{\infty} (c,\ct,\gamma_{\altc,\cs}^{*})(x).
\end{align*}
As a result, the first term in \eqref{aux:acoe_v2} can be upper bounded as follows:
\begin{align*}
  J_{\infty} (c,\ct,\gamma_{\altc,\cs}^{*})(x) -
J_{\infty} (\altc,\ctp,\gamma_{\altc,\cs}^{*})(x) \leq   \|c-\hat{c}\|_{\infty} + d_{h_{\altc,\ctp,\epsilon}^{*}}(\ct, \ctp).
\end{align*}
Now combining the bounds for the two terms in  \eqref{aux:acoe_v2}, we have that for each $x \in \bx$,
\begin{align*}
    0 \leq   J_{\infty} (c,\ct,\gamma_{\altc,\cs}^{*})(x)- J_{\infty}^{*}(c,\mathcal{T})(x) 
    \leq 2\epsilon \Delta_{c,\ct,\epsilon}(\rho,\tau) +
          2\|c - \altc\|_{\infty}  + d_{h_{c,\ct,\epsilon}^{*}}(\ct, \ctp) + 
          d_{h_{\altc,\ctp,\epsilon}^{*}}(\ct, \ctp),
\end{align*}
which completes the proof.
}
\end{proof}

\revise{
\begin{remark}
Note that the first bound in Theorem \ref{upperbound3acoe} only involves the relative value function of the reference MDP $(c,\ct)$, while the second bound additionally involves that of the approximating MDP $(\altc,\ctp)$. The advantage of the second bound is that the cost and kernel discrepancy terms no longer carry the
explicit $1/\epsilon$ factor.
\end{remark}
}

\revise{Next, we give the corresponding robustness bound under the vanishing-discount approach. Unlike Theorem \ref{upperbound3acoe}, this result does not require a minorization condition, but instead relies on the compactness and Lipschitz regularity assumptions used in Theorem \ref{vanishingproof1}.
}

\revise{
\begin{theorem}\label{vanishingproof2}
Suppose Assumption \ref{basicassump} and   Assumption \ref{equicontinuousassump}(a)-(b) hold  with the same constants for two MDPs, (reference) $(\bx,\bu,\ct,c)$ and (approximation) $(\bx,\bu,\ctp,\altc)$. Then
\begin{equation*}
\left\Vert J_{\infty}(c,\ct,\gamma_{\altc,\cs}^{*}) - J_{\infty}^{*}(c,\mathcal{T})\right\Vert_{\infty} \leq 2\|c-\altc\|_{\infty}+ 2L d_{\wc_{1}}(\ct,\ctp),
\end{equation*}
where $L$ is the constant in Assumption \ref{equicontinuousassump}. If additionally Assumption \ref{equicontinuousassump}(c) holds for both MDPs, then 
\begin{equation*}
\left\Vert J_{\infty}(c,\ct,\gamma_{\altc,\cs}^{*}) - J_{\infty}^{*}(c,\mathcal{T})\right\Vert_{\infty} \leq 2\|c-\altc\|_{\infty}+ \left(d_{h_{c,\ct}^{*}}(\ct, \ctp) 
+  d_{h_{\altc, \ctp}^{*}}(\ct, \ctp)  
\right),
\end{equation*}
where $h_{c,\ct}^{*}$ and $h_{\altc, \ctp}^{*}$ are defined as in Theorem \ref{thrm:vanish_main}.
\end{theorem}
}
\revise{
\begin{proof}
By Theorem \ref{thrm:vanish_main} and the ACOE, we have the following identity: for $x \in \bx$,
\begin{align*}
&\quad \;c(x,\gamma_{\hat{c},\ctp}^{*}(x))+\int_{\bx}h_{\hat{c},\ctp}^{*}(y)\ct(dy|x,\gamma_{\hat{c},\ctp}^{*}(x))\\&=\hat{c}(x,\gamma_{\hat{c},\ctp}^{*}(x))+\int_{\bx}h_{\hat{c},\ctp}^{*}(y)\ctp(dy|x,\gamma_{\hat{c},\ctp}^{*}(x))\\
&\quad +(c(x,\gamma_{\hat{c},\ctp}^{*}(x))-\hat{c}(x,\gamma_{\hat{c},\ctp}^{*}(x)))+\int_{\bx}h_{\hat{c},\ctp}^{*}(y)(\ct-\ctp)(dy|x,\gamma_{\hat{c},\ctp}^{*}(x))\\
&= g_{\hat{c},\ctp}^{*}+ h_{\hat{c},\ctp}^{*}(x)+(c(x,\gamma_{\hat{c},\ctp}^{*}(x))-\hat{c}(x,\gamma_{\hat{c},\ctp}^{*}(x)))+\int_{\bx}h_{\hat{c},\ctp}^{*}(y)(\ct-\ctp)(dy|x,\gamma_{\hat{c},\ctp}^{*}(x)).
\end{align*}
Then by triangle inequality, we have that for $x \in \bx$,
\begin{align*}
   &c(x,\gamma_{\hat{c},\ctp}^{*}(x))+\int_{\bx}h_{\hat{c},\ctp}^{*}(y)\ct(dy|x,\gamma_{\hat{c},\ctp}^{*}(x)) \\
   \leq  & g_{c,\ct}^{*} + h_{\hat{c},\ctp}^{*}(x)+ |g_{\hat{c},\ctp}^{*}-g_{c,\ct}^{*}| + \left\Vert c-\hat{c}\right\Vert_{\infty}+d_{h_{\hat{c},\ctp}^{*}}(\ct,\ctp).
\end{align*}
If we define
$g_u:=g_{c,\ct}^{*} + |g_{\hat{c},\ctp}^{*}-g_{c,\ct}^{*}| + \left\Vert c-\hat{c}\right\Vert_{\infty}+d_{h_{\hat{c},\ctp}^{*}}(\ct,\ctp)$ and $\tilde{h}(x) := h_{\hat{c},\ctp}^{*}(x)$,
then we have that for $x \in \bx$,
\begin{align*}
    g_u+\tilde{h}(x)\geq c(x,\gamma_{\hat{c},\ctp}^{*}(x))+\int_{\bx}\tilde{h}(y)\ct(dy|x,\gamma_{\hat{c},\ctp}^{*}(x)).
\end{align*}
Furthermore, by Theorem \ref{thrm:vanish_main}, $\tilde{h} := h_{\hat{c},\ctp}^{*}$ is Lipschitz on a compact domain, which implies $\left\Vert \tilde{h}\right\Vert_{\infty}<\infty$. We can therefore invoke Lemma \ref{lemma:comparison} and obtain  for $x \in \bx$,
$$
g_u=g_{c,\ct}^{*}+| g_{\hat{c},\ctp}^{*}-g_{c,\ct}^{*}|+\left\Vert c-\hat{c}\right\Vert_{\infty}+d_{h_{\hat{c},\ctp}^{*}}(\ct,\ctp)\geq J_{\infty}(c,\ct,\gamma_{\hat{c},\ctp}^{*})(x),
$$
which implies that for $x \in \bx$,
\begin{align*}
 J_{\infty}(c,\ct,\gamma_{\altc,\cs}^{*})(x) - J_{\infty}^{*}(c,\mathcal{T})(x) \leq | g_{\hat{c},\ctp}^{*}-g_{c,\ct}^{*}|+\left\Vert c-\hat{c}\right\Vert_{\infty}+d_{h_{\hat{c},\ctp}^{*}}(\ct,\ctp).
\end{align*}

By the definition of optimal value, we have that for $x \in \bx$,
$J_{\infty}(c,\ct,\gamma_{\altc,\cs}^{*})(x) - J_{\infty}^{*}(c,\mathcal{T})(x) \geq 0$.
As a result,
$$
\left\Vert J_{\infty}(c,\ct,\gamma_{\altc,\cs}^{*}) - J_{\infty}^{*}(c,\mathcal{T})\right\Vert_{\infty} \leq | g_{\hat{c},\ctp}^{*}-g_{c,\ct}^{*}|+\left\Vert c-\hat{c}\right\Vert_{\infty}+d_{h_{\hat{c},\ctp}^{*}}(\ct,\ctp).
$$

 By the first upper bound in Theorem \ref{vanishingproof1},
$| g_{\hat{c},\ctp}^{*}-g_{c,\ct}^{*}| \leq \Vert c-\altc\Vert_{\infty}+ L d_{\wc_{1}}(\ct,\ctp)$. By Theorem \ref{thrm:vanish_main}, $h_{\hat{c},\ctp}^{*}$ is $L$-Lipschitz, which, due to Lemma \ref{trivialinequality}, implies that $d_{h_{\hat{c},\ctp}^{*}}(\ct,\ctp) \leq Ld_{\wc_{1}}(\ct,\ctp)$. Thus, we have
$$
\left\Vert J_{\infty}(c,\ct,\gamma_{\altc,\cs}^{*}) - J_{\infty}^{*}(c,\mathcal{T})\right\Vert_{\infty} \leq 2\left\Vert c-\altc\right\Vert_{\infty}+ 2L d_{\wc_{1}}(\ct,\ctp),
$$
which completes the proof of the first claim. Finally, the second claim follows from the second upper bound of $| g_{\hat{c},\ctp}^{*}-g_{c,\ct}^{*}|$ in Theorem \ref{vanishingproof1}.
\end{proof}
}

\subsection{Lipschitz Regularity of optimal value functions}\label{Lipschitzmdp}

When applying the previous results, it is often necessary to show that the optimal (relative) value functions of the reference MDP $(\bx,\bu,\ct,c)$, namely $J_{\beta}^{*}(c,\ct)$ and $h_{c,\ct,\epsilon}^{*}$ defined in \eqref{equ:minor_operator}, are Lipschitz continuous and to bound their Lipschitz constants. In this section, we present sufficient conditions for this property. To our knowledge, these results were first established in \cite{hinderer2005}, with a main result also appearing in \cite{SaLiYuSpringer}; for the average-cost case, see \cite{demirci2023average}.

\begin{lemma}[Theorem 4.37 of \cite{SaLiYuSpringer}]\label{lipbellman}
For an MDP $(\bx,\bu,\ct,c)$ and a discount factor $\beta < 1$ satisfying Assumption \ref{myassump}, $J_{\beta}^{*}(c,\ct): \bx \to \mathbb{R}$ is $\frac{\clip}{1-\beta\tlip}$-Lipschitz with respect to the state variable.
\end{lemma}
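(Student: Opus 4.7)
The plan is to exploit the contraction structure of the Bellman operator $\mathbb{T}$ on $(C_{b}(\bx), \|\cdot\|_{\infty})$ and to show that value iteration starting from any Lipschitz seed remains in the set of Lipschitz functions, with Lipschitz norms controlled by a geometric series whose limit is exactly $K := \clip/(1-\beta\tlip)$. Since $J_{\beta}^{*}(c,\ct)$ is the unique fixed point of $\mathbb{T}$ (Theorem \ref{dcoetheorem}) and is reached as the sup-norm limit of iterates, pointwise passage to the limit will transfer the Lipschitz bound to the fixed point.

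The first step is to establish the one-step Lipschitz propagation bound
\[
\|\mathbb{T}v\|_{\text{Lip}} \;\leq\; \clip + \beta\tlip\,\|v\|_{\text{Lip}}, \qquad v \in C_{b}(\bx).
\]
To prove this, fix $x,y \in \bx$, write $f_{z}(u) := c(z,u) + \beta \int v(x')\,\ct(dx'|z,u)$, and use the elementary inequality $|\inf_{u} f_{x}(u) - \inf_{u} f_{y}(u)| \leq \sup_{u} |f_{x}(u) - f_{y}(u)|$. Assumption \ref{myassump}(a) bounds the cost part by $\clip\, d_{\bx}(x,y)$. For the kernel part, the Kantorovich--Rubinstein duality and Assumption \ref{myassump}(b) give
\[
\left|\int v\,\ct(dx'|x,u) - \int v\,\ct(dx'|y,u)\right| \leq \|v\|_{\text{Lip}}\, \wc_{1}\bigl(\ct(\cdot|x,u), \ct(\cdot|y,u)\bigr) \leq \|v\|_{\text{Lip}}\,\tlip\, d_{\bx}(x,y).
\]

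Next, I iterate. Take $v_{0} \equiv 0$ and define $v_{n+1} := \mathbb{T}v_{n}$. Since $\|v_{0}\|_{\text{Lip}} = 0$, the recursion yields by induction
\[
\|v_{n}\|_{\text{Lip}} \;\leq\; \clip \sum_{k=0}^{n-1}(\beta\tlip)^{k} \;\xrightarrow{n \to \infty}\; \frac{\clip}{1-\beta\tlip},
\]
where the convergence of the geometric series is guaranteed by Assumption \ref{myassump}(c), which asserts $\beta\tlip < 1$. By the Banach fixed-point theorem applied to the contraction $\mathbb{T}$ on $(C_{b}(\bx), \|\cdot\|_{\infty})$, we have $\|v_{n} - J_{\beta}^{*}(c,\ct)\|_{\infty} \to 0$, and in particular $v_{n}(x) \to J_{\beta}^{*}(c,\ct)(x)$ pointwise.

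Finally, for any $x,y \in \bx$, pointwise convergence together with the uniform Lipschitz bound yields
\[
|J_{\beta}^{*}(c,\ct)(x) - J_{\beta}^{*}(c,\ct)(y)| = \lim_{n \to \infty} |v_{n}(x) - v_{n}(y)| \leq \limsup_{n \to \infty} \|v_{n}\|_{\text{Lip}}\, d_{\bx}(x,y) \leq \frac{\clip}{1-\beta\tlip}\, d_{\bx}(x,y),
\]
which is the desired Lipschitz estimate. There is no substantial obstacle here; the only subtle points are handling the infimum in the Bellman operator (standard) and ensuring the geometric series converges, which is precisely the role played by Assumption \ref{myassump}(c). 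Equivalently, one may phrase the argument as observing that the closed subset $\{v \in C_{b}(\bx): \|v\|_{\text{Lip}} \leq K\}$ is invariant under $\mathbb{T}$ and closed in sup-norm, so the unique Banach fixed point must lie in it.
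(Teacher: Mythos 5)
Your proof is correct and is essentially the standard argument behind the cited result (Theorem 4.37 of \cite{SaLiYuSpringer}, going back to \cite{hinderer2005}): the one-step propagation bound $\|\mathbb{T}v\|_{\text{Lip}} \leq \clip + \beta\tlip\|v\|_{\text{Lip}}$ via Kantorovich--Rubinstein duality, the geometric series under $\beta\tlip<1$, and passage to the fixed point through sup-norm convergence of value iteration (equivalently, invariance of the closed set $\{\|v\|_{\text{Lip}}\leq \clip/(1-\beta\tlip)\}$). The paper itself states this lemma by citation without reproducing a proof, so there is nothing to contrast; your argument fills in exactly the expected details.
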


By combining Lemma \ref{lipbellman} with Theorem \ref{upperbound1proof} and Theorem \ref{thrm:discounted_robust}, we immediately have the following corollary for the $\beta$-discounted cost case.

\begin{corollary}\label{upperbound3cor}
Consider two MDPs, (reference) $(\bx,\bu,\ct,c)$ and (approximation) $(\bx,\bu,\ctp,\altc)$. Suppose that the reference MDP satisfies Assumption \ref{myassump}, and that the approximate MDP satisfies Assumption \ref{basicassump}. Then
\begin{equation*}
\begin{aligned}
&\left\Vert\jsc-\jscp\right\Vert_{\infty}\leq \frac{1}{1-\beta}\left\Vert c-\altc\right\Vert_{\infty}+\frac{\beta\clip}{(1-\beta)(1-\beta\tlip)}d_{\wc_{1}}(\ct,\ctp),\\
&\left\Vert\jb(c,\ct,\gamma_{\altc,\cs,\beta}^{*})-\jsc\right\Vert_{\infty} \leq  \frac{2}{(1-\beta)^{2}}\left\Vert c-\altc\right\Vert_{\infty}+\frac{2\beta\clip}{(1-\beta)^{2}(1-\beta\tlip)}d_{\wc_{1}}(\ct,\ctp).
\end{aligned}
\end{equation*}
If, in addition, the approximate MDP satisfies Assumption \ref{myassump}, then
     \begin{equation*}
\left\Vert\jb(c,\ct,\gamma_{\altc,\cs,\beta}^{*})-\jsc\right\Vert_{\infty} \leq  \frac{2}{1-\beta}\left\Vert c-\altc\right\Vert_{\infty}+\frac{\beta}{1-\beta}\left(\frac{\clip}{1-\beta\Vert \ct\Vert_{\mathrm{Lip}}}+\frac{\Vert\altc\Vert_{\mathrm{Lip}(\bx)}}{1-\beta\Vert\ctp\Vert_{\mathrm{Lip}}}\right)d_{\wc_{1}}(\ct,\ctp).
\end{equation*}
\end{corollary}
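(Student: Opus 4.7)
The plan is to obtain each of the three inequalities as a direct specialization of the model-continuity and robustness theorems already established, with the Lipschitz norms of the relevant optimal value functions controlled via Lemma \ref{lipbellman}. No new estimate is needed; the role of Assumption \ref{myassump} is solely to convert the abstract bounds ($d_{\jsc}(\ct,\cs)$ type) into bounds in terms of the uniform Wasserstein-$1$ distance $d_{\wc_{1}}(\ct,\ctp)$.

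For the first inequality, I would apply the second (Lipschitz) form of Theorem \ref{upperbound1proof}, which gives
\[
\left\Vert\jsc-\jscp\right\Vert_{\infty}\leq \frac{1}{1-\beta}\|c-\altc\|_{\infty}+\frac{\beta}{1-\beta}\left\Vert \jsc\right\Vert_{\textup{Lip}} d_{\wc_{1}}(\ct,\ctp),
\]
and then invoke Lemma \ref{lipbellman} applied to the reference MDP (which satisfies Assumption \ref{myassump}) to substitute $\|\jsc\|_{\textup{Lip}}\leq \clip/(1-\beta\tlip)$.

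For the second inequality, I would start from the first bound in Theorem \ref{thrm:discounted_robust},
\[
\left\Vert\jb(c,\ct,\gamma_{\altc,\cs,\beta}^{*})-\jsc\right\Vert_{\infty} \leq \frac{2}{(1-\beta)^{2}}\|c-\altc\|_{\infty} +\frac{2\beta }{(1-\beta)^{2}}d_{\jsc}(\ct,\cs),
\]
which only involves the value function of the reference MDP, then replace $d_{\jsc}(\ct,\cs)$ by $\|\jsc\|_{\textup{Lip}} d_{\wc_{1}}(\ct,\ctp)$ via Lemma \ref{trivialinequality}, and again apply Lemma \ref{lipbellman} to the reference MDP.

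For the third inequality, the additional Assumption \ref{myassump} on the approximating MDP ensures that $J_{\beta}^*(\altc, \ctp)$ is also Lipschitz. I would then use the second bound in Theorem \ref{thrm:discounted_robust}, which is
\[
\left\Vert\jb(c,\ct,\gamma_{\altc,\cs,\beta}^{*})-\jsc\right\Vert_{\infty} \leq \frac{2}{1-\beta}\|c-\altc\|_{\infty} +\frac{\beta }{1-\beta}\bigl(d_{\jsc}(\ct,\cs) + d_{J_{\beta}^*(\altc,\ctp)}(\ct,\cs)\bigr),
\]
and bound each of the two discrepancy terms by Lemma \ref{trivialinequality}, using Lemma \ref{lipbellman} separately on each MDP to get $\|\jsc\|_{\textup{Lip}}\leq \clip/(1-\beta\tlip)$ and $\|J_{\beta}^*(\altc,\ctp)\|_{\textup{Lip}}\leq \|\altc\|_{\textup{Lip}}/(1-\beta\|\ctp\|_{\textup{Lip}})$.

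Since everything reduces to plugging known bounds into known inequalities, there is no substantive obstacle; the only subtlety worth flagging is that the third bound requires both MDPs to be Wasserstein regular so that Lemma \ref{lipbellman} can be applied to $J_{\beta}^*(\altc,\ctp)$, whereas the first two bounds only use the Lipschitz regularity of the reference model.
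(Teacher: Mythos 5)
Your proposal is correct and follows exactly the paper's route: the corollary is obtained by plugging the Lipschitz bounds from Lemma \ref{lipbellman} (via Lemma \ref{trivialinequality}) into the second form of Theorem \ref{upperbound1proof} and into the first and second bounds of Theorem \ref{thrm:discounted_robust}, respectively. Your observation that only the third inequality requires Wasserstein regularity of the approximating MDP is also precisely why the paper states that hypothesis separately.
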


We now present results for the average-cost criterion case.

\begin{assumption}\label{acoelipassump}
In addition to Assumption \ref{myassump}, assume $\tlip<1$.
\end{assumption}

The following is an analog of Lemma \ref{lipbellman} for the average-cost criterion; see  \cite[Lemma 2.4]{demirci2023average} or \cite[Theorem 3.5]{ky2023qaverage}. Notably, the Lipschitz constant upper bound is independent of the scaling constant $\epsilon$ of the minorant.

\begin{lemma}\label{acoelip}
Consider an MDP $(\bx,\bu,\ct,c)$ satisfying Assumption \ref{acoelipassump}. Assume that $\ct$ satisfies the minorization condition with a probability measure $\rho$ and a constant $\epsilon > 0$. Then $h_{c,\ct,\epsilon}^{*}:\bx \to \mathbb{R}$ as defined in \eqref{equ:minor_operator} is $\frac{\clip}{1-\tlip}$-Lipschitz  with respect to the state variable.
\end{lemma}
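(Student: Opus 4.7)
The plan is to exploit that, under the minorization condition, $h_{c,\ct}^{*}$ is the unique fixed point of the contraction $\mathbb{T}_{c,\ct}$ on $(C_b(\bx),\|\cdot\|_\infty)$. I would realize $h_{c,\ct}^{*}$ as the uniform limit of the Picard iterates $v^{(k+1)} := \mathbb{T}_{c,\ct}\, v^{(k)}$ starting from $v^{(0)} \equiv 0$, and show by induction that each $v^{(k)}$ is Lipschitz with constant $L_k$ obeying the recurrence $L_{k+1} = \clip + \tlip\, L_k$, with $L_0 = 0$. Since $\tlip < 1$ by Assumption \ref{acoelipassump}, $L_k$ will increase monotonically to $L^{*} := \clip/(1-\tlip)$, and passing to the pointwise limit in the Lipschitz estimate for $v^{(k)}$ transfers the bound $L^{*}$ to $h_{c,\ct}^{*}$.

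The inductive step is the only real computation. Fix $x,y\in\bx$ and suppose $v^{(k)}$ is $L_k$-Lipschitz. In the difference $\mathbb{T}_{c,\ct}v^{(k)}(x) - \mathbb{T}_{c,\ct}v^{(k)}(y)$ the signed-measure term $-\epsilon\int v^{(k)} d\rho$ does not depend on the state, so it cancels; this cancellation is precisely what lets the Lipschitz bound avoid picking up the contraction factor $(1-\epsilon)$. Then I would apply the standard inequality $|\inf_u f(x,u) - \inf_u f(y,u)| \leq \sup_u |f(x,u)-f(y,u)|$ and, uniformly in $u\in\bu$, bound (i) the cost difference by $\clip\, d_\bx(x,y)$ via Assumption \ref{myassump}(a), and (ii) the kernel-discrepancy integral against $v^{(k)}$ by $L_k\, \wc_{1}(\ct(\cdot|x,u),\ct(\cdot|y,u)) \leq L_k\,\tlip\, d_\bx(x,y)$ via the Kantorovich--Rubinstein duality combined with Assumption \ref{myassump}(b). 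Combining the two yields $|v^{(k+1)}(x) - v^{(k+1)}(y)| \leq (\clip + L_k \tlip)\, d_\bx(x,y)$, closing the induction.

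Solving the recurrence gives $L_k = \clip \sum_{j=0}^{k-1} \tlip^{\,j} \nearrow \clip/(1-\tlip)$. Since Banach's fixed-point theorem ensures $v^{(k)} \to h_{c,\ct}^{*}$ uniformly (hence pointwise), taking $k\to\infty$ in the inequality $|v^{(k)}(x)-v^{(k)}(y)|\leq L_k\, d_\bx(x,y)$ delivers the desired bound. I do not expect a serious obstacle: beyond the cancellation observation and the Kantorovich--Rubinstein step, one only needs to verify that the iterates remain in $C_b(\bx)$ so that the contraction framework of Remark \ref{remark:g_c_T} applies, which is routine under Assumption \ref{basicassump}.
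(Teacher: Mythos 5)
Your proof is correct: the paper itself does not prove this lemma but defers to \cite{demirci2023average} and \cite{ky2023qaverage}, and your Picard-iteration argument (with the key observation that the state-independent term $\epsilon\int v\,d\rho$ cancels, so the per-step Lipschitz recursion is $L_{k+1}=\clip+\tlip L_k$ rather than picking up the $(1-\epsilon)$ contraction factor) is exactly the standard route used there and is the natural analogue of the value-iteration proof of Lemma \ref{lipbellman}. All steps check out, including the Kantorovich--Rubinstein bound on the kernel term and the passage of the uniform Lipschitz bound to the uniform limit $h_{c,\ct}^{*}$.
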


By combining Lemma \ref{acoelip} with Theorem \ref{upperboundacoe1} and Theorem \ref{upperbound3acoe}, we immediately have the following corollaries for the average-cost case under minorization conditions.

\begin{corollary}\label{cor:average_cost_final}
Consider two MDPs, (reference) $(\bx,\bu,\ct,c)$ and (approximation) $(\bx,\bu,\ctp,\altc)$. Suppose that the reference satisfies Assumption \ref{acoelipassump}, and the approximation satisfies Assumption \ref{basicassump}. Further, assume $\ct$ (resp.~$\ctp$) satisfies the minorization condition with a probability measure $\rho$ (resp.~$\tau$) and a constant $\epsilon > 0$. Then
\begin{align*}
 &       \left\Vert J_{\infty}^{*}(c,\mathcal{T})-J_{\infty}^{*}(\altc,\cs)\right\Vert_{\infty} \leq \left\Vert c-\altc\right\Vert_{\infty} \;+\;\;  \frac{\clip}{1-\tlip}  d_{\wc_{1}}(\ct,\ctp). \\
&\left\Vert J_{\infty}(c,\ct,\gamma_{\altc,\cs}^{*}) - J_{\infty}^{*}(c,\mathcal{T})\right\Vert_{\infty} 
 \leq \revise{\frac{2}{\epsilon}} \left(\|c - \altc\|_{\infty} \; + \;  \frac{\clip}{1-\tlip}  d_{\wc_{1}}(\ct,\ctp) \right)+  \frac{ \revise{2} \clip}{1-\tlip}  d_{\wc_{1}}(\rho, \tau) .
\end{align*}
\end{corollary}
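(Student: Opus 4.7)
The plan is to obtain both bounds by a direct substitution, combining three earlier ingredients in the paper: Theorem \ref{upperboundacoe1} and Theorem \ref{upperbound3acoe} supply the abstract bounds in terms of $d_{h_{c,\ct}^{*}}$-type discrepancies, Lemma \ref{acoelip} converts the (unknown) quantity $\hlip$ into the explicit constant $\clip/(1-\tlip)$ under Assumption \ref{acoelipassump}, and Lemma \ref{trivialinequality} converts $d_{h_{c,\ct}^{*}}$-discrepancies into Wasserstein-1 distances. Before applying these, I would verify the hypotheses: the reference MDP satisfies Assumption \ref{acoelipassump} (hence $\tlip < 1$ and $c$ is $\clip$-Lipschitz in $x$); the approximate MDP satisfies Assumption \ref{basicassump}; and $\ct,\ctp$ satisfy the minorization condition with common constant $\epsilon$. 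This is exactly what is needed for Lemma \ref{acoelip} to apply to $h_{c,\ct}^{*}$, and for Theorems \ref{upperboundacoe1} and \ref{upperbound3acoe} to be invoked.

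For the first inequality, I would invoke the second claim of Theorem \ref{upperboundacoe1}, which is only valid when $h_{c,\ct}^{*}$ is Lipschitz. Lemma \ref{acoelip} provides this with the explicit bound $\hlip \leq \clip/(1-\tlip)$, and substituting into Theorem \ref{upperboundacoe1} immediately yields
\begin{equation*}
\left\Vert J_{\infty}^{*}(c,\mathcal{T})-J_{\infty}^{*}(\altc,\cs)\right\Vert_{\infty} \leq \|c-\altc\|_{\infty} + \frac{\clip}{1-\tlip}\, d_{\wc_{1}}(\ct,\ctp).
\end{equation*}
For the second inequality, I would begin from the statement of Theorem \ref{upperbound3acoe}, apply Lemma \ref{trivialinequality} to obtain $d_{h_{c,\ct}^{*}}(\ct,\ctp) \leq \hlip\, d_{\wc_{1}}(\ct,\ctp)$ and (using the Kantorovich-Rubinstein identity, with the obvious extension of the notation $d_{\wc_1}$ from kernels to probability measures) $d_{h_{c,\ct}^{*}}(\rho,\tau) \leq \hlip\, d_{\wc_{1}}(\rho,\tau)$, and then replace $\hlip$ by $\clip/(1-\tlip)$ using Lemma \ref{acoelip}. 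The algebraic identity $(2+\epsilon) = \frac{2+\epsilon}{\epsilon}\cdot \epsilon$ then absorbs the trailing term into the common prefactor $\frac{2+\epsilon}{\epsilon}$, producing exactly the displayed expression.

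There is no essential obstacle: once Lemma \ref{acoelip} is available as a Lipschitz-regularity guarantee for $h_{c,\ct}^{*}$, the argument is mechanical. The only minor care needed is the notational extension of $d_{\wc_{1}}$ to probability measures (where it reduces to $\wc_{1}$) in the term involving $\rho$ and $\tau$, and bookkeeping of which MDP's data (the reference) is used in Lemma \ref{acoelip}—the approximate MDP is not assumed to satisfy the stronger Assumption \ref{acoelipassump}, so no Lipschitz regularity of $h_{\altc,\ctp}^{*}$ is required, which is consistent with the fact that the right-hand side of Theorem \ref{upperbound3acoe} only involves $h_{c,\ct}^{*}$.
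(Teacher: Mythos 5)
Your proposal is correct and follows exactly the route the paper intends: the paper states that the corollary follows "by combining Lemma \ref{acoelip} with Theorem \ref{upperboundacoe1} and Theorem \ref{upperbound3acoe}," which is precisely your substitution of $\hlip \leq \clip/(1-\tlip)$ into those two theorems via Lemma \ref{trivialinequality}. Your bookkeeping—checking that the minorization hypothesis is available for Lemma \ref{acoelip}, that only the reference MDP's relative value function needs Lipschitz regularity, and that $(2+\epsilon)=\frac{2+\epsilon}{\epsilon}\cdot\epsilon$ absorbs the $\rho,\tau$ term—matches the stated bounds exactly.
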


Now, we consider the vanishing discounted factor approach under the average-cost criterion. By Lemma \ref{lipbellman} and due to Assumption \ref{acoelipassump}, the Assumption \ref{equicontinuousassump}(b) holds with $L = \frac{\clip}{1-\tlip}$. The same discussion applies 
to the approximate kernel $\ctp$.
Then, due to  Theorem \ref{vanishingproof1}  and Theorem \ref{vanishingproof2}, we have the following corollary.

\begin{corollary}\label{drivingnoisecorollary2}
Suppose Assumption \ref{acoelipassump} holds for two MDPs, (reference) $(\bx,\bu,\ct,c)$ and (approximation) $(\bx,\bu,\ctp,\altc)$. 
In addition, assume $\bx$ is compact. Then
\begin{align*}
&\left\Vert J_{\infty}^{*}(c,\mathcal{T})-J_{\infty}^{*}(\altc,\cs)\right\Vert_{\infty}\leq \left\Vert c-\altc\right\Vert_{\infty}+ \frac{\clip}{1-\tlip} d_{\wc_{1}}(\ct,\ctp); \\
&\revise{\left\Vert J_{\infty}(c,\ct,\gamma_{\altc,\cs}^{*}) - J_{\infty}^{*}(c,\mathcal{T})\right\Vert_{\infty} \leq 2\|c-\altc\|_{\infty} +\left(\frac{\clip}{1-\tlip} +\frac{\Vert\altc\Vert_{\mathrm{Lip}(\bx)}}{1-\Vert\ctp\Vert_{\mathrm{Lip}}}\right)d_{\wc_{1}}(\ct,\ctp).}
\end{align*}
\end{corollary}

\section{Application to Model Learning From Data and Sample Complexity}\label{secSampleCo}

In this section, we introduce a general learning framework in which the model itself is learned from data. We begin by reviewing model approximation with finite model representations, noting that some cases fall within the scope of our general robustness results. We then propose learning algorithms to estimate the quantized model from data, establishing sample complexity bounds. The generality of these results appears to be novel in the literature.

\subsection{Quantized Approximations}\label{finiteMDPApp}

In this subsection, we briefly review the state quantization scheme introduced, to the best of our knowledge, in \cite[Section 3]{saldi2017} (see also \cite[Section 2.3]{KSYContQLearning}). These results will also be useful when developing two general learning algorithms in the following subsections. We demonstrate that the near-optimality of such quantization,  as shown for the discounted-cost problem in Theorem 6 of \cite{KSYContQLearning} (which slightly refines \cite[Theorem 4.38]{SaLiYuSpringer}) and for the average-cost problem in Theorem 3.5 of \cite{ky2023qaverage}, can be considered as special cases of Corollary \ref{upperbound3cor}  and Corollary  \ref{cor:average_cost_final}, respectively. For simplicity, we assume that $\bu$ is a finite set. Extensions to compact action spaces can be
obtained via action-space quantization under additional regularity conditions; see Chapter 3 of \cite{SaLiYuSpringer}.

Consider a state space $\bx$ and an $M$-partition of it, $\{B_{i}\}_{i=1}^{M}$. For each partition $B_{i}$, we pick some representative element $y_{i}\in B_{i}$. A quantizer is then a map $q:\;\bx\to \{y_{i}\}_{i=1}^{M}$, where 
$$q(x)=y_{i} \quad \text{ if and only if } \quad 
x\in B_{i}.
$$

Given a weighting measure $\pi\in\mathcal{P}(\bx)$,  assuming $\pi(B_i) > 0$ for $i \in [M] = \{1,\dots,M\}$, 
we define a new finite state MDP, $(\{y_{i}\}_{i=1}^{M},\bu,\ct^{M,\pi},c^{\pi,M})$, where $c^{\pi,M}$ and $\ct^{M,\pi}$ are defined as follows: for $i,j \in [M]$ and $u \in \bu$,
\begin{align}\label{finitemodeldef}
c^{\pi,M}(y_{i},u):=   \frac{\int_{B_i} c(x,u) \pi(dx) }{    \int_{B_i} \pi(dx)}, \quad
\ct^{\pi,M}(y_{j}|y_{i},u):=\frac{\int_{B_{i}}\ct(B_{j}|x,u){\pi} (dx)}{ \int_{B_i} \pi(dx)}.
\end{align}
We denote by $\gamma^{\pi, M}_{\beta}$ and $\gamma^{\pi, M}_{\infty}$
the optimal policy for the $\beta$-discounted  and the average-cost MDP $(\{y_{i}\}_{i=1}^{M},\bu, {\ct}^{\pi,M},{c}^{\pi,M})$, respectively. Next, we define the piecewise constant extension of such a finite model to the original state space as in \cite[Section 3]{saldi2017}.

\begin{definition}[piecewise constant extension I]\label{def:piece-wise-extension-I}
For $1 \leq i \leq M$,  $x \in B_i$, $u \in \bu$,
and $A \in  \cF_{\bx}$, we define
$$
\overline{c^{\pi,M}}(x,u) := c^{\pi,M}(y_{i},u), \quad
\overline{\ct^{\pi,M}}(A|x,u) := \int_{B_i} \ct(A|z,u) {\pi}(dz)/\pi(B_i),
$$
and further
$\overline{\gamma^{\pi, M}_{\beta}}(x) := \gamma^{\pi, M}_{\beta}(y_i), \quad
\overline{\gamma^{\pi, M}_{\infty}}(x) := \gamma^{\pi, M}_{\infty}(y_i)$.
\end{definition}

As discussed in \cite[Section 3]{saldi2017}, $\overline{\gamma^{\pi, M}_{\beta}}$ and $\overline{\gamma^{\pi, M}_{\infty}}$ are 
the optimal policies for the $\beta$-discounted  and the average-cost MDP $(\bx,\bu,\overline{\ct^{\pi,M}},\overline{c^{\pi,M}})$, respectively.

We note that the transition kernel under piecewise constant extension is, in general, not weakly continuous, so Assumption \ref{basicassump} does not apply to the approximate model. As a remedy, for the discounted problem, we adopt Assumption 2.1 of \cite{HernndezLerma1989}, which automatically holds for the piecewise constant model under Assumption \ref{basicassump} with a finite action set $\bu$. For the average-cost problem, in addition to Assumption 2.1 of \cite{HernndezLerma1989}, Assumption  3.1(4) of \cite{HernndezLerma1989} also holds if the original kernel $\ct$ is minorized as in Definition \ref{minordef}. 
Therefore, we can invoke optimality equations for the piecewise constant model for both problems, which means that all our results hold when using the piecewise constant model as an approximation. We further define the  quantization error:
\begin{align}\label{diameter_def}
    \delta_M := \max_{1 \leq i \leq M} \sup_{x,x' \in B_i} d_{\bx}(x,x').
\end{align}
The following approximation error is noted in \cite{ky2023qaverage}:
\begin{lemma}[Lemma 3.3, \cite{ky2023qaverage}]\label{piecewiseconstantextensionerror}
Under Assumption \ref{myassump}(a) and (b), 
\begin{align*}
\left\Vert c-\overline{c^{\pi,M}}\right\Vert_{\infty}\leq \clip \delta_{M},\quad
d_{\wc_{1}}(\ct,\overline{\ct^{\pi,M}})\leq \tlip \delta_{M}.
\end{align*}
\end{lemma}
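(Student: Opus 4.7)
The plan is to prove both inequalities by direct pointwise estimates at an arbitrary $x \in B_i$ and $u \in \bu$, exploiting the fact that both $\overline{c^{\pi,M}}(x,u)$ and $\overline{\ct^{\pi,M}}(\cdot|x,u)$ are defined as $\pi$-averages of $c(z,u)$ and $\ct(\cdot|z,u)$ over $z \in B_i$. Since $d_{\bx}(x,z) \leq \delta_M$ for any $x,z$ in the same partition cell $B_i$ by \eqref{diameter_def}, the Lipschitz hypotheses in Assumption \ref{myassump}(a)-(b) will transfer directly into the two stated bounds.

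For the first inequality, I would fix $i \in \{1,\dots,M\}$, $x \in B_i$, and $u \in \bu$, and rewrite
\begin{align*}
c(x,u) - \overline{c^{\pi,M}}(x,u) \;=\; \frac{1}{\pi(B_i)} \int_{B_i} \bigl( c(x,u) - c(z,u) \bigr)\, \pi(dz),
\end{align*}
then bound the integrand by $|c(x,u) - c(z,u)| \leq \clip\, d_{\bx}(x,z) \leq \clip\, \delta_M$ using Assumption \ref{myassump}(a). Taking absolute values and then the supremum over $(x,u)$ yields $\|c - \overline{c^{\pi,M}}\|_{\infty} \leq \clip\, \delta_M$.

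For the second inequality, I would invoke the Kantorovich--Rubinstein duality from the remark following the Wasserstein-$p$ definition and work with an arbitrary test function $f:\bx\to\bR$ with $\flip \leq 1$. For $x \in B_i$ and $u \in \bu$, a Fubini exchange gives
\begin{align*}
\left| \int f(y)\, \ct(dy|x,u) - \int f(y)\, \overline{\ct^{\pi,M}}(dy|x,u) \right|
\;\leq\; \frac{1}{\pi(B_i)} \int_{B_i} \left| \int f(y)\, \ct(dy|x,u) - \int f(y)\, \ct(dy|z,u) \right| \pi(dz).
\end{align*}
Applying Kantorovich--Rubinstein to the inner integrand bounds it by $\wc_1\bigl(\ct(\cdot|x,u), \ct(\cdot|z,u)\bigr)$, which Assumption \ref{myassump}(b) controls by $\tlip\, d_{\bx}(x,z) \leq \tlip\, \delta_M$. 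Supremum over $f$, $x$, and $u$ then yields $d_{\wc_1}(\ct, \overline{\ct^{\pi,M}}) \leq \tlip\, \delta_M$.

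The argument is essentially bookkeeping and I do not foresee a substantive obstacle. The only minor care point is the Fubini exchange for the second bound; this is legitimate because, without loss of generality, $f$ may be replaced by $f - f(x_0)$ for a fixed reference $x_0 \in B_i$, which is still $1$-Lipschitz, does not alter the two $f$-integrals (both measures being probability measures), and is bounded on the sets relevant to the estimate under the Lipschitz/diameter control.
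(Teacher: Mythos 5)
Your proof is correct; the paper itself omits the argument and simply cites Lemma 3.3 of the external reference, and your two pointwise estimates (Lipschitz averaging over a cell for the cost, Kantorovich--Rubinstein plus the kernel's Wasserstein-Lipschitz property for the transition kernel) are exactly the standard argument behind that result. The care you take with the Fubini exchange and the $f \mapsto f - f(x_0)$ normalization is appropriate and resolves the only potential technical issue on a non-compact state space.
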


 Combining Corollary \ref{upperbound3cor} and Lemma \ref{piecewiseconstantextensionerror}, we immediately recover the following result for the $\beta$-discounted cost case,  as stated in Theorem 6 of \cite{KSYContQLearning}.

\begin{corollary}[Theorem 6, \cite{KSYContQLearning}]
Under Assumption   \ref{myassump}, 
\begin{equation*}
\left\Vert\jb(c,\ct,\overline{\gamma^{\pi,M}_{\beta}})-\jsc\right\Vert_{\infty}\leq \frac{2\clip}{(1-\beta)^{2}(1-\beta\tlip)}\delta_{M}.
\end{equation*}
\end{corollary}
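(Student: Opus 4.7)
The plan is to invoke the robustness (second) inequality of Corollary~\ref{upperbound3cor} with the approximate MDP taken to be the piecewise constant extension $(\bx,\bu,\overline{\ct^{\pi,M}},\overline{c^{\pi,M}})$, and then substitute the explicit quantization-error estimates supplied by Lemma~\ref{piecewiseconstantextensionerror}. Since the reference MDP satisfies Assumption~\ref{myassump} by hypothesis, the only preliminary to verify is that Corollary~\ref{upperbound3cor} is applicable to the approximate model. Although $\overline{\ct^{\pi,M}}$ need not be weakly continuous (so Assumption~\ref{basicassump}(c) fails verbatim), the discussion preceding Lemma~\ref{piecewiseconstantextensionerror} notes that Assumption~2.1 of \cite{HernndezLerma1989} is automatically satisfied by piecewise constant models with finite $\bu$. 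This guarantees that the DCOE holds and that $\overline{\gamma^{\pi,M}_{\beta}}$, as introduced in Definition~\ref{def:piece-wise-extension-I}, is an optimal stationary deterministic policy for the approximate MDP, so the derivation behind Corollary~\ref{upperbound3cor} transfers.

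With applicability settled, I would instantiate the robustness bound to obtain
\begin{equation*}
\left\Vert\jb(c,\ct,\overline{\gamma^{\pi,M}_{\beta}})-\jsc\right\Vert_{\infty} \leq \frac{2}{(1-\beta)^{2}}\|c-\overline{c^{\pi,M}}\|_{\infty}+\frac{2\beta\clip}{(1-\beta)^{2}(1-\beta\tlip)}d_{\wc_{1}}(\ct,\overline{\ct^{\pi,M}}),
\end{equation*}
and then plug in $\|c-\overline{c^{\pi,M}}\|_{\infty}\leq \clip\delta_M$ and $d_{\wc_{1}}(\ct,\overline{\ct^{\pi,M}})\leq \tlip\delta_M$ from Lemma~\ref{piecewiseconstantextensionerror}. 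Factoring $2\clip\delta_M/(1-\beta)^{2}$ from the resulting sum and applying the identity $1+\beta\tlip/(1-\beta\tlip)=1/(1-\beta\tlip)$ collapses the right-hand side to $2\clip\delta_M/[(1-\beta)^{2}(1-\beta\tlip)]$, which is the claimed bound.

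Notably, I intentionally use the second (robustness) inequality of Corollary~\ref{upperbound3cor} rather than its third version, since the latter would require the approximate MDP to satisfy Assumption~\ref{myassump}, including a Lipschitz condition on $\overline{\ct^{\pi,M}}$ that we have no reason to assume. The main obstacle in the entire argument is thus not analytical but conceptual: one must confirm that the DCOE-based machinery underlying Corollary~\ref{upperbound3cor} remains valid when the approximate kernel fails weak continuity. This is handled by appealing to the alternative measurable-selection conditions of \cite{HernndezLerma1989} noted above. Once that subtlety is resolved, the remainder is a direct substitution followed by a short consolidation of fractions.
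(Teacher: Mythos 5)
Your proposal is correct and follows exactly the paper's route: apply the second inequality of Corollary~\ref{upperbound3cor} to the piecewise constant extension, substitute the bounds $\|c-\overline{c^{\pi,M}}\|_{\infty}\leq \clip\delta_M$ and $d_{\wc_{1}}(\ct,\overline{\ct^{\pi,M}})\leq \tlip\delta_M$ from Lemma~\ref{piecewiseconstantextensionerror}, and combine the fractions. Your handling of the weak-continuity subtlety via Assumption~2.1 of \cite{HernndezLerma1989} matches the discussion the paper gives just before Lemma~\ref{piecewiseconstantextensionerror}, so nothing is missing.
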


Further, combining Corollary \ref{cor:average_cost_final} and Lemma \ref{piecewiseconstantextensionerror}, we also recover the following result for the average-cost case,  as stated in Theorem 3.5 of \cite{ky2023qaverage}.

\begin{corollary}[Theorem 3.5, \cite{ky2023qaverage}]
Suppose that $\ct$ satisfies the minorization condition with a probability measure $\rho$ and a constant $\epsilon > 0$  as in Definition \ref{minordef}. 
Under Assumption   \ref{acoelipassump},
\begin{equation*}
\left\Vert J_{\infty}(c,\ct,\overline{\gamma^{\pi,M}_{\infty}}) -J_{\infty}^{*}(c,\mathcal{T})\right\Vert_{\infty}\leq \left(1+\frac{2}{\epsilon}\right)\left(\frac{\clip}{1-\tlip}\right)\delta_{M}.
\end{equation*}
\end{corollary}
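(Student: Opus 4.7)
The plan is to view the piecewise constant extension $(\bx,\bu,\overline{\ct^{\pi,M}},\overline{c^{\pi,M}})$ as a concrete instance of an approximate MDP in Corollary~\ref{cor:average_cost_final}, and then substitute the quantitative error bounds of Lemma~\ref{piecewiseconstantextensionerror}.

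First, I would verify that the approximate MDP satisfies the hypotheses of Corollary~\ref{cor:average_cost_final}. The reference MDP satisfies Assumption~\ref{acoelipassump} by hypothesis. For the approximate MDP, weak continuity of $\overline{\ct^{\pi,M}}$ generally fails because of the piecewise-constant structure in $x$, but (as noted in the discussion preceding Lemma~\ref{piecewiseconstantextensionerror}) Assumption~2.1 of \cite{HernndezLerma1989} holds automatically when $\bu$ is finite, and together with the inherited minorization, Assumption~3.1(4) of \cite{HernndezLerma1989} also holds, so the ACOE-based results remain valid with the approximate model. The key observation is that the minorization for $\ct$ is inherited by $\overline{\ct^{\pi,M}}$ with the same measure $\rho$ and constant $\epsilon$: for any $A \in \mathcal{F}_{\bx}$, $x \in B_i$, and $u \in \bu$,
\begin{equation*}
\overline{\ct^{\pi,M}}(A \mid x,u) \;=\; \frac{1}{\pi(B_i)}\int_{B_i} \ct(A \mid z,u)\,\pi(dz) \;\geq\; \frac{1}{\pi(B_i)}\int_{B_i} \epsilon\,\rho(A)\,\pi(dz) \;=\; \epsilon\,\rho(A).
\end{equation*}
Hence the minorization condition in Corollary~\ref{cor:average_cost_final} is satisfied with $\tau = \rho$, which makes the $d_{\wc_{1}}(\rho,\tau)$ term vanish.

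Next, I would apply Corollary~\ref{cor:average_cost_final} with $(\altc,\cs) = (\overline{c^{\pi,M}},\overline{\ct^{\pi,M}})$ and $\tau = \rho$. This yields
\begin{equation*}
\left\Vert J_{\infty}(c,\ct,\overline{\gamma^{\pi,M}_{\infty}}) - J_{\infty}^{*}(c,\ct)\right\Vert_{\infty}
\;\leq\; \frac{2+\epsilon}{\epsilon}\left(\|c - \overline{c^{\pi,M}}\|_{\infty} \;+\; \frac{\clip}{1-\tlip}\,d_{\wc_{1}}(\ct,\overline{\ct^{\pi,M}})\right).
\end{equation*}
Plugging in the two estimates from Lemma~\ref{piecewiseconstantextensionerror}, namely $\|c - \overline{c^{\pi,M}}\|_{\infty} \leq \clip\,\delta_M$ and $d_{\wc_{1}}(\ct,\overline{\ct^{\pi,M}}) \leq \tlip\,\delta_M$, the right-hand side becomes
\begin{equation*}
\frac{2+\epsilon}{\epsilon}\left(\clip + \frac{\clip\,\tlip}{1-\tlip}\right)\delta_M \;=\; \frac{2+\epsilon}{\epsilon}\cdot\frac{\clip}{1-\tlip}\,\delta_M \;=\; \left(1+\frac{2}{\epsilon}\right)\frac{\clip}{1-\tlip}\,\delta_M,
\end{equation*}
after combining the two terms over a common denominator. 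This matches the stated bound.

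The only genuinely delicate step is the regularity check for the approximate MDP: one must confirm that, despite the failure of weak continuity, the minorization-based ACOE and the contraction-mapping construction of $h^{*}_{\overline{c^{\pi,M}},\overline{\ct^{\pi,M}}}$ used inside the proof of Corollary~\ref{cor:average_cost_final} still go through. Given that the approximate kernel is measurable and inherits the same minorization constant, this reduces to citing the Hernández-Lerma conditions referenced in Subsection~\ref{finiteMDPApp}; the rest of the argument is then a direct substitution and algebraic simplification.
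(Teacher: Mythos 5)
Your proposal is correct and follows essentially the same route as the paper: observe that $\overline{\ct^{\pi,M}}$ inherits the minorization condition with the same measure $\rho$ and constant $\epsilon$ (so one may take $\tau=\rho$ and the $d_{\wc_{1}}(\rho,\tau)$ term vanishes), apply Corollary~\ref{cor:average_cost_final}, and substitute the bounds of Lemma~\ref{piecewiseconstantextensionerror}; your algebraic simplification to $\left(1+\tfrac{2}{\epsilon}\right)\tfrac{\clip}{1-\tlip}\delta_M$ is also correct. Your additional remarks on the regularity of the piecewise-constant model mirror the paper's discussion preceding Lemma~\ref{piecewiseconstantextensionerror}.
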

\begin{proof}
Due to the construction in Definition \ref{def:piece-wise-extension-I},   $\overline{\ct^{\pi,M}}$ also satisfies the minorization condition with the probability measure $\rho$ and constant $\epsilon$. Consequently, the proof is complete by Corollary \ref{cor:average_cost_final}.
\end{proof}

Note that in dealing with the approximate MDP $(\bx,\bu,\overline{\ct^{\pi,M}},\overline{c^{\pi,M}})$ above, we work with the $\wc_1$-distance between $\ct$ and $\overline{\ct^{\pi,M}}$, and use Corollary \ref{upperbound3cor}  and Corollary  \ref{cor:average_cost_final}. As shown in Section \ref{secDistApprx}, this approach is not tight. On the other hand,  in Lemma \ref{piecewiseconstantextensionerror}, the estimation errors, $\Vert c-\overline{c^{\pi,M}}\Vert_{\infty}$ and $ 
d_{\wc_{1}}(\ct,\overline{\ct^{\pi,M}})$, are of the same order, so there is no benefit in applying tighter results in Theorem \ref{thrm:discounted_robust} and Theorem \ref{upperbound3acoe}. However, as we shall see in the next subsection, those tighter results will be important in finite model learning.

In the next subsections, we adopt a different extension for the kernel for the purpose of technical analysis. Denote by $\mathbb{Q}_{M} := \{y_i\}_{i=1}^{M}$.

\begin{definition}[piecewise constant extension II]\label{def:piece-wise-extension-II-kernel}
Let  $\cs: \cF_{\mathbb{Q}_{M}} \times \mathbb{Q}_{M} \times \bu \to [0,1]$ be a controlled transition kernel. Define its piecewise constant extension $\widetilde{\cs}:\cF_{\bx} \times \bx \times \bu \to [0,1]$ as follows: for $i =1,\ldots,M$,  $x \in B_i$, $u \in \bu$, and $A \in  \cF_{\bx}$,
$\widetilde{\cs}(A|x,u) := \sum_{j = 1}^{M}
\cs(y_j|y_i,u) \mathbbm{1}\{y_j \in A\}$.
\end{definition}

We can bound the approximation error in a manner similar to Lemma \ref{piecewiseconstantextensionerror}. Since the proof closely parallels that of Lemma \ref{piecewiseconstantextensionerror}, we omit it here.

 \begin{lemma}\label{anotherextension_error}
Under Assumption \ref{myassump}(b), 
    $d_{\wc_1}(\widetilde{\ct^{\pi,M}}, \ct) \leq  (1+\tlip)\delta_M.
    $
\end{lemma}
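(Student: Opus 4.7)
The plan is to work directly via the Kantorovich--Rubinstein dual formulation. Fix $x \in \bx$, $u \in \bu$, and let $i \in \{1,\ldots,M\}$ denote the index with $x \in B_i$. For any $f: \bx \to \bR$ with $\flip \leq 1$, I would unpack the definition of $\widetilde{\ct^{\pi,M}}$ in Definition \ref{def:piece-wise-extension-II-kernel} together with $\ct^{\pi,M}$ from \eqref{finitemodeldef} to rewrite
\begin{equation*}
\int f(y)\, \widetilde{\ct^{\pi,M}}(dy|x,u) \;=\; \sum_{j=1}^{M} f(y_j)\, \ct^{\pi,M}(y_j|y_i,u) \;=\; \frac{1}{\pi(B_i)}\int_{B_i}\int f(q(y))\, \ct(dy|z,u)\, \pi(dz),
\end{equation*}
where $q:\bx\to\{y_1,\ldots,y_M\}$ is the quantizer sending $B_j$ to $y_j$.

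Next, I would insert an auxiliary term and apply the triangle inequality to split the discrepancy as
\begin{align*}
&\left|\int f(y)\, \widetilde{\ct^{\pi,M}}(dy|x,u) - \int f(y)\, \ct(dy|x,u)\right| \\
&\quad \leq \frac{1}{\pi(B_i)}\int_{B_i}\int |f(q(y)) - f(y)|\, \ct(dy|z,u)\, \pi(dz) \\
&\quad\quad + \frac{1}{\pi(B_i)}\int_{B_i}\left|\int f(y)\, \ct(dy|z,u) - \int f(y)\, \ct(dy|x,u)\right|\pi(dz).
\end{align*}
The first summand is bounded by $\delta_M$, since $q(y)$ and $y$ lie in the same partition element, so $d_{\bx}(y,q(y))\leq \delta_M$ and $\flip\leq 1$. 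For the second summand, I would apply the Kantorovich--Rubinstein formula together with Assumption \ref{myassump}(b) to bound the inner integrand by $\tlip\, d_{\bx}(z,x)$, which is at most $\tlip \delta_M$ because both $z$ and $x$ lie in $B_i$.

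Combining these two bounds yields $(1+\tlip)\delta_M$ uniformly in $f$ with $\flip \leq 1$. Taking the supremum over such $f$ gives the Wasserstein-1 bound for the fixed pair $(x,u)$, and taking the supremum over $(x,u)\in\bx\times\bu$ delivers the claim. No step is particularly hard; the only mild subtlety is the bookkeeping when rewriting the discrete integral as a double integral using the definition of the finite kernel $\ct^{\pi,M}$, since the measure $\pi$ appears in both the numerator and the normalizing denominator and must cancel correctly against the Lipschitz averaging.
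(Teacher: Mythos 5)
Your proof is correct. The paper omits the argument, saying only that it parallels the proof of Lemma~\ref{piecewiseconstantextensionerror}, and your decomposition is exactly the intended one: the averaging-over-$B_i$ term reproduces the $\tlip\delta_M$ bound from that lemma, while the extra $\delta_M$ comes from quantizing the image measure onto the representative points via $|f(q(y))-f(y)|\leq d_{\bx}(q(y),y)\leq\delta_M$.
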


\subsection{Simultaneous Finite Model Approximation and  Learning}\label{subsec:finitemodellearning}
In this section, we focus on the case where the quantized model in \eqref{finitemodeldef} is unknown and needs to be learned from data. We consider two scenarios regarding data availability: (i) a single trajectory of a controlled Markov process and (ii) independent transitions where a simulation device is available for each initial state and action. 

For either case, our goal is to show the robustness of the optimal policies derived from \textit{simultaneous} finite model approximation and the learning of these approximate finite models obtained via empirical estimates.  We provide explicit sample complexity bounds that relate performance loss to the number of samples. 

 As noted earlier, a related approach is presented in \cite{Dufour2015-qu}, which involves a different construction for scenarios where the model is known and satisfies an absolute continuity condition with respect to a reference measure. This construction aims to obtain a finite model by utilizing empirical data collected from the reference measure. Together with the known density, the empirical measure is used to construct a finite model.

\subsubsection{Empirical model learning from a single trajectory}\label{markovempiricalmodel}
\revise{Let $\phi\in \mathcal{P}(\bu)$ be a state-independent exploration policy such that $\phi(u)>0$ for every $u\in \bu$. Under this policy, actions are sampled independently from $\phi$, and the state process evolves according to the Markov kernel $\mathcal T_\phi(A\mid x):=\sum_{u\in\bu}\phi(u)\mathcal T(A\mid x,u)$ for each $A\in\mathcal F_{\bx}$ and $x \in \bx$.}

We observe a trajectory $\{(X_n, U_n, C_n,X_{n+1}), 0 \leq n \leq N-1\}$, whose dynamics are determined by the kernel $\ct$ and exploration policy $\phi$. Specifically, for each $x \in \bx$, $u,u' \in \bu$, and $A \in \mathcal{F}_{\bx}$,
\begin{align}\label{def:collection_MC_dyn}
    \revise{\mathbb{P}(X_{n+1} \in A,\;\; U_{n+1} = u' \vert X_{n} = x, U_n = u) = \phi(u') \int_{A} \ct(dx'|x,u),}\quad
     C_n = c(X_n, U_n).
\end{align}
Define the following estimators for $i,j \in [M]$ and $u \in \bu$,
\begin{align*}
&\hat{\ct}_{N}(y_j|y_i, u) := \frac{\sum_{n=0}^{N-1} \mathbbm{1}\{X_{n+1} \in B_j, X_n \in B_i, U_n=u\}}{\sum_{n=0}^{N-1} \mathbbm{1}\{X_n \in B_i, U_n = u\}}, \;\;
 \hat{c}_N(y_i,u) := \frac{\sum_{n=0}^{N-1} C_n \mathbbm{1}\{X_n \in B_i, U_n=u\}}{\sum_{n=0}^{N-1} \mathbbm{1}\{X_n \in B_i, U_n = u\}},
\end{align*}
where if the denominator is zero, we use the convention that $\hat{c}_N(y_i,u) = 0$, and that $\hat{\ct}_{N}(y_j|y_i, u)$ equals $1$ if $j = i$ and $0$ if $j \neq i$. See also Algorithm \ref{alg:sampling_singletrajectory}.

Given the learned MDP $(\hat{\ct}_{N},\hat{c}_N)$, we can solve the finite state space MDP for the optimal $\beta$-discounted cost policy $\hat{\gamma}_{N,\beta}$ and the optimal average-cost policy $\hat{\gamma}_{N,\infty}$. We then extend them to the original state space as in Definition \ref{def:piece-wise-extension-I}, and denote the extensions by $\overline{\hat{\gamma}_{N,\beta}}$ and  $\overline{\hat{\gamma}_{N,\infty}}$. Next, we show that the extended policies are robust under the true dynamic $(c,\ct)$.

\begin{algorithm}[t!]
\caption{Sampling a Wasserstein Regular MDP along a Single Trajectory
}
\label{alg:sampling_singletrajectory}
\begin{algorithmic}[1]
\Require{Simulation Length $N$ and exploration policy $\phi$}
\ForAll{$n = 0,\ldots, N-1$}
\State{Observe $(X_{n},U_{n},C_{n},X_{n+1})$ that evolves according to the true kernel $\ct$ and policy $\phi$}
\EndFor
\State{Set $\hat{\ct}_{N}(y_j|y_i, u) := \frac{\sum_{n=0}^{N-1} \mathbbm{1}\{X_{n+1} \in B_j, X_n \in B_i, U_n=u\}}{\sum_{n=0}^{N-1} \mathbbm{1}\{X_n \in B_i, U_n = u\}}$}
\State{Set $\hat{c}_N(y_i,u) := \frac{\sum_{n=0}^{N-1} C_n \mathbbm{1}\{X_n \in B_i, U_n=u\}}{\sum_{n=0}^{N-1} \mathbbm{1}\{X_n \in B_i, U_n = u\}}$}
\State \Return $(\hat{\ct}_{N} , \hat{c}_{N})$
\end{algorithmic}
\end{algorithm}

Consider the augmented Markov chain $\{Z_n :=(X_n, U_n, X_{n+1}): n \geq 0\}$ with the state space $\mathbb{Z} := \bx \times \bu \times \bx$. \revise{We denote its transition kernel over $\mathbb{Z}$ by $\mathcal{P}^{\ct,\phi}$, which does not depend on the partitions $\{B_i: i\in [M]\}$.}

\begin{assumption}\label{Model_Learning_Markov}
\textbf{(a).} Assume that  $\{Z_{n}: n \geq 0\}$ is $\psi$-irreducible with a unique invariant distribution $\tilde{\pi}$ on $\mathbb{Z}$. Then necessarily
$\tilde{\pi}(A,u,A') = \int_{A}\left(\int_{A'} T(dx'|x,u)\right)\revise{\phi(u)} \pi(dx)$ for $A,A' \in \cF_{\bx}$ and $u \in \bu$, \revise{where 
$\pi$ denotes the first marginal of $\tilde{\pi}$}. Assume that $\pi(B_i) >0$ for $1 \leq i \leq M$.

\textbf{(b).} There exist constants $c_0, C_0 > 0$ such that for any measurable function $f:\bx \times \bu \times \bx \to [0,1]$, $\epsilon > 0$ and $n \geq 1$, we have 
\begin{align*}
    \Pro\left(\left|n^{-1}\sum_{t=0}^{n-1} f(X_{t}, U_{t}, X_{t+1}) - \int_{\bx}\sum_{u \in \bu} \int_{\bx}f(x,u,x') \ct(dx'|x,u) \revise{\phi(u)} \pi(dx) \right|\geq \epsilon \right)  \leq C_0 \exp\left(-c_0 n \epsilon^2\right).
\end{align*}
\end{assumption}

\begin{remark}
Assumption \ref{Model_Learning_Markov}(b) has been extensively studied in the literature, primarily through three closely interrelated approaches: (i) Spectral methods for both reversible and non-reversible Markov chains \cite{kontoyiannis2003spectral,miasojedow2014hoeffding}, (ii) Concentration inequalities and martingale difference methods \cite[Chapter 23]{douc2018markov} and in particular \cite[Theorem 23.3.1 and Corollary 23.2.4]{douc2018markov}, and (iii) coupling based methods via stochastic drift conditions (see e.g. \cite[Theorem 12]{Rosenthal} via drift conditions \cite{MeynBook} to a {\it small} set). 
\end{remark}

\begin{example} As an explicit example, Assumption \ref{Model_Learning_Markov}(b) holds if the Markov chain $\{Z_{n}: n \geq 0\}$ has a spectral gap and the distribution of $X_0$ is not too far away from $\pi$. Specifically, denote by $\mathcal{L}_2^{0}(\tilde{\pi}) := \{f: \mathbb{Z} \to \bR: \tilde{\pi}(f^2) < \infty, \tilde{\pi} (f) = 0\}$ the space of zero mean, square integrable functions on $\mathbb{Z}$ with respect to $\tilde{\pi}$, where $\tilde{\pi} (f) := \int_{\mathbb{Z}} f(z) \tilde{\pi}(z)$. Let $\lambda\in [0,1]$ be the operator norm of \revise{$\mathcal{P}^{\ct,\phi}$}  acting on  $\mathcal{L}_2^{0}(\tilde{\pi})$. Further,  denote by $\nu$ the distribution of $X_0$. Assume that $\lambda < 1$,  that
$\nu$ is absolutely continuous with respect to $\pi$ with  the Radon–Nikodym derivative $d\nu/d \pi$, and that
$C_0' := \left(\int_{\bx} \left(\frac{d\nu}{d\pi}(x)\right)^2 \pi(dx) \right)^{1/2} < \infty$. Then by Corollary 3.11 in \cite{miasojedow2014hoeffding}, Assumption \ref{Model_Learning_Markov}(b) holds with $c_0 = (1-\lambda)/(1+\lambda), C_0= C_0'$.
\end{example}

Note that the above assumption is only imposed on the Markov chain $\{Z_n: n \geq 0\}$, and not on the finite approximation. Further, we do not need to know $\pi$ above (as the model is unknown); this is \textit{just used for analysis purposes}. 

Recall the definitions of $c^{\pi,M}$ and $\ct^{\pi,M}$ in \eqref{finitemodeldef}. Define 
\begin{align}
    \label{def:kappa_pi_M_init}
    \kappa_{\pi,M} := \min\left\{\revise{\phi(u)} \times \int_{B_i}   \pi(dx): i \in [M], u \in \bu\right\}, \;\;
    \kappa_{\ct,M} := \min\left\{\ct^{\pi,M}(y_j|y_i,u): (i,j,u) \in \mathcal{I}_{M}\right\},
\end{align}
where $\mathcal{I}_M := \{
i,j \in [M], u \in \bu: \ct^{\pi,M}(y_j|y_i,u) > 0
\}$. In particular, $\kappa_{\ct,M}$ is the smallest non-zero element in the matrix $\ct^{\pi,M}$. Further, define the event
\begin{align}
    \label{def:E_N_M}
\mathcal{E}_{N, M} := \bigcup_{1 \leq i,j \leq M, u \in \bu}
\left\{\hat{T}_N(y_j|y_i,u) <  \frac{1}{2}\ct^{\pi,M}(y_j|y_i,u) \right\}.
\end{align}

 Now, we bound the difference between the estimated $(\hat{\ct}_{N}, \hat{c}_{N})$ and their limits $(c^{\pi,M}, \ct^{\pi,M})$.  
\begin{lemma}\label{lemma:Markov_con}
 Suppose that $c$ is non-negative with $\|c\|_{\infty} < \infty$, and that Assumption \ref{Model_Learning_Markov} holds.  Then, there exists an absolute constant $C > 0$ such that
    \begin{align*}
        \Exp\left[\left\Vert 
        \hat{c}_{N} - c^{\pi,M}
        \right\Vert_{\infty} \right] \leq C K_{1} \|c\|_{\infty}  \frac{1}{\kappa_{\pi,M}}\sqrt{\frac{\log(M |\bu|)}{N}} + 2C_0 \|c\|_{\infty} M |\bu| \exp\left(-K_{2} \kappa_{\pi,M}^2 N\right),
    \end{align*}
where we denote $K_1 :=  \sqrt{ (1+ 2C_0)/c_0}$ and $K_2 := c_0/4$. Further, for any   bounded function $g: \{y_j: j \in [M]\}  \to [-L,L]$ with $L > 0$, there exists an absolute constant $C > 0$ such that
         \begin{align*}
        \Exp\left[d_{g}(\hat{\ct}_N, \ct^{\pi,M}) \right] \leq C K_{1} L\frac{1}{\kappa_{\pi,M}} \sqrt{\frac{\log(M |\bu|)}{N}} + 4C_0 L M |\bu| \exp\left(-K_{2} \kappa_{\pi,M}^2 N\right).
    \end{align*}
Finally, the following upper bound holds: 
$\mathbb{P}\left(\mathcal{E}_{N, M}\right)  \leq 6C_0 M^2 |\bu|\exp\left(-K_2  \kappa_{\pi,M}^2\kappa_{\ct,M}^2 N/16 \right)$.
\end{lemma}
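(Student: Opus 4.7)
My plan is to reduce all three bounds to concentration for ergodic averages of bounded measurable functions of $Z_n := (X_n, U_n, X_{n+1})$ via Assumption \ref{Model_Learning_Markov}(b), and then combine them with a ratio decomposition and a union bound. Define the empirical quantities
\begin{align*}
\widehat{N}_{i,u} &:= N^{-1}\sum_{n=0}^{N-1}\mathbbm{1}\{X_n\in B_i, U_n=u\},\quad \widehat{N}_{i,j,u} := N^{-1}\sum_{n=0}^{N-1}\mathbbm{1}\{X_n\in B_i, U_n=u, X_{n+1}\in B_j\},\\
\widehat{S}_{i,u} &:= N^{-1}\sum_{n=0}^{N-1} c(X_n,U_n)\mathbbm{1}\{X_n\in B_i, U_n=u\},\quad \widehat{D}_{i,u} := N^{-1}\sum_{n=0}^{N-1} G_n\, \mathbbm{1}\{X_n\in B_i, U_n=u\},
\end{align*}
where $G_n := \sum_{j=1}^M g(y_j)\mathbbm{1}\{X_{n+1}\in B_j\}\in[-L,L]$, so that $\hat{c}_N(y_i,u) = \widehat{S}_{i,u}/\widehat{N}_{i,u}$ and $\hat{\ct}_N(y_j|y_i,u) = \widehat{N}_{i,j,u}/\widehat{N}_{i,u}$. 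Each is an ergodic average of a bounded function of $Z_n$; their stationary means are, respectively, $b_{i,u} := \gamma_u\pi(B_i)\;(\geq \kappa_{\pi,M})$, $a_{i,j,u} := b_{i,u}\ct^{\pi,M}(y_j|y_i,u)$, $\tilde{S}_{i,u} := b_{i,u}c^{\pi,M}(y_i,u)$, and $\tilde{D}_{i,u} := b_{i,u}\sum_{j}g(y_j)\ct^{\pi,M}(y_j|y_i,u)$. Rescaling each integrand to $[0,1]$ and then invoking Assumption \ref{Model_Learning_Markov}(b) yields sub-Gaussian tails of the form $C_0 e^{-c_0 N \epsilon^2/R^2}$, with $R=1$ for the indicator sums, $R=\|c\|_\infty$ for $\widehat{S}_{i,u}$, and $R=2L$ for $\widehat{D}_{i,u}$. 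Integrating these tails and absorbing $C_0$ into $K_1 = \sqrt{(1+2C_0)/c_0}$ yields, by the standard sub-Gaussian maximal argument, an upper bound of the form $C R K_1\sqrt{\log(M|\bu|)/N}$ for the expected maximum deviation within each family (and similarly when the maximum is over triples).

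For the first two claims I would use the ratio identity
\begin{align*}
\hat{c}_N(y_i,u) - c^{\pi,M}(y_i,u) = \frac{\widehat{S}_{i,u}-\tilde{S}_{i,u}}{\widehat{N}_{i,u}} + c^{\pi,M}(y_i,u)\,\frac{b_{i,u}-\widehat{N}_{i,u}}{\widehat{N}_{i,u}},
\end{align*}
together with its analogue for $\widehat{D}_{i,u}/\widehat{N}_{i,u} - \tilde{D}_{i,u}/b_{i,u}$. Let $\mathcal{F} := \{\widehat{N}_{i,u}\geq b_{i,u}/2 \text{ for all }(i,u)\}$; on $\mathcal{F}$ every denominator is at least $\kappa_{\pi,M}/2$, so
\begin{align*}
\|\hat{c}_N - c^{\pi,M}\|_\infty \leq \tfrac{2}{\kappa_{\pi,M}}\Bigl(\max_{i,u}|\widehat{S}_{i,u}-\tilde{S}_{i,u}| + \|c\|_\infty\max_{i,u}|\widehat{N}_{i,u}-b_{i,u}|\Bigr),
\end{align*}
with the corresponding inequality for $d_g(\hat{\ct}_N,\ct^{\pi,M})$ in which $\|c\|_\infty$ is replaced by $L$. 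Off $\mathcal{F}$ I use the trivial bounds $\|c\|_\infty$ and $2L$, respectively. Assumption \ref{Model_Learning_Markov}(b) with $\epsilon = b_{i,u}/2$ combined with a union bound over $(i,u)$ gives $\Pro(\mathcal{F}^c)\leq C_0 M|\bu|\, e^{-c_0\kappa_{\pi,M}^2 N/4} = C_0 M|\bu|\, e^{-K_2\kappa_{\pi,M}^2 N}$; taking expectations and combining with the above maximal bounds yields the first two claims.

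For $\Pro(\mathcal{E}_{N,M})$, triples $(i,j,u)\notin\mathcal{I}_M$ contribute zero (the defining event is empty there since $\hat{\ct}_N\geq 0$), so it suffices to union bound over $(i,j,u)\in\mathcal{I}_M$. For such a triple, set $p := \ct^{\pi,M}(y_j|y_i,u)\;(\geq \kappa_{\ct,M})$; an elementary rearrangement of $\widehat{N}_{i,j,u}/\widehat{N}_{i,u}\geq p/2$ shows that $|\widehat{N}_{i,j,u}-a_{i,j,u}|\leq \epsilon$ together with $|\widehat{N}_{i,u}-b_{i,u}|\leq \epsilon$ suffices whenever $\epsilon \leq p b_{i,u}/(2+p)$, so the uniform choice $\epsilon := \kappa_{\pi,M}\kappa_{\ct,M}/4$ works since $p\leq 1$. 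A union bound over the $M^2|\bu|$ triples and the two corresponding failure events of Assumption \ref{Model_Learning_Markov}(b) then gives the stated exponential bound.

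The main obstacle is the ratio control in the first two claims: one has to keep the empirical denominator $\widehat{N}_{i,u}$ bounded away from zero while simultaneously harvesting the sub-Gaussian $\sqrt{\log(M|\bu|)/N}$ rate from the deviations of numerator and denominator, which is precisely what forces both the $1/\kappa_{\pi,M}$ prefactor and the exponentially small additive residual to appear. The remaining ingredients---rescaling integrands so Assumption \ref{Model_Learning_Markov}(b) applies, tail integration for sub-Gaussian maxima, and the union bound over $O(M^2|\bu|)$ cells---are routine.
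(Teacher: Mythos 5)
Your proposal is correct and follows essentially the same route as the paper's proof in Appendix A.2: the same numerator/denominator ratio decomposition, the same high-probability event keeping the empirical cell frequencies above $\kappa_{\pi,M}/2$, the same application of Assumption \ref{Model_Learning_Markov}(b) plus a sub-Gaussian maximal inequality to obtain the $\sqrt{\log(M|\bu|)/N}$ rate, and the same union-bound containment of $\mathcal{E}_{N,M}$ in deviation events at threshold of order $\kappa_{\pi,M}\kappa_{\ct,M}$ (your threshold $\kappa_{\pi,M}\kappa_{\ct,M}/4$ versus the paper's $/8$ only changes the constant in the exponent favorably). The only cosmetic differences are that you restrict the union bound for the third claim to triples in $\mathcal{I}_M$ and fold the denominator control directly into the rearrangement rather than conditioning on $\Sigma_1$, neither of which affects the result.
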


\begin{proof}
    The proof is presented in Appendix \ref{subapp:proof_Markov_con}.
\end{proof}

Next, we extend the cost function $\hat{c}_{N}$ to the original state space as in Definition \ref{def:piece-wise-extension-I}, and denote it as $\overline{\hat{c}_{N}}$. Further, we extend $\hat{\ct}_N$ to $\widetilde{\hat{\ct}}_N$ as in Definition \ref{def:piece-wise-extension-II-kernel}. It is clear that $\overline{\hat{\gamma}_{N,\beta}}$ and $\overline{\hat{\gamma}_{N,\infty}}$, which 
are extensions of $\hat{\gamma}_{N,\beta}$ and $\hat{\gamma}_{N,\infty}$ as in Definition \ref{def:piece-wise-extension-I},
are the optimal policies for the discounted-cost and average-cost problems associated with the MDP $(\bx, \bu, \widetilde{\hat{\ct}_{N}}, \overline{\hat{c}_{N}})$.  We compare the original and learned MDP, i.e., $(\ct,c)$ and $(\widetilde{\hat{\ct}_{N}}, \overline{\hat{c}_{N}})$, through the finite model approximation $(\widetilde{\ct^{\pi,M}},\overline{c^{\pi,M}})$; see  \eqref{finitemodeldef}. We start with the discounted problem. Recall $\delta_M$ from \eqref{diameter_def}.

\begin{theorem}\label{cor:single_tranjectory}
Suppose Assumptions \ref{myassump} and \ref{Model_Learning_Markov} hold. There exists a constant $C > 0$, depending only on $C_0,c_0,\clip, \tlip, \beta, \|c\|_{\infty}$, such that
\begin{align*}
\Exp\left[ \left\Vert\jb(c,\ct,{\overline{\hat{\gamma}_{N,\beta}}})-\jsc\right\Vert_{\infty}\right] \leq C \left(\delta_M + \frac{1}{\kappa_{\pi,M}}\sqrt{\frac{\log(M|\bu|)}{N}}\right) + C M|\bu| \exp\left(-C^{-1} \kappa_{\pi,M}^2 N\right).    
\end{align*}
\end{theorem}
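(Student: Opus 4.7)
The plan is to combine the robustness bound from Theorem \ref{thrm:discounted_robust} with a triangle-inequality decomposition through the ``true'' quantized model $(\overline{c^{\pi,M}}, \widetilde{\ct^{\pi,M}})$. Concretely, I would first apply the first inequality of Theorem \ref{thrm:discounted_robust} with the reference MDP $(\bx,\bu,\ct,c)$ and the approximating MDP $(\bx,\bu,\widetilde{\hat{\ct}_N},\overline{\hat{c}_N})$, so that $\overline{\hat{\gamma}_{N,\beta}}$ is precisely $\gamma_{\overline{\hat c_N},\widetilde{\hat{\ct}_N},\beta}^{*}$. This yields the master inequality
\begin{equation*}
\|\jb(c,\ct,\overline{\hat{\gamma}_{N,\beta}})-\jsc\|_{\infty} \leq \frac{2}{(1-\beta)^{2}}\|c-\overline{\hat{c}_N}\|_{\infty} + \frac{2\beta}{(1-\beta)^{2}}\, d_{\jsc}(\ct,\widetilde{\hat{\ct}_N}).
\end{equation*}
(As noted after Definition \ref{def:piece-wise-extension-I}, the piecewise-constant extension is not weakly continuous, but in its place one uses the finite-action analogue of Assumption 2.1 of \cite{HernndezLerma1989}, which is exactly the setting mentioned in Section \ref{finiteMDPApp}.)

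Next, I would decompose each of the two right-hand terms through the intermediate quantized model. For the cost term,
\begin{equation*}
\|c-\overline{\hat c_N}\|_{\infty} \leq \|c-\overline{c^{\pi,M}}\|_{\infty} + \|\overline{c^{\pi,M}}-\overline{\hat c_N}\|_{\infty} \leq \clip\,\delta_M + \|c^{\pi,M}-\hat c_N\|_{\infty},
\end{equation*}
where the first term is controlled by Lemma \ref{piecewiseconstantextensionerror} and the second by the first bound of Lemma \ref{lemma:Markov_con}. For the kernel term,
\begin{equation*}
d_{\jsc}(\ct,\widetilde{\hat{\ct}_N}) \leq d_{\jsc}(\ct,\widetilde{\ct^{\pi,M}}) + d_{\jsc}(\widetilde{\ct^{\pi,M}},\widetilde{\hat{\ct}_N}).
\end{equation*}
For the first summand I would apply Lemma \ref{trivialinequality} together with Lemma \ref{anotherextension_error} and the Lipschitz bound $\|\jsc\|_{\textup{Lip}}\leq \clip/(1-\beta\tlip)$ from Lemma \ref{lipbellman}, yielding $\clip(1+\tlip)\delta_M/(1-\beta\tlip)$. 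For the second summand, both kernels are supported on $\{y_j\}_{j=1}^{M}$, so
\begin{equation*}
d_{\jsc}(\widetilde{\ct^{\pi,M}},\widetilde{\hat{\ct}_N}) = \sup_{i\in[M],\,u\in\bu} \Bigl|\sum_{j=1}^{M}\jsc(y_j)\bigl(\ct^{\pi,M}(y_j|y_i,u)-\hat{\ct}_N(y_j|y_i,u)\bigr)\Bigr|,
\end{equation*}
which is precisely the quantity bounded in the second conclusion of Lemma \ref{lemma:Markov_con} with the test function $g=\jsc\restriction_{\{y_j\}}$ and $L=\|c\|_{\infty}/(1-\beta)$.

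Finally, I would plug these estimates back into the master inequality, take expectation, and use the two expectation bounds from Lemma \ref{lemma:Markov_con} to absorb every prefactor (which depends only on $\clip,\tlip,\beta,\|c\|_{\infty},C_0,c_0$) into a single constant $C$. The two surviving summands are exactly $\delta_M$ (from quantization) and $\kappa_{\pi,M}^{-1}\sqrt{\log(M|\bu|)/N}$ (from concentration), plus the sub-exponential remainder $M|\bu|\exp(-C^{-1}\kappa_{\pi,M}^{2}N)$.

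The main obstacle is a conceptual rather than a technical one: one must use the sharper first inequality of Theorem \ref{thrm:discounted_robust}, which involves only $d_{\jsc}$, rather than any Lipschitz-transfer through $d_{\wc_{1}}(\ct,\widetilde{\hat{\ct}_N})$. Going through $d_{\wc_{1}}$ would force a supremum over all $1$-Lipschitz test functions on the quantized space, producing an extra factor polynomial in $M$ in the concentration step and destroying the parametric $\sqrt{\log(M|\bu|)/N}$ rate. Exploiting the fact that $\jsc$ is a \emph{single} bounded function—bounded by $\|c\|_\infty/(1-\beta)$ thanks to Lemma \ref{lipbellman}—lets one reduce the kernel estimation error to a scalar Hoeffding-type inequality on $M|\bu|$ coordinates, which is exactly what the second bound of Lemma \ref{lemma:Markov_con} delivers.
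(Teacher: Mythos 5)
Your proposal is correct and follows essentially the same route as the paper's proof: the first inequality of Theorem \ref{thrm:discounted_robust} applied to the pair $(c,\ct)$ and $(\overline{\hat c_N},\widetilde{\hat{\ct}_N})$, a triangle-inequality split through the quantized model controlled by Lemmas \ref{piecewiseconstantextensionerror}, \ref{anotherextension_error}, \ref{trivialinequality} and \ref{lipbellman}, and then the expectation bounds of Lemma \ref{lemma:Markov_con} with the single fixed test function $\jsc$ bounded by $\|c\|_{\infty}/(1-\beta)$. Your closing remark on why one must avoid routing the kernel error through $d_{\wc_1}$ (or through the data-dependent value function of the learned model) correctly identifies the key point of the argument.
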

\begin{proof}
By Lemma \ref{piecewiseconstantextensionerror} and the triangle inequality, we have $\|\overline{\hat{c}_{N}} - c\|_{\infty}\leq \clip \delta_{M} + \|\hat{c}_{N} - c^{\pi,M}\|_{\infty}$.

Clearly, $\|\jsc\|_{\infty} \leq \|c\|_{\infty}/(1-\beta)$. By Lemma \ref{lipbellman}, $\jsc$ is ${\clip}/{(1-\beta\tlip)}$-Lipschitz. Thus, by Lemma \ref{anotherextension_error} and Lemma \ref{trivialinequality}, due to the triangle inequality, we have
\begin{align*}
    d_{\jsc}(\ct, \widetilde{\hat{\ct}}_N) \leq  \frac{\clip(1+\tlip)}{1-\beta\tlip} \delta_{M} + d_{\jsc}({\ct}^{\pi,M}, {\hat{\ct}}_N).
\end{align*}
The result then follows by combining Lemma \ref{lemma:Markov_con} with \eqref{eq:discountedrobust1} stated in Theorem \ref{thrm:discounted_robust}.
\end{proof}

\begin{remark}
Note that in Theorem \ref{cor:single_tranjectory}, the constant \( C \) is independent of the partition. 
\end{remark}

Finally, we focus on the average-cost problem. Assume the reference kernel $\ct$ satisfies the minorization condition in Definition \ref{minordef} with some probability measure $\rho$ and scaling constant $\epsilon$. We first find a minorant $\tau^M$ for the approximating kernel $\widetilde{{\ct}^{\pi,M}}$, and then bound the Wasserstein-1 distance between $\rho$ and $\tau^M$.

\begin{lemma}\label{lemma:minorization_for_discretization}
Assume there exist a constant $\epsilon>0$ and a probability measure $\rho \in\mathcal{P}(\bx)$, such that $\forall\;x\in\bx,\;u\in\bu$, and $A\in\mathcal{F}_{\bx}$, $\ct(A|x,u)\geq \epsilon \rho(A)$. Define the following probability measure $\tau^M$: 
\begin{align*}
\tau^{M}(A) := \sum_{j=1}^{M} \rho(B_j)  \mathbbm{1}\{y_j\in A\}, \quad
\text{ for any } A\in\mathcal{F}_{\bx}.
\end{align*}
Then for any $x\in\bx,\;u\in\bu$ and $A\in\mathcal{F}_{\bx}$, 
$\widetilde{\ct^{\pi,M}}(A|x,u)\geq \epsilon \tau^M (A)\;  \text{ and }\; \wc_{1}(\rho, \tau^M) \leq \delta_M.
$
\end{lemma}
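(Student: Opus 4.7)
The plan is to verify the two conclusions separately, with the first being a direct algebraic consequence of the minorization of $\ct$ and the second following from an explicit coupling construction.

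For the first claim, I would unwind the definitions in sequence. Fix $x \in B_i$ and $u \in \bu$. By Definition \ref{def:piece-wise-extension-II-kernel}, $\widetilde{\ct^{\pi,M}}(A|x,u) = \sum_{j=1}^{M} \ct^{\pi,M}(y_j|y_i, u) \mathbbm{1}\{y_j \in A\}$. Using the formula in \eqref{finitemodeldef} and the assumed minorization $\ct(B_j|x',u) \geq \epsilon \rho(B_j)$, I would show
\begin{equation*}
\ct^{\pi,M}(y_j|y_i, u) = \frac{\int_{B_i} \ct(B_j|x',u) \pi(dx')}{\int_{B_i}\pi(dx')} \geq \epsilon \rho(B_j).
\end{equation*}
Substituting this lower bound into the sum over $j$ and recognizing the right-hand side as $\epsilon \tau^M(A)$ by the definition of $\tau^M$ yields the desired inequality.

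For the second claim, I would exhibit a coupling between $\rho$ and $\tau^M$ that transports mass within each partition cell. Specifically, define a probability measure $\pi^*$ on $(\bx^2, \sigma(\mathcal{F}_{\bx} \times \mathcal{F}_{\bx}))$ by
\begin{equation*}
\pi^*(A \times B) := \sum_{j=1}^{M} \rho(A \cap B_j) \mathbbm{1}\{y_j \in B\}, \quad A, B \in \mathcal{F}_{\bx}.
\end{equation*}
Checking the marginals is immediate: $\pi^*(A \times \bx) = \sum_j \rho(A \cap B_j) = \rho(A)$ since $\{B_j\}$ partitions $\bx$, and $\pi^*(\bx \times B) = \sum_j \rho(B_j)\mathbbm{1}\{y_j \in B\} = \tau^M(B)$. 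Then
\begin{equation*}
\wc_1(\rho,\tau^M) \leq \int_{\bx \times \bx} d_{\bx}(x,y)\, \pi^*(dx,dy) = \sum_{j=1}^{M} \int_{B_j} d_{\bx}(x, y_j)\, \rho(dx) \leq \delta_M \sum_{j=1}^{M} \rho(B_j) = \delta_M,
\end{equation*}
where the final inequality uses $y_j \in B_j$ together with the definition of $\delta_M$ in \eqref{diameter_def}.

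Neither step presents a real obstacle: the minorization inequality passes through the averaging in \eqref{finitemodeldef} and then through the piecewise-constant extension because both operations are convex combinations with nonnegative weights, and the Wasserstein bound reduces to choosing the natural ``quantize-to-representative'' coupling and invoking the diameter bound. The only point worth care is ensuring that the denominator $\int_{B_i}\pi(dx)$ is strictly positive so that $\ct^{\pi,M}$ is well-defined on all of $[M]$, which is guaranteed by the standing assumption $\pi(B_i) > 0$ for each $i$ in the construction of the quantized model.
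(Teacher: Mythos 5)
Your proof is correct. The first claim is argued exactly as in the paper: unwind Definition \ref{def:piece-wise-extension-II-kernel} and \eqref{finitemodeldef}, push the minorization $\ct(B_j|x',u)\geq\epsilon\rho(B_j)$ through the nonnegative averaging, and identify the result as $\epsilon\tau^M(A)$. For the second claim you take a genuinely different (though equally short) route. The paper works on the dual side: it takes an arbitrary $1$-Lipschitz test function $f$, writes $\int f\,d\rho-\int f\,d\tau^M$ as a sum of cell-wise errors $\int_{B_j}|f(x)-f(y_j)|\rho(dx)$, and bounds each by $\delta_M$ via the Kantorovich--Rubinstein formula. You instead work on the primal side, exhibiting the ``quantize-to-representative'' coupling $\pi^*$ (the pushforward of $\rho$ under $x\mapsto(x,q(x))$) and bounding its transport cost directly by $\delta_M$; the marginal checks are correct and the extension of $\pi^*$ from rectangles to the product $\sigma$-algebra is the standard pushforward construction, so there is no gap there. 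The two arguments are dual to each other and yield the identical bound; the coupling version has the mild advantage of not invoking the duality theorem at all and of generalizing immediately to $\wc_p$ for $p>1$, while the paper's version stays consistent with its systematic use of the dual formulation elsewhere. Your closing remark about $\pi(B_i)>0$ is a reasonable point of care, already covered by the standing assumption in the construction of the quantized model.
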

\begin{proof}
For any $A\in\mathcal{F}_{\bx}$, $u \in \bu$ and $x \in B_i$ with $1 \leq i \leq M$, by Definition \ref{def:piece-wise-extension-II-kernel},
\begin{align*}
\widetilde{\ct^{\pi,M}}(A|x,u) 
&=\sum_{j = 1}^{M} \frac{1}{\pi(B_i)}\int_{B_{i}}\ct(B_{j}|z,u){\pi}(dz)
 \mathbbm{1}\{y_j \in A\} \geq \sum_{j = 1}^{M} \frac{1}{\pi(B_i)} \int_{B_{i}} \rho(B_j) {\pi}(dz)
 \mathbbm{1}\{y_j \in A\},
\end{align*}
which implies the first claim.
Further, for any $f: \bx \to \bR$ with $\flip \leq 1$, by the triangle inequality,
\begin{align*}
\left|\int f(x) \rho(dx) - \int f(x) \tau^M(dx)\right| \leq
\sum_{j=1}^M \left|\int_{B_j} f(x) \rho(dx) - f(y_j) \rho(B_j)\right| \leq
\sum_{j=1}^M \int_{B_j} |f(x)  - f(y_j)| \rho(dx),
\end{align*}
which can be further bounded by  
$\delta_M$. The proof is complete. 
\end{proof}

Recall the event 
$\mathcal{E}_{N, M}$ defined in \eqref{def:E_N_M}. By definition, on its complement $\mathcal{E}_{N, M}^c$, the approximation kernel $\widetilde{\hat{\ct}_N}$ is minorized by the probability measure $\tau^M$ and constant $\epsilon/2$. Then we can apply the minorization approach in Theorem \ref{upperbound3acoe} on $\mathcal{E}_{N, M}^c$.

\begin{theorem}\label{cor:quantize}
Suppose that Assumptions \ref{acoelipassump} and \ref{Model_Learning_Markov} hold, and that $(\bx,\bu,\ct,c)$ satisfies the minorization condition with a probability measure $\rho$ and a constant $\epsilon >0$. There exists a constant $C > 0$, depending only on $C_0,c_0,\clip, \tlip, \|c\|_{\infty}, \epsilon$, such that
\begin{align*}
 \Exp\left[ \left\Vert J_{\infty}(c,\ct,{\overline{\hat{\gamma}_{N,\infty}}})-J_{\infty}^{*}(c,\ct) \right\Vert_{\infty} \right] \leq &C \left(\delta_M + \frac{1}{\kappa_{\pi,M}}\sqrt{\frac{\log(M|\bu|)}{N}}\right) + C M|\bu| \exp\left(-C^{-1} \kappa_{\pi,M}^2N\right) \\
 &+ C M^2 |\bu| \exp(-C^{-1} \kappa_{\pi,M}^2\kappa_{\ct,M}^2 N).
\end{align*}
\end{theorem}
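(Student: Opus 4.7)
The plan parallels the proof of Theorem \ref{cor:single_tranjectory} but replaces the discounted robustness bound Theorem \ref{thrm:discounted_robust} with the average-cost analogue Theorem \ref{upperbound3acoe}. The key additional subtlety is that Theorem \ref{upperbound3acoe} requires a minorization condition on \emph{both} the reference and approximate kernels, so we must transfer the minorization from $\ct$ to the learned kernel $\widetilde{\hat{\ct}_N}$. The overall strategy is to split the expectation into contributions from the event $\mathcal{E}_{N,M}$ (on which empirical transitions can be badly underestimated) and its complement $\mathcal{E}_{N,M}^c$.

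First I would verify that minorization transfers to the learned kernel on $\mathcal{E}_{N,M}^c$. By Lemma \ref{lemma:minorization_for_discretization}, $\widetilde{\ct^{\pi,M}}(\cdot|x,u) \geq \epsilon\, \tau^M$ uniformly, and by definition of $\mathcal{E}_{N,M}$, on $\mathcal{E}_{N,M}^c$ we have $\hat{\ct}_N(y_j|y_i,u) \geq \tfrac{1}{2}\ct^{\pi,M}(y_j|y_i,u)$ for all $i,j,u$, so $\widetilde{\hat{\ct}_N}(\cdot|x,u) \geq (\epsilon/2)\, \tau^M$. Since $\ct$ is minorized by $\rho$ with constant $\epsilon \geq \epsilon/2$, Theorem \ref{upperbound3acoe} applies on $\mathcal{E}_{N,M}^c$ with common minorization constant $\epsilon/2$, bounding $\|J_\infty(c,\ct,\overline{\hat{\gamma}_{N,\infty}}) - J_\infty^*(c,\ct)\|_\infty$ by a constant multiple of $\|c - \overline{\hat{c}_N}\|_\infty + d_{h_{c,\ct}^*}(\ct, \widetilde{\hat{\ct}_N}) + d_{h_{c,\ct}^*}(\rho, \tau^M)$.

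Next I would bound each of these three ingredients. By Lemma \ref{acoelip}, $h_{c,\ct}^*$ is $(\clip/(1-\tlip))$-Lipschitz; taking absolute values in the fixed-point equation \eqref{equ:minor_operator} yields $\|h_{c,\ct}^*\|_\infty \leq \|c\|_\infty/\epsilon$, which is the magnitude bound needed to invoke Lemma \ref{lemma:Markov_con}. For the cost term, Lemma \ref{piecewiseconstantextensionerror} and the triangle inequality give $\|c - \overline{\hat{c}_N}\|_\infty \leq \clip\, \delta_M + \|\hat{c}_N - c^{\pi,M}\|_\infty$. For the kernel term I would decompose through the intermediate $\widetilde{\ct^{\pi,M}}$: the model-approximation piece $d_{h_{c,\ct}^*}(\ct,\widetilde{\ct^{\pi,M}})$ is bounded by $(\clip(1+\tlip)/(1-\tlip))\delta_M$ via Lemmas \ref{anotherextension_error} and \ref{trivialinequality}, while the empirical piece $d_{h_{c,\ct}^*}(\widetilde{\ct^{\pi,M}},\widetilde{\hat{\ct}_N})$ reduces to the discrepancy of finite-state kernels $\ct^{\pi,M}$ and $\hat{\ct}_N$ against the bounded function $g := h_{c,\ct}^*|_{\{y_j\}}$ and is handled by the second inequality of Lemma \ref{lemma:Markov_con} with $L = \|c\|_\infty/\epsilon$. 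For the measure term, Lemma \ref{lemma:minorization_for_discretization} gives $\wc_1(\rho, \tau^M) \leq \delta_M$, so $d_{h_{c,\ct}^*}(\rho, \tau^M) \leq (\clip/(1-\tlip))\,\delta_M$ by Lemma \ref{trivialinequality}.

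Finally, to pass to expectations, I would apply the pathwise bound on $\mathcal{E}_{N,M}^c$ and invoke Lemma \ref{lemma:Markov_con} for $\Exp\|\hat{c}_N - c^{\pi,M}\|_\infty$ and $\Exp\, d_{h_{c,\ct}^*}(\widetilde{\ct^{\pi,M}},\widetilde{\hat{\ct}_N})$, which yields the first two terms of the claimed bound. On $\mathcal{E}_{N,M}$, I would use the crude estimate $|J_\infty(c,\ct,\gamma) - J_\infty^*(c,\ct)| \leq 2\|c\|_\infty$ for any policy $\gamma$ (since an average of the nonnegative bounded cost is at most $\|c\|_\infty$), then multiply by $\mathbb{P}(\mathcal{E}_{N,M}) \leq 6C_0 M^2|\bu|\exp(-K_2 \kappa_{\pi,M}^2\kappa_{\ct,M}^2 N/16)$ from the last part of Lemma \ref{lemma:Markov_con} to obtain the final $M^2|\bu|$-exponential term. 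The main obstacle is the transfer of the minorization condition to the empirical kernel: this is what forces the event decomposition and introduces both the additional $\kappa_{\ct,M}^2$ factor in the exponent and the uniform a priori bound $\|h_{c,\ct}^*\|_\infty \leq \|c\|_\infty/\epsilon$ needed to apply the concentration lemma to a bounded test function rather than to the Wasserstein dual uniformly.
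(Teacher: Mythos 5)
Your proposal is correct and follows essentially the same route as the paper's proof: split on the event $\mathcal{E}_{N,M}$, transfer the minorization to $\widetilde{\hat{\ct}_N}$ with constant $\epsilon/2$ on the good event, apply Theorem \ref{upperbound3acoe} with the decomposition through the intermediate kernel $\widetilde{\ct^{\pi,M}}$, and control the resulting pieces with Lemmas \ref{piecewiseconstantextensionerror}, \ref{anotherextension_error}, \ref{lemma:minorization_for_discretization} and \ref{lemma:Markov_con}. Two minor points in your favor: you assign the good/bad roles of $\mathcal{E}_{N,M}$ and $\mathcal{E}_{N,M}^c$ correctly (the paper's proof text transposes them, evidently a typo), and your bound $\|h_{c,\ct}^*\|_\infty \le \|c\|_\infty/\epsilon$ obtained directly from the contraction fixed-point equation is a self-contained substitute for the paper's appeal to Theorem \ref{thrm:vanish_main}; since the constant $C$ is allowed to depend on $\epsilon$, the weaker bound is harmless.
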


\begin{proof}
Since $c$ is non-negative and bounded, and by Lemma \ref{lemma:Markov_con}, we have
\begin{align*}
    \Exp\left[ \left\Vert J_{\infty}(c,\ct,{\overline{\hat{\gamma}_{N,\infty}}})-J_{\infty}^{*}(c,\ct) \right\Vert_{\infty}; \mathcal{E}_{N,M}  \right] \leq 6C_0 \|c\|_{\infty}M^2 |\bu|\exp\left(-K_2  \kappa_{\pi,M}^2\kappa_{\ct,M}^2 N/16 \right).
\end{align*}
Now, we focus on the event $\mathcal{E}_{N,M}^c$, on which $\widetilde{\hat{\ct}_N}$ satisfies the minorization condition with $\tau^M$ and  $\epsilon/2$. Note that $\ct$ satisfies the minorization condition with $\rho$ and  $\epsilon/2$.

In the proof of Theorem \ref{cor:single_tranjectory}, we have shown $\|\overline{\hat{c}_{N}} - c\|_{\infty}\leq \clip \delta_{M} + \|\hat{c}_{N} - c^{\pi,M}\|_{\infty}$.
By Lemma \ref{acoelip}, $h_{c,\ct,\epsilon/2}^{*}$ is $\clip/(1-\tlip)$-Lipschitz. Thus, by Lemma \ref{anotherextension_error} and Lemma \ref{trivialinequality}, due to the triangle inequality, we have
\begin{align*}
    d_{h_{c,\ct,\epsilon/2}^{*}}(\ct, \widetilde{\hat{\ct}}_N) \leq  \frac{\clip(1+\tlip)}{1-\tlip} \delta_{M} + d_{h_{c,\ct,\epsilon/2}^{*}}({\ct}^{\pi,M}, {\hat{\ct}}_N).
\end{align*}
Finally, by Lemma \ref{lemma:hproperties} (c),
$\|h_{c,\ct,\epsilon/2}^{*}\|_{\infty} \leq 2\|c\|_{\infty}/\epsilon$. Then the proof is complete by combining Lemma \ref{lemma:Markov_con} and Lemma \ref{lemma:minorization_for_discretization} with Theorem \ref{upperbound3acoe}.
\end{proof}

A few remarks are in order. First, for a fixed partition, and hence a fixed $M$, the upper bounds in Theorem \ref{cor:single_tranjectory} and \ref{cor:quantize} achieve the optimal parametric rate, i.e., $O(N^{-1/2})$.

Second, assume that for each $x \in \bx$, $\|x\| \leq L$. Recall that $\delta_M$ is the quantization error in \eqref{diameter_def}. By a volume argument, there exists a quantization scheme such that
\begin{align}\label{quantizaton_error_scale}
\delta_M \leq (C/M)^{1/d},
\end{align}
for some constant $C$, depending only on $d$ and $L$. Under this assumption, we choose $N$ so that the approximation error $\delta_M$, arising from finite model approximation, matches the statistical estimation error from model learning. Specifically, define
\begin{equation*}
N_{\text{disc}} = C\frac{1}{\kappa_{\pi,M}^2}  M^{2/d} \log(M|\bu|), \quad
N_{\text{ave}} = C\frac{1}{\kappa_{\pi,M}^2 } M^{2/d} \log(M|\bu|) +  C\frac{1}{\kappa_{\pi,M}^2 \kappa_{\ct,M}^2}  \log(M|\bu|).
\end{equation*}
If we let $N =N_{\text{disc}}$ in the $\beta$-discounted cost case and $N = N_{\text{ave}}$ in the average-cost case, then the upper bounds in Theorem \ref{cor:single_tranjectory} and \ref{cor:quantize} on the overall performance loss, that include finite model approximation and model learning, are both of order $M^{-1/d}$.

Recall the definition of $\kappa_{\pi,M}$ and $\kappa_{\ct,M}$ in \eqref{def:kappa_pi_M_init}. Assume that 
\begin{equation}
    \label{kappa_assump}
\kappa_{\pi,M} \geq 1/(CM|\bu|), \quad
\kappa_{\ct,M} \geq 1/(CM).
\end{equation}
Then $N_{disc}$ is of order $M^{2+2/d}|\bu|^2\log(M|\bu|)$, while $N_{\text{ave}}$ is of order $M^{4}|\bu|^2\log(M|\bu|)$.

\begin{remark} [Relations with Quantized Q-Learning] We also note that the empirically estimated model is closely related to Quantized Q-Learning, as discussed in \cite[Theorem 8]{KSYContQLearning} and \cite[Theorem 2.1]{karayukselNonMarkovian}. Specifically, the Q-learning algorithm for the quantized model converges to the fixed-point equation corresponding exactly to the approximate finite model learned in this section, with the exploration policy guiding the empirical estimation of the model.  In this setting, direct model learning may be more efficient than Q-learning, both in terms of sample complexity and because a learned model can be combined with analytical or computational methods for solving the resulting control problem.
\end{remark} 

\subsubsection{Empirical model learning from independently generated transition data}
Let $\pi \in \mathcal{P}(\bx)$ be a \textit{given} weighting measure such that
 $\pi(B_i) > 0$ for $i \in [M]$. We denote by $\hat{\pi}_i$ the restriction of $\pi$ on the bin $B_{i}$, that is, $\hat{\pi}_i(A) := \pi(A\cap B_i)/\pi(B_i)$ for $i \in [M]$. We consider the data collected under the scheme described in Algorithm \ref{alg:sampling_restart}. Specifically, for each bin $i \in [M]$ and action $u \in \bu$, we have $N_0$ \textit{independent} triplets of observations: for $1 \leq k \leq N_0$,
$$
(X_{k,i,u}, Y_{k,i,u},  c(X_{k,i,u},u)), \quad \text{ where }
X_{k,i,u} \sim \hat{\pi}_i, \quad
Y_{k,i,u}|X_{k,i,u} \sim \ct(\cdot|X_{k,i,u},u).
$$
Note that in the above $\hat{\pi}_{i}$ has its full measure on $B_{i}$ and that the total number of triplets is $N := M \times |\bu| \times N_0$. Given the data, the estimated controlled kernel and cost function $(\hat{\ct}_{N}, \hat{c}_{N})$ on the quantized space $\{y_i\}_{i=1}^{M}$ are defined as follows.
For each $1 \leq i \leq M$ and action $u \in \bu$,
\begin{align}
\label{def:independent_transitions}
\begin{split}
    \hat{c}_{N}(y_i,u):=\frac{1}{N_0} \sum_{k=1}^{N_0}c(X_{k,i,u},u),\quad
    \hat{\ct}_N(y_j|y_i,u) := \frac{1}{N_0} \sum_{k=1}^{N_0} \mathbbm{1}\{Y_{k,i,u}\in B_{j}\}, \text{ for } 1 \leq j \leq M.
\end{split}
\end{align}
As in the previous section,  we can solve the finite state space MDP $(\{y_i\}_{i=1}^{M},\bu, \hat{\ct}_{N},\hat{c}_N)$ for the optimal $\beta$-discounted cost policy $\hat{\gamma}_{N,\beta}$ and the optimal average-cost policy $\hat{\gamma}_{N,\infty}$. We then extend them to the original state space, denoted as $\overline{\hat{\gamma}_{N,\beta}}$ and $\overline{\hat{\gamma}_{N,\infty}}$,  which are the optimal policies for $(\bx,\bu, \widetilde{\hat{\ct}_{N}},\overline{\hat{c}_N})$, where we recall the extension of kernel $\widetilde{\hat{\ct}_{N}}$ in Definition \ref{def:piece-wise-extension-II-kernel}.

\begin{algorithm}[t!]
\caption{Sampling a Wasserstein Regular MDP with Restart
}
\label{alg:sampling_restart}
\begin{algorithmic}[1]
\Require{number of repetitions $N_0$ for each state and action pair}

\ForAll{$i = 1,\ldots, M$}
\ForAll{$u\in\bu$}
\ForAll{$k=1,\;\ldots,\;N_0$}
\State{Sample i.i.d.~$X_{k,i,u}  \sim\hat{\pi}_{i}$ and $Y_{k,i,u} \sim \ct(\cdot|X_{k,i,u},u)$}
\State{Obtain cost $c(X_{k,i,u},u)$}
\EndFor
\State{Set $\hat{c}_{N}(y_i,u):=\frac{1}{N_0} \sum_{k=1}^{N_0}c(X_{k,i,u},u)$}
\State{Set 
$\hat{\ct}_N(y_j|y_i,u) := \frac{1}{N_0} \sum_{k=1}^{N_0} \mathbbm{1}\{Y_{k,i,u}\in B_{j}\}
$ for $j \in [M]$}
\EndFor
\EndFor
\State \Return $(\hat{\ct}_{N} , \hat{c}_{N})$
\end{algorithmic}
\end{algorithm}

We note two key differences compared to the single-trajectory setup in Subsection \ref{markovempiricalmodel} when independently generated transition data is used. First, the weighting measure $\pi$ is provided in this subsection and therefore does not need to be estimated. As a result, we can remove the terms involving $\kappa_{\tau,M}$ in Theorems \ref{cor:single_tranjectory} and \ref{cor:quantize}. Second, due to independence, Assumption \eqref{Model_Learning_Markov} holds: in particular, part (b) of Assumption \eqref{Model_Learning_Markov} holds due to the Hoeffding inequality for i.i.d.~random variables (see, e.g., Theorem 2.8 in \cite{CI_book}).

We now present upper bounds on the performance loss incurred when applying  $\overline{\hat{\gamma}_{N,\beta}}$ and $\overline{\hat{\gamma}_{N,\infty}}$, optimized under the estimated MDP $(\bx,\bu, \widetilde{\hat{\ct}_{N}},\overline{\hat{c}_N})$  to the true MDP $(\bx,\bu, \ct,c)$. The analysis follows similarly to that in Subsection \ref{markovempiricalmodel}, except that we replace part (b) of Assumption \eqref{Model_Learning_Markov} with Hoeffding inequality for i.i.d.~random variables (see, e.g., Theorem 2.8 in \cite{CI_book}), and that in bounding $\Pro\left(\mathcal{E}_{N,M}\right)$ (see \eqref{def:E_N_M}), we apply Bernstein's inequality. Therefore, we omit the detailed arguments for brevity.

\begin{theorem}\label{thm:independent_transitions_both}
Under Assumption \ref{myassump},
 there exists a constant $C > 0$, depending only on $ \clip, \tlip, \|c\|_{\infty}, \beta$, such that
\begin{align*}
\Exp\left[ \left\Vert\jb(c,\ct,{\overline{\hat{\gamma}_{N,\beta}}})-\jsc\right\Vert_{\infty}\right] \leq C \left(\delta_M +  \sqrt{\frac{\log(M|\bu|)}{N_0}}\right).    
\end{align*}
 Suppose that Assumption   \ref{acoelipassump} holds, and that $(\bx,\bu,\ct,c)$ satisfies the minorization condition with a probability measure $\rho$ and a constant $\epsilon >0$. Then, there exists a constant $C > 0$, depending only on $\clip, \tlip, \|c\|_{\infty}, \epsilon$, such that
\begin{align*}
 \Exp\left[ \left\Vert J_{\infty}(c,\ct,{\overline{\hat{\gamma}_{N,\infty}}})-J_{\infty}^{*}(c,\ct) \right\Vert_{\infty} \right] \leq  C \left(\delta_M +  \sqrt{\frac{\log(M|\bu|)}{N_0}}\right) + C    M^{2 } |\bu| e^{-C^{-1} \kappa_{\ct,M}  N_0},
\end{align*}
where we recall that $\kappa_{\ct,M}$ is defined in \eqref{def:kappa_pi_M_init}.
\end{theorem}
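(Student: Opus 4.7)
The plan is to mirror the proofs of Theorem \ref{cor:single_tranjectory} and Theorem \ref{cor:quantize}, exploiting the fact that the independent-transition setup strictly simplifies the concentration analysis. Concretely, I would pass through the intermediate finite model $(\widetilde{\ct^{\pi,M}},\overline{c^{\pi,M}})$ and apply the triangle inequality in the form $\|c-\overline{\hat{c}_N}\|_\infty \leq \clip \delta_M + \|c^{\pi,M}-\hat{c}_N\|_\infty$, together with the kernel-approximation bounds
$$
d_{\jsc}(\ct,\widetilde{\hat{\ct}_N}) \leq \tfrac{\clip(1+\tlip)}{1-\beta\tlip}\delta_M + d_{\jsc}(\ct^{\pi,M},\hat{\ct}_N), \qquad
d_{h_{c,\ct}^{*}}(\ct,\widetilde{\hat{\ct}_N}) \leq \tfrac{\clip(1+\tlip)}{1-\tlip}\delta_M + d_{h_{c,\ct}^{*}}(\ct^{\pi,M},\hat{\ct}_N),
$$
obtained from Lemma \ref{anotherextension_error}, Lemma \ref{lipbellman}, Theorem \ref{thrm:vanish_main}, and Lemma \ref{trivialinequality}. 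Plugging these into the first inequality of Theorem \ref{thrm:discounted_robust} (discounted case) and into Theorem \ref{upperbound3acoe} combined with Lemma \ref{lemma:minorization_for_discretization} (average cost case, on the complement of $\mathcal{E}_{N,M}$) reduces the problem to controlling $\Exp\|\hat{c}_N-c^{\pi,M}\|_\infty$, $\Exp\, d_{g}(\hat{\ct}_N,\ct^{\pi,M})$ for $g \in \{\jsc,h_{c,\ct}^{*}\}$, and $\Pro(\mathcal{E}_{N,M})$.

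For the statistical rates, the key point is that, because of the restart structure in Algorithm \ref{alg:sampling_restart}, the estimators $\hat{c}_N(y_i,u)$ and $\hat{\ct}_N(y_j|y_i,u)$ are empirical averages of $N_0$ i.i.d.\ bounded random variables with means $c^{\pi,M}(y_i,u)$ and $\ct^{\pi,M}(y_j|y_i,u)$, respectively (this is precisely where the weighting measure $\pi$ being prescribed, rather than estimated, eliminates the $\kappa_{\pi,M}$ factor). Applying the scalar Hoeffding bound together with the standard sub-Gaussian maximal inequality over the at-most $M|\bu|$ cost entries and $M^2|\bu|$ kernel entries (paired with bounded test functions $g$ of sup-norm $\lesssim \|c\|_\infty$), I obtain
$$
\Exp\|\hat{c}_N-c^{\pi,M}\|_\infty \lesssim \|c\|_\infty\sqrt{\tfrac{\log(M|\bu|)}{N_0}}, \qquad
\Exp\, d_g(\hat{\ct}_N,\ct^{\pi,M}) \lesssim \|g\|_\infty\sqrt{\tfrac{\log(M|\bu|)}{N_0}},
$$
which directly yields the discounted bound.

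For the average-cost case I must additionally handle the event $\mathcal{E}_{N,M}$, on which the learned kernel may fail to be minorized by $(\epsilon/2)\tau^M$. Here I would use Bernstein's inequality for Bernoulli averages: for each fixed $(i,j,u)$ with $p = \ct^{\pi,M}(y_j|y_i,u) \geq \kappa_{\ct,M}$, Bernstein gives $\Pro(\hat{\ct}_N(y_j|y_i,u) < p/2) \leq 2\exp(-c N_0 p) \leq 2\exp(-c N_0 \kappa_{\ct,M})$, which, after a union bound over at most $M^2|\bu|$ triples, produces the $M^2|\bu|\,e^{-C^{-1}\kappa_{\ct,M} N_0}$ term (note the linear, rather than quadratic, dependence on $\kappa_{\ct,M}$, which is the improvement over Theorem \ref{cor:quantize}). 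On $\mathcal{E}_{N,M}^c$, Theorem \ref{upperbound3acoe} together with Lemma \ref{lemma:minorization_for_discretization} (which gives $\wc_1(\rho,\tau^M)\leq \delta_M$) reduces the bound to the cost and kernel discrepancies already controlled above; on $\mathcal{E}_{N,M}$, the trivial bound $\|J_\infty(c,\ct,\overline{\hat{\gamma}_{N,\infty}})-J_\infty^*(c,\ct)\|_\infty \leq \|c\|_\infty$ suffices.

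The proof is essentially a transcription of the Markov-chain argument with simplified concentration inputs, so I do not anticipate substantive obstacles; the only point requiring some care is the use of Bernstein rather than Hoeffding for $\Pro(\mathcal{E}_{N,M})$, which is what produces the sharper exponent $\kappa_{\ct,M}$ (linear) in place of $\kappa_{\pi,M}^2\kappa_{\ct,M}^2$ (quadratic) that appeared in Theorem \ref{cor:quantize}. For this reason, as stated in the paper, the detailed argument is omitted for brevity.
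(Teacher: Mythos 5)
Your proposal is correct and follows essentially the same route the paper indicates: pass through the quantized model $(\widetilde{\ct^{\pi,M}},\overline{c^{\pi,M}})$ exactly as in Theorems \ref{cor:single_tranjectory} and \ref{cor:quantize}, replace the Markov concentration of Assumption \ref{Model_Learning_Markov}(b) with Hoeffding plus a maximal inequality for the i.i.d.\ restart data (which is what removes the $\kappa_{\pi,M}$ factors), and use Bernstein's inequality for the Bernoulli averages to bound $\Pro(\mathcal{E}_{N,M})$, yielding the linear exponent $\kappa_{\ct,M}N_0$. Your handling of the good/bad event split (minorization with $\epsilon/2$ and $\tau^M$ on $\mathcal{E}_{N,M}^c$, the trivial $\|c\|_\infty$ bound on $\mathcal{E}_{N,M}$) matches the intended argument.
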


For a fixed partition, hence a fixed $M$, the upper bounds in Theorem \ref{thm:independent_transitions_both} achieve the optimal $O(N^{-1/2})$ parametric rate for finite model learning. 

Further, recall that $\delta_M$ is the quantization error in \eqref{diameter_def}, and that the total number of observations is $N = M |\bu| N_0$. Assume that \eqref{quantizaton_error_scale} holds for $\delta_M$ and that \eqref{kappa_assump} holds for $\kappa_{\ct,M}$. As before, we choose $N$ so that the order of finite model approximation error matches that of the model learning error. Specifically, define 
\begin{equation*}
N_{\text{disc}}' = C  M^{2/d+1} |\bu| \log(M|\bu|), \quad
N_{\text{ave}}'= C M^{\max\{2,\ 2/d + 1 \}}|\bu|  \log(M|\bu|).
\end{equation*}
If we let $N =N_{\text{disc}}'$ in the $\beta$-discounted cost case and $N = N_{\text{ave}}'$ in the average-cost case, then the upper bounds in Theorem \ref{thm:independent_transitions_both} are both of order $M^{-1/d}$. Thus, with i.i.d.~transition data, the sample complexity is significantly improved compared to the single trajectory case. As discussed earlier, there are two main reasons: first, we do not need to estimate $\pi$, which is given; second, we use Bernstein's inequality to bound $P(\mathcal{E}_{N,M})$, instead of relying on the Hoeffding inequality in the proof of Lemma \ref{lemma:Markov_con}. Finally, we note that the sample complexity improves as the dimension $d$ of $\bx$ increases. This is because, as $d$ grows, the finite model approximation error becomes larger, which results in a less stringent requirement on the finite model learning error.

\section{Application to Robustness to Noise Distribution Misspecification and Empirical Noise Estimation}\label{secDistApprx}

In this section, we consider the disturbance approximation and the associated robustness properties as discussed in \cite{Gordienko2007,Gordienko2008, Gordienko2022, Dufour2015-qu}. Specifically,  we consider the following stochastic dynamical system:
\begin{equation}\label{def:disturbance}
X_{t+1}=f(X_{t}, U_{t},W_{t}) \text{ for } t \geq 0, \;\; \text{ where }\;\; \{W_{t}\}_{t=0}^{\infty} \text{ is  i.i.d.~with some distribution } \mu.
\end{equation}
Here, $\{W_{t}\}_{t=0}^{\infty}$ is referred to as the disturbance process \revise{taking value in $\bW \subset \bR^d$}. Consider a decision maker who knows $f$ and the true cost function $c$, but 
has no knowledge of the distribution $\mu$. However, we assume $\mu$ can be estimated, say, from realized samples $\{w_{t}\}_{t=0}^{n}$. The decision maker then computes an optimal policy using an estimated distribution $\nu$, which may depend on the observed samples. Our goal is to upper bound the robustness error due to this misspecification by the distance between $\mu$ and $\nu$. We show that this estimation procedure can be viewed as a controlled kernel approximation, which allows us to leverage the results from the previous section. Additional details are given in Section \ref{drivingnoiseapprox}.

\subsection{Robustness to Noise Distribution Approximations}\label{drivingnoiseapprox}\label{drivingnoisecont}
To reduce the problem to controlled kernel approximation, we first establish a relationship between the distance between disturbance distributions and the distance between their corresponding controlled kernels, along with their Lipschitz continuity.

Let  $\mu,\nu \in \mathcal{P}(\mathbb{W})$  be two probability measures. For $x \in \bx$, $u \in \bu$ and $A \in \mathcal{F}_{\bx}$, define two controlled kernels:
\begin{equation*}
\ct_{\mu}(A|x,u):=\mu(f^{-1}_{x,u}(A)),\quad \text{ and } \quad \ct_{\nu}(A|x,u):=\nu(f^{-1}_{x,u}(A))
\end{equation*}
where $f^{-1}_{x,u}(A) :=\{w \in \mathbb{W}: f(x,u,w) \in A\}$. For any continuous and bounded function $g \in C_{b}(\bx)$, by definition,
$$
d_{g}(\ct_{\mu},\ct_{\nu}) =\sup_{(x,u) \in \bx \times \bu}\left|
\int_{\bW} g(f(x,u,w)) \mu(dw) -
\int_{\bW} g(f(x,u,w)) \nu(dw)
\right|.
$$

We start with the continuity and Lipschitz continuity property of the controlled kernels $T_\mu$ and $T_\nu$. The following lemma is stated for $T_\mu$, but it also applies to $T_{\nu}$.

We say that $f$ is $\flipxu$-Lipschitz continuous (resp.~continuous) in $(x,u)$ if for each $w\in \mathbb{W}$, $f(\cdot,\cdot,w):\;\bx \times \bu \to\bx$ is $\flipxu$-Lipschitz continuous (resp.~continuous). Further, we say $f$ is $\flipx$-Lipschitz continuous in $x$ if for each $(u,w)\in\bu \times \mathbb{W}$, $f(\cdot,u,w):\;\bx \to\bx$ is $\flipx$-Lipschitz continuous. Finally, we say $f$ is $\flipw$-Lipschitz continuous in $w$ if for each $(x,u)\in\bx\times\bu$, $f(x,u,\cdot)$ is $\flipw$-Lipschitz continuous.

\begin{lemma}\label{lemma:noise_approx_cts}
 \begin{enumerate}[label=\roman*)]
    \item If $f$ is continuous in $(x,u)$, then the transition kernel $\ct_{\mu}$ is weakly continuous on $\bx\times\bu$, that is, for any $v\in C_{b}(\bx)$, $\int_{\bx}v(x')\ct_{\mu}(dx'|x,u)$ is a continuous function of $\bx \times \bu$.

\item Assume that $f$ is $\flipxu$-Lipschitz continuous in $(x,u)$. Let $v:\bx\to \bR$ be a bounded, Lipschitz function with a constant $\|v\|_{\mathrm{Lip}}$. Then for each $x,y \in \bx$ and $u,u' \in \bu$, we have
   $$
   \left|\int_{\bx} v(x') \ct_\mu(dx'|x,u) 
   - \int_{\bx} v(x') \ct_\mu(dx'|y,u') 
   \right|\leq \flipxu\|v\|_{\mathrm{Lip}}\left(d_{\bx}(x,y) + d_{\bu}(u,u') \right).
   $$
   
    \item Assume that $f$ is $\flipx$-Lipschitz continuous in $x$. Then for any $u\in\bu$,  and $x,y\in\bx$,
\begin{align*}
&\wc_{1}\left(\ct_{\mu}(\cdot|x,u),\ct_{\mu}(\cdot|y,u)\right)\leq\flipx d_{\bx}(x,y).
\end{align*}
\end{enumerate}
\end{lemma}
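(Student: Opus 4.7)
The unifying tool is the change-of-variables (pushforward) identity: for any bounded measurable $g: \bx \to \bR$,
\begin{equation*}
\int_{\bx} g(x') \, \ct_{\mu}(dx'|x,u) = \int_{\bW} g(f(x,u,w)) \, \mu(dw),
\end{equation*}
which follows directly from the definition $\ct_{\mu}(A|x,u) = \mu(f^{-1}_{x,u}(A))$ and a standard approximation of $g$ by simple functions. This identity converts all three claims, which concern integrals against $\ct_\mu$, into estimates on integrals against $\mu$ of compositions $g \circ f$.

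For part (i), fix $v \in C_b(\bx)$ and a sequence $(x_n,u_n) \to (x,u)$ in $\bx \times \bu$. Since $f$ is continuous in $(x,u)$, for each fixed $w \in \bW$ we have $f(x_n,u_n,w) \to f(x,u,w)$, hence $v(f(x_n,u_n,w)) \to v(f(x,u,w))$ by continuity of $v$. Since $|v(f(x_n,u_n,w))| \leq \|v\|_{\infty}$ and $\mu$ is a probability measure, the dominated convergence theorem yields $\int v(f(x_n,u_n,w))\mu(dw) \to \int v(f(x,u,w))\mu(dw)$, which by the pushforward identity is exactly weak continuity of $\ct_\mu$.

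For part (ii), again using the pushforward identity and the Lipschitz property of $v$ followed by that of $f$ in $(x,u)$,
\begin{equation*}
\bigl|v(f(x,u,w)) - v(f(y,u',w))\bigr| \leq \|v\|_{\textup{Lip}} \, d_{\bx}(f(x,u,w), f(y,u',w)) \leq \|v\|_{\textup{Lip}} \, \flipxu \bigl(d_{\bx}(x,y) + d_{\bu}(u,u')\bigr).
\end{equation*}
Integrating against $\mu$ and using the triangle inequality inside the integral gives the claim. This part requires no deep machinery; the only care is to note that the bound is pointwise in $w$ so the integration against $\mu$ is trivial.

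For part (iii), the cleanest route is to exhibit an explicit coupling and use the primal definition of $\wc_1$ rather than its dual. Define $\Phi_{x,y,u}: \bW \to \bx \times \bx$ by $\Phi_{x,y,u}(w) := (f(x,u,w), f(y,u,w))$ and let $\pi := (\Phi_{x,y,u})_{\#}\mu$ be the pushforward of $\mu$ under $\Phi_{x,y,u}$. By construction, the two marginals of $\pi$ are $\ct_\mu(\cdot|x,u)$ and $\ct_\mu(\cdot|y,u)$, so $\pi$ is a valid coupling. Therefore
\begin{equation*}
\wc_1\bigl(\ct_\mu(\cdot|x,u), \ct_\mu(\cdot|y,u)\bigr) \leq \int_{\bx \times \bx} d_{\bx}(x',y') \, \pi(dx',dy') = \int_{\bW} d_{\bx}\bigl(f(x,u,w), f(y,u,w)\bigr) \, \mu(dw) \leq \flipx d_{\bx}(x,y),
\end{equation*}
using the Lipschitz assumption on $f$ in $x$. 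No step is technically hard; the only mild subtlety is checking measurability of $\Phi_{x,y,u}$, which follows from measurability (in fact continuity) of $f$ in $w$, but a dual-form proof via Kantorovich--Rubinstein would avoid even that.
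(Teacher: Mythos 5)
Your proposal is correct. Parts (i) and (ii) coincide with the paper's argument: the same pushforward identity $\int_{\bx} v(x')\,\ct_{\mu}(dx'|x,u)=\int_{\bW} v(f(x,u,w))\,\mu(dw)$, followed by bounded convergence for (i) and a pointwise-in-$w$ Lipschitz composition bound for (ii).

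The only genuine divergence is in part (iii). The paper stays on the dual side: it invokes the Kantorovich--Rubinstein formula, writes $\wc_1$ as a supremum over $1$-Lipschitz test functions $g$, and bounds $|g(f(x,u,w))-g(f(y,u,w))|\leq d_{\bx}(f(x,u,w),f(y,u,w))\leq \flipx d_{\bx}(x,y)$ inside the integral. You instead construct the explicit coupling $\pi=(\Phi_{x,y,u})_{\#}\mu$ with $\Phi_{x,y,u}(w)=(f(x,u,w),f(y,u,w))$ and use the primal (infimum-over-couplings) definition. Both yield the identical bound in essentially one line. Your coupling route has the minor advantage of not relying on the duality theorem at all (it only needs the trivial direction that any coupling upper-bounds the infimum), at the cost of the measurability check on $\Phi_{x,y,u}$ that you correctly flag; the paper's dual route avoids constructing any object but leans on Kantorovich--Rubinstein, which the paper has already recorded as a standing remark, so neither approach carries real overhead in context.
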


\begin{proof}
Note that for any $v \in C_b(\bx)$, by definition,
$\int_{\bx} v(x') T_{\mu}(dx'|x,u) = \int_{\bW} v(f(x,u,w)) \mu(dw).
$

Under the assumption of claim (i), for each $w \in \bW$, $v(f(\cdot,\cdot,w)):\bx\times \bu \to \mathbb{R}$ is a continuous function bounded by $\|v\|_{\infty}$. Then claim (i) follows from the bounded convergence theorem.

Under the assumption of claim (ii), for each $w \in \bW$, $v(f(\cdot,\cdot,w)):\bx\times \bu \to \mathbb{R}$ is Lipschitz continuous with a constant $\|v\|_{\mathrm{Lip}} \flip$. Thus, for each $x,y \in \bx$ and $u,u' \in \bu$, we have
\begin{align*}
   &\left|\int_{\bx} v(x') \ct_\mu(dx'|x,u) 
   - \int_{\bx} v(x') \ct_\mu(dx'|y,u') 
   \right|\leq \int_{\bW} 
   \left|v(f(x,u,w)) - v(f(y,u',w))\right| \mu(dw) \\
   \leq &\flipxu\|v\|_{\mathrm{Lip}}\left(d_{\bx}(x,y) + d_{\bu}(u,u') \right),
\end{align*}
which proves the claim (ii). Finally, we focus on claim (iii). By the dual formulation of $\wc_{1}$, 
\begin{align*}
\wc_{1}(\ct_{\mu}(\cdot|x,u),\ct_{\mu}(\cdot|y,u))
&=\sup_{\gglip\leq 1}\left(\int g(f(x,u,w))\mu(dw)-\int g(f(y,u,w))\mu(dw) \right)\\
&\leq \sup_{\gglip\leq 1}\left(\int \flipx d_{\bx}(x,y) \mu(dw)\right) = \flipx d_{\bx}(x,y),
\end{align*}
where the supremum is taken over all $g: \bx \to \mathbb{R}$ such that $\gglip\leq 1$. The proof is complete.
\end{proof}

We denote by $\gamma_{\nu, \beta}^{*}$ and $\gamma_{\nu}^{*}$ the optimal policy for the MDP $(\bx, \bu, \ct_{\nu}, c)$ under the $\beta$-discounted  and average-cost criterion respectively; that is, these are the optimal policies when the common distribution of the noise sequence $\{W_t\}_{t \geq 0}$ in \eqref{def:disturbance} is given by $\nu$. Our goal is to bound the performance degradation incurred when these policies, optimized under $\ct_\nu$, are applied to an MDP governed by the true dynamics $\ct_{\mu}$. 
Recall that $h_{c,\ct_{\mu}}^{*}$ and $h_{c,\ct_{\nu}}^{*}$ are defined in Theorem  \ref{thrm:vanish_main}.

\begin{theorem}\label{cor:drivingnoisemainresult1}
Suppose that parts (a) and (b) in Assumption  \ref{basicassump} hold, and that $f$ is continuous in $(x,u)$. Then 
 \begin{align*}
\left\Vert\jb(c,\ct_{\mu},\gamma_{\nu,\beta}^{*})-\jb^{*}(c,\ct_{\mu}) \right\Vert_{\infty} \leq  \frac{2\beta }{\revise{(1-\beta)^2}}  \Delta_1, \text{ where } \Delta_1 :=
\sup_{(x,u) \in \bx \times \bu}\left|\int_{\bW} \jb^{*}(c,\ct_{\mu})(f(x,u,w)) (\mu-\nu)(dw)\right|.
 \end{align*}
\revise{Further, let $\bx$ be compact 
and assume that $c$ is  $\clip$-Lipschitz in $x$, and that $f$ is $\flipx$-Lipschitz continuous in $x$ and $\flipxu$-Lipschitz continuous in $(x,u)$. If $\flipx < 1$, then}
\begin{align*}
&\left\Vert J_{\infty}(c,\ct_{\mu},\gamma_{\nu}^{*}) - J_{\infty}^{*}(c,\mathcal{T}_{\mu})\right\Vert_{\infty}\leq \Delta_2 + \Delta_3, \quad \text{ where } \\
&\;\Delta_2 :=
\sup_{(x,u) \in \bx \times \bu}\left|\int_{\bW} h_{c,\ct_{\mu}}^{*}(f(x,u,w)) (\mu-\nu)(dw)\right|, \quad 
\Delta_3 := \sup_{(x,u) \in \bx \times \bu}\left|\int_{\bW} h_{c,\ct_{\nu}}^{*}(f(x,u,w)) (\mu-\nu)(dw)\right|.
\end{align*}
\end{theorem}

\begin{proof} 
Due to Lemma \ref{lemma:noise_approx_cts}(i), Assumption \ref{basicassump} holds for both claims. \revise{The first claim then follows immediately from \eqref{eq:discountedrobust1} stated in Theorem \ref{thrm:discounted_robust}.}

Now, we focus on the second claim and apply Theorem \ref{vanishingproof2}. Specifically, we only need to verify Assumptions \ref{equicontinuousassump}(b) and (c) for both MDPs, $\ct_\mu$ and $\ct_{\nu}$. Due to Lemma \ref{lemma:noise_approx_cts}(iii) and Lemma \ref{lipbellman}, for each $\beta \in (0,1)$, $\ct_\mu$ and $\ct_{\nu}$ are Lipschitz with a constant $L := {\clip}/{(1-\flipx)}$, which implies that  Assumption  \ref{equicontinuousassump}(b) holds for both MDPs.
Finally, by Lemma \ref{lemma:noise_approx_cts}(ii), Assumption  \ref{equicontinuousassump}(c) holds with $\widetilde{L} := \flipxu$ for both MDPs. The proof  is complete by \revise{part two in} Theorem \ref{vanishingproof2}.
 \end{proof}

Next, we present results that involve the $\wc_1$-distance between $\mu$ and $\nu$. We start with a lemma that bounds the $\wc_{1}$-distance between controlled kernels $T_\mu$ and $T_\nu$.

\begin{lemma}\label{kernelvsnoisewass}
Assume that for each $(x,u)\in\bx\times\bu$, $f(x,u,\cdot)$ is $\flipw$-Lipschitz continuous on $\mathbb{W}$. 
 Then $d_{\wc_{1}}(\ct_{\mu},\ct_{\nu})
\leq {\flipw} \wc_{1}(\mu,\nu)$.

\end{lemma}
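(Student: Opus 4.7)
The plan is to unwind the definitions and apply Kantorovich--Rubinstein duality for $\wc_1$ twice: once to expose a test function against the controlled kernels, and once to re-interpret the resulting integral as a Wasserstein-$1$ discrepancy between $\mu$ and $\nu$ on $\mathbb{W}$.

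Concretely, I would first fix $(x,u)\in\bx\times\bu$ and note that by the dual formulation of $\wc_1$ and the definition of $\ct_\mu,\ct_\nu$,
\begin{align*}
\wc_{1}(\ct_{\mu}(\cdot|x,u),\ct_{\nu}(\cdot|x,u))
&=\sup_{\|g\|_{\textup{Lip}}\leq 1}\left|\int_{\bx} g(x')\,\ct_{\mu}(dx'|x,u)-\int_{\bx} g(x')\,\ct_{\nu}(dx'|x,u)\right|\\
&=\sup_{\|g\|_{\textup{Lip}}\leq 1}\left|\int_{\mathbb{W}} g(f(x,u,w))\,\mu(dw)-\int_{\mathbb{W}} g(f(x,u,w))\,\nu(dw)\right|.
\end{align*}

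The key observation is that for every $g:\bx\to\bR$ with $\|g\|_{\textup{Lip}}\leq 1$, the composition $g\circ f(x,u,\cdot):\mathbb{W}\to\bR$ is Lipschitz on $\mathbb{W}$ with constant at most $\|g\|_{\textup{Lip}}\cdot\|f\|_{\textup{Lip}(\mathbb{W})}\leq \flipw$, by the hypothesis that $w\mapsto f(x,u,w)$ is $\flipw$-Lipschitz. Therefore $(1/\flipw)\,g\circ f(x,u,\cdot)$ is a valid test function for $\wc_1(\mu,\nu)$, and a second application of Kantorovich--Rubinstein duality yields
\begin{equation*}
\left|\int_{\mathbb{W}} g(f(x,u,w))(\mu-\nu)(dw)\right|\leq \flipw\,\wc_{1}(\mu,\nu).
\end{equation*}

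Taking the supremum over $g$ bounds $\wc_{1}(\ct_{\mu}(\cdot|x,u),\ct_{\nu}(\cdot|x,u))$ by $\flipw\,\wc_{1}(\mu,\nu)$, and since this bound is independent of $(x,u)$, taking the supremum over $\bx\times\bu$ gives $d_{\wc_{1}}(\ct_{\mu},\ct_{\nu})\leq \flipw\,\wc_{1}(\mu,\nu)$, as claimed. There is no real obstacle here; the only step that requires mild care is confirming that the Lipschitz-continuity-in-$w$ hypothesis on $f$ is exactly what is needed to transport a Lipschitz test function on $\bx$ into a Lipschitz test function on $\mathbb{W}$, with the constant $\flipw$ appearing as a multiplicative factor.
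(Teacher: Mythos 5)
Your proof is correct and follows essentially the same route as the paper: both apply the Kantorovich--Rubinstein dual formulation to $\wc_1(\ct_\mu(\cdot|x,u),\ct_\nu(\cdot|x,u))$, observe that $g\circ f(x,u,\cdot)$ is $\flipw$-Lipschitz on $\mathbb{W}$ for any $1$-Lipschitz $g$, and conclude via the definition of $\wc_1(\mu,\nu)$ before taking the supremum over $(x,u)$. No gaps.
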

\begin{proof} By the dual formulation of Wasserstein-1 distance,
\begin{align*}
\sup_{x,u}\wc_{1}(\ct_{\mu}(\cdot|x,u),\ct_{\nu}(\cdot|x,u))
=\sup_{x,u}\sup_{\gglip\leq 1}\left|\int
 g(f(x,u,w))\mu(dw) -\int
 g(f(x,u,w))\nu(dw)\right|,
\end{align*}
where the supremum is taken over all $(x,u) \in \bx \times \bu$ and $g: \bx \to \mathbb{R}$ such that $\gglip\leq 1$. Clearly, for each fixed $x,u$, the function $g(f(x,u,\cdot)): \bW \to \mathbb{R}$ is $\flipw$-Lipschitz. Then the proof is complete due to the definition of Wasserstein-1 distance.
\end{proof}

\begin{theorem}\label{drivingnoisemainresult}
Assume that $\bu$ is compact and that $c:\bx\times\bu\to\bR$ is nonnegative, bounded,   continuous, and $\clip$-Lipschitz continuous in $x$. Further,  assume that $f$ is continuous in $(x,u)$, $\flipx$-Lipschitz continuous in $x$ and $\flipw$-Lipschitz continuous in $w$.
\begin{enumerate}[label=\roman*)]
    \item 
If $\beta \flipx < 1$, then 
 \begin{align*}
\left\Vert\jb(c,\ct_{\mu},\gamma_{\nu,\beta}^{*})-\jb^{*}(c,\ct_{\mu}) \right\Vert_{\infty} \leq  \frac{2\beta\clip\flipw}{(1-\beta)(1-\beta\flipx)}\wc_{1}(\mu,\nu).
 \end{align*}
\item  \revise{If $\bx$ is compact and $\flipx < 1$
, then }
\begin{align*}
\left\Vert J_{\infty}(c,\ct_{\mu},\gamma_{\nu}^{*}) - J_{\infty}^{*}(c,\mathcal{T}_{\mu})\right\Vert_{\infty} \leq\frac{2\clip\flipw}{1-\flipx} \wc_{1}(\mu,\nu).
\end{align*}
\end{enumerate}
\end{theorem}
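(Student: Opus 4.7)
The plan is to reduce both claims to the kernel-robustness results already established, by translating $\wc_{1}(\mu,\nu)$ into $d_{\wc_{1}}(\ct_{\mu},\ct_{\nu})$ via Lemma \ref{kernelvsnoisewass}, and then invoking Corollary \ref{upperbound3cor} for (i) and Theorem \ref{vanishingproof2} for (ii). No new estimate on the controlled kernels beyond those already proven is required.

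For claim (i), I would first verify that both MDPs $(\bx,\bu,\ct_{\mu},c)$ and $(\bx,\bu,\ct_{\nu},c)$ satisfy Assumption \ref{myassump}. Assumption \ref{basicassump} holds because $c$ is nonnegative, bounded, continuous, $\bu$ is compact, and Lemma \ref{lemma:noise_approx_cts}(i) gives weak continuity of $\ct_{\mu}$ and $\ct_{\nu}$. Assumption \ref{myassump}(a) is our hypothesis on $c$, and Lemma \ref{lemma:noise_approx_cts}(iii) yields $\tlip,\|\ct_{\nu}\|_{\text{Lip}}\leq \flipx$; hence $\beta\flipx<1$ validates Assumption \ref{myassump}(c) for both models. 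The third inequality of Corollary \ref{upperbound3cor} (applied with $\altc=c$) then gives
\begin{equation*}
\left\|\jb(c,\ct_{\mu},\gamma_{\nu,\beta}^{*})-\jb^{*}(c,\ct_{\mu})\right\|_{\infty}\;\leq\;\frac{2\beta\clip}{(1-\beta)(1-\beta\flipx)}\,d_{\wc_{1}}(\ct_{\mu},\ct_{\nu}),
\end{equation*}
after which Lemma \ref{kernelvsnoisewass} substitutes $d_{\wc_{1}}(\ct_{\mu},\ct_{\nu})\leq \flipw\wc_{1}(\mu,\nu)$ to yield the stated bound.

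For claim (ii), I would use the first inequality in Theorem \ref{vanishingproof2}. Assumption \ref{basicassump} is verified as above, and Assumption \ref{equicontinuousassump}(a) is exactly the hypothesis that $\bx$ is compact. For Assumption \ref{equicontinuousassump}(b), since $\flipx<1$ we have $\beta\flipx<1$ for every $\beta\in(0,1)$, so Lemma \ref{lipbellman} combined with Lemma \ref{lemma:noise_approx_cts}(iii) shows that both $\jb^{*}(c,\ct_{\mu})$ and $\jb^{*}(c,\ct_{\nu})$ are Lipschitz with constants at most $\clip/(1-\beta\flipx)\leq L:=\clip/(1-\flipx)$, uniformly over $\beta\in(0,1)$; Assumption \ref{equicontinuousassump}(b) is thus satisfied with this $L$ for any choice of $\beta^{*}\in(0,1)$. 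The vanishing-discount hypothesis in Theorem \ref{vanishingproof2} is precisely \eqref{noise_approx_vanishing_cond}, which is assumed. The first claim of Theorem \ref{vanishingproof2} (again with $\altc=c$) then gives
\begin{equation*}
\left\|J_{\infty}(c,\ct_{\mu},\gamma_{\nu}^{*})-J_{\infty}^{*}(c,\ct_{\mu})\right\|_{\infty}\;\leq\;2L\,d_{\wc_{1}}(\ct_{\mu},\ct_{\nu})\;\leq\;\frac{2\clip\flipw}{1-\flipx}\wc_{1}(\mu,\nu),
\end{equation*}
where the last step is Lemma \ref{kernelvsnoisewass}.

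No serious obstacle is anticipated: the argument is essentially careful bookkeeping of Lipschitz constants against the hypotheses of the relevant prior results. The only mildly subtle step is recognizing that the strict inequality $\flipx<1$ (rather than merely $\beta\flipx<1$) is what allows the Lipschitz constants of $\jb^{*}(c,\ct_{\mu})$ and $\jb^{*}(c,\ct_{\nu})$ to be bounded uniformly in $\beta\in(0,1)$, which is crucial for invoking Assumption \ref{equicontinuousassump}(b) in the vanishing-discount argument.
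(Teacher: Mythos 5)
Your proposal is correct and follows essentially the same route as the paper: part (i) is the third inequality of Theorem \ref{thrm:discounted_robust} combined with Lemma \ref{lipbellman}, Lemma \ref{trivialinequality}, and Lemma \ref{kernelvsnoisewass} (your use of Corollary \ref{upperbound3cor} is just that combination pre-packaged), and part (ii) is the first inequality of Theorem \ref{vanishingproof2} with $L=\clip/(1-\flipx)$ followed by Lemma \ref{kernelvsnoisewass}. Your observation that $\flipx<1$ is what makes the Lipschitz bound uniform in $\beta$, so that Assumption \ref{equicontinuousassump}(b) holds and Assumption \ref{equicontinuousassump}(c) is not needed for the first inequality of Theorem \ref{vanishingproof2}, is exactly the right bookkeeping.
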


\begin{proof}
Due to Lemma \ref{lemma:noise_approx_cts}(i), Assumption \ref{basicassump} holds. Due to Lemma \ref{lemma:noise_approx_cts}(iii) and Lemma \ref{lipbellman}, for each $\beta \in (0,1)$, $J_{\beta}^{*}(c,\ct)$ is Lipschitz with a constant ${\clip}/{(1-\revise{\beta}\flipx)}$.   Then the first claim follows immediately from Lemma \ref{kernelvsnoisewass}, Lemma \ref{trivialinequality} and result \eqref{eq:discountedrobust2} in Theorem  \ref{thrm:discounted_robust}. The second claim follows from Theorem \ref{vanishingproof2}, similarly to the proof of Theorem \ref{cor:drivingnoisemainresult1}.
\end{proof}

\begin{remark}Note that in Theorem \ref{drivingnoisemainresult}, the upper bounds involve the $\wc_1$-distance between $\mu$ and $\nu$. In contrast, Theorem \ref{cor:drivingnoisemainresult1} requires bounding only the difference between $\mu$ and $\nu$ integrated against a \textit{single} function. 
For this reason, as we shall see in Subsection \ref{subsec:complexity_bounds},  we can achieve the \textit{optimal} statistical rates under the conditions of Theorem \ref{cor:drivingnoisemainresult1}, when we estimate $\mu$ by empirical distributions, which is \textit{not} the case if we use Theorem \ref{drivingnoisemainresult}. The following two subsections will highlight this distinction.
\end{remark}

Further, Theorem \ref{cor:drivingnoisemainresult1} requires that   $f$ is $\flipxu$-Lipschitz in $(x,u)$, which is not required for Theorem \ref{drivingnoisemainresult}. On the other hand, Theorem \ref{drivingnoisemainresult} requires that   $f$ is $\flipw$-Lipschitz in $w$, which is not required for Theorem \ref{cor:drivingnoisemainresult1}.

\subsection{General Sample Complexity Bounds}
\label{subsec:complexity_bounds}
In this subsection, we consider a dynamical system in which the state evolution given the previous state, current action, and noise realization is known, but the noise distribution is unknown and must be estimated, and we derive sample-complexity bounds in terms of the resulting performance degradation.

\subsubsection{Sample Complexity Bounds  under Regularity in Noise}
Recall the stochastic dynamic system in \eqref{def:disturbance}, where the function $f$ is known but the disturbance distribution $\mu$ is \textit{unknown}. Denote by $\ct_{\mu}$ the associated controlled transition kernel.

Assume  that we can observe an i.i.d.~sequence $\{W_{t}\}_{t=0}^{\infty}$ such that for any $t\geq 0$, $W_{t}\sim \mu$, and that we use the empirical measures $\{\mu_n:n \geq 0\}$  to estimate $\mu$:
\begin{equation}\label{def:empirical_measures}
\mu_{n}(\cdot):=\frac{1}{n+1}\sum_{t=0}^{n}\delta_{W_{t}}(\cdot), \text{ for } n \geq 0,
\end{equation}

For $n \geq 0$, given the estimate $\mu_n$, we consider the following stochastic dynamic system:
\begin{align*}
\widetilde{X}_{t+1}=f(\widetilde{X_{t}},\widetilde{U}_{t},\widetilde{W}_{t}) \text{ for } t \geq 0, \;\; \text{ where }\;\; \{\widetilde{W}_{t}\}_{t=0}^{\infty} \text{ is  i.i.d.~with distribution } \mu_n,
\end{align*}
and denote by $\ct_{\mu_n}$ the associated controlled transition kernel. Since the cost function $c$ is assumed known, we can solve the $\beta$-discounted cost and the average-cost MDP $(\bx,\bu,\ct_{\mu_n},c)$, and denote the solutions by 
$\gamma^*_{\mu_n,\beta}$ and $\gamma^*_{\mu_n}$ respectively.

By applying the result from \cite{fournier2013rate} concerning the Wasserstein-1 distance between $\mu$ and its empirical counterpart $\mu_n$, Theorem \ref{drivingnoisemainresult} immediately yields the following result. For $q \geq 1$, denote the $q$-th moment of $\mu$ by 
$M_{q}(\mu):=\int_{\bR^{d}}|w|^{q}\mu(dw)$.

\begin{theorem} 
\label{thrm:empirical_driving_noise_rates}
\revise{Assume the conditions in Theorem \ref{drivingnoisemainresult} hold.} 
Let  $\beta\in(0,1)$ be given such that $\beta\flipx< 1$. 

\begin{enumerate}[label=\roman*)]
    \item  If there exists $q>1$ such that $M_{q}(\mu)<\infty$, then there exists a positive constant $C$ depending on $d$, $q$, $\mu$, $\beta$, $\clip$, $\flipx$, and $\flipw$, such that for all $n\geq 1$,
\begin{equation*}
\mathbb{E}\left[\left\Vert\jb(c,\ct_{\mu},\gamma_{\mu_{n},\beta}^{*})-J_{\beta}^{*}(c,\ct_{\mu})\right\Vert_{\infty}\right]\leq C
\begin{cases}
n^{-1/2}+n^{-(q-1)/q}, &\text{if }d<2,\;q\neq 2,\\
n^{-1/2}\log\left(1+n\right)+n^{-(q-1)/q}, &\text{if }d=2,\;q\neq 2,\\
n^{-1/d}+n^{-(q-1)/q}, &\text{if }d> 2,\;q\neq d/(d-1).
\end{cases}
\end{equation*}

\item Additionally, \revise{assume that $\bx$ is compact and $\flipx < 1$}. Then the above result holds under the average-cost criterion if we replace  $J_{\beta}$, $J_{\beta}^{*}$, and $\gamma_{\mu_{n},\beta}^*$  by $J_{\infty}$, $J_{\infty}^{*}$, and $\gamma_{\mu_{n}}^*$, respectively; in this case, $C$ no longer depends on $\beta$.  
\end{enumerate}
\end{theorem}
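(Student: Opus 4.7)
The plan is to recognize that the theorem is essentially a corollary of Theorem \ref{drivingnoisemainresult} combined with the Fournier--Guillin rates for convergence of empirical measures in Wasserstein-$1$ distance. First, I would verify that under the stated hypotheses, Theorem \ref{drivingnoisemainresult} applies with $\nu = \mu_n$. The Lipschitz continuity of $f$ in $x$ and its continuity in $u$ (which follows from its Lipschitz-in-$x$ property together with the assumed continuity in the other argument) give the joint continuity in $(x,u)$ required; $\flipw$-Lipschitz continuity in $w$, $\clip$-Lipschitz and bounded $c$, compact $\bu$, and the contraction condition $\beta\flipx<1$ are all assumed directly.

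Second, I would apply Theorem \ref{drivingnoisemainresult}(i) with $\nu = \mu_n$ to obtain the pointwise bound
\begin{equation*}
\left\Vert\jb(c,\ct_{\mu},\gamma_{\mu_{n},\beta}^{*})-J_{\beta}^{*}(c,\ct_{\mu})\right\Vert_{\infty}\leq \frac{2\beta\clip\flipw}{(1-\beta)(1-\beta\flipx)}\wc_{1}(\mu,\mu_{n}).
\end{equation*}
Taking expectations on both sides reduces the problem entirely to bounding $\mathbb{E}[\wc_{1}(\mu,\mu_{n})]$. At this point I would directly invoke Theorem 1 of \cite{fournier2013rate}: under the moment condition $M_{q}(\mu)<\infty$ for some $q>1$, that theorem produces precisely the three-case rate $n^{-1/2}$, $n^{-1/2}\log(1+n)$, or $n^{-1/d}$ (according as $d<2$, $d=2$, or $d>2$), augmented by the $n^{-(q-1)/q}$ term coming from the moment tail, with a constant depending on $d$, $q$, and $\mu$. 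Multiplying through by the deterministic prefactor yields the discounted claim, with the constant $C$ absorbing $\clip$, $\flipw$, $\flipx$, $\beta$, and the Fournier--Guillin constant.

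For the average cost claim I would use Theorem \ref{drivingnoisemainresult}(ii) in the same way, producing the bound $\tfrac{2\clip\flipw}{1-\flipx}\wc_{1}(\mu,\mu_{n})$, whose prefactor is $\beta$-free, so $C$ no longer depends on $\beta$. The main subtlety, and the principal obstacle, is that Theorem \ref{drivingnoisemainresult}(ii) requires (a) compactness of $\bx$ and (b) the vanishing-discount limit condition \eqref{noise_approx_vanishing_cond}. These are not listed explicitly in the statement of Theorem \ref{thrm:empirical_driving_noise_rates}, so I would treat them as implicit standing assumptions (e.g., invoking Assumption \ref{phsassump} for the true kernel $\ct_{\mu}$, which by Lemma \ref{vanishingproof2cor} yields \eqref{noise_approx_vanishing_cond} automatically). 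Once these are granted, the average-cost half follows by the same two-line chain: apply Theorem \ref{drivingnoisemainresult}(ii), then Fournier--Guillin. Beyond that bookkeeping, the argument is immediate.
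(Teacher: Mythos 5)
Your proposal is correct and follows exactly the route the paper takes: the paper itself presents this theorem as an immediate consequence of Theorem \ref{drivingnoisemainresult} applied with $\nu=\mu_n$, followed by taking expectations and invoking the Fournier--Guillin bound on $\mathbb{E}\left[\wc_{1}(\mu,\mu_{n})\right]$, and it supplies no further proof. Your observation that the average-cost half implicitly requires compactness of $\bx$ and condition \eqref{noise_approx_vanishing_cond} (obtainable, e.g., via Assumption \ref{phsassump} and Lemma \ref{vanishingproof2cor}) is a fair catch, since the paper's statement omits these hypotheses as well.
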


Assume that $M_q(\mu) < \infty$ for $q > \max\{2,d\}$. Then the convergence rate is $n^{-1/2}$, $n^{-1/2} \log(n)$, and $n^{-1/d}$ for $d = 1$, $2$, and $d > 2$, respectively. For $d = 1$, the rate is parametric and thus optimal. For $d \geq 2$, the rate is generally suboptimal. However, we emphasize that the above theorem does not assume $f$ to be jointly Lipschitz in both $x$ and $u$. Under this additional regularity condition, we derive improved rates \revise{under discounted cost} in the next subsection.

\subsubsection{Improved Sample Complexity Bounds under Regularity in State and Action}  As in the previous subsection, suppose that we do not know the probability measure $\mu \in \mathcal{P}(\mathbb{W})$, but we compute empirical estimates as in (\ref{def:empirical_measures}).

\revise{
In view of Theorem \ref{cor:drivingnoisemainresult1}, the discounted-cost case reduces to bounding $\Delta_1$ with $\nu$ replaced by the empirical measure $\mu_n$. This can be handled using standard empirical-process tools \cite{vanderVaart1996}. The key observation is that, under the regularity assumptions below, the function $(x,u)\mapsto J_\beta^*(c,\mathcal T_\mu)(f(x,u,w))$ is uniformly Lipschitz in $(x,u)$, with a Lipschitz constant that can be controlled explicitly. We denote $\|c\|_\infty:=\sup_{x,u}|c(x,u)|$, and let $\mathrm{diam}(\bx)$ and $\mathrm{diam}(\bu)$ be the diameters of $\bx$ and $\bu$, respectively.
}

\begin{theorem}\label{cor:mu_n_result}
Assume that both $\bx \subset \bR^{d_1}$ and $\bu\subset \bR^{d_2}$ are compact, and that $c:\bx\times\bu\to\bR$ is nonnegative, bounded,   continuous, and $\clip$-Lipschitz in $x$. Further, assume that $f$ is $\flipx$-Lipschitz continuous in $x$ and  $\flipxu$-Lipschitz continuous in $(x,u)$. If $\beta \flipx < 1$, then there exists a constant $C > 0$, depending only on $d_1$, $d_2$, $\mathrm{diam}(\bx)$ and $\mathrm{diam}(\bu)$, such that  
$$
\Exp\left[\left\Vert\jb(c,\ct_{\mu},\gamma_{\mu_n,\beta}^{*})-\jb^{*}(c,\ct_{\mu}) \right\Vert_{\infty} \right]\leq  C \frac{\beta }{\revise{(1-\beta)^2}}  \left( \frac{\|c\|_{\infty}}{1-\beta} + 
\frac{\flipxu \clip }{1-\beta \flipx}
\right) n^{-1/2}.
$$
\end{theorem}
\begin{proof}
Define the following constants
\begin{align}
 \label{aux:constants}   
m_{\beta} := \frac{\clip}{1-\beta \flipx}\;\;\text{ and }\;\; M_{\beta} := m_{\beta}\flipxu.
\end{align}
Due to Theorem \ref{cor:drivingnoisemainresult1}, it suffices to bound 
$$
\Exp\left[\Delta_1\right] = \Exp\left[\sup_{(x,u) \in \bx \times \bu} \left|
\frac{1}{n+1} \sum_{t=0}^{n} \left(
\jb^{*}(c,\ct_{\mu})(f(x,u,W_t)) - 
\int_{\bW} \jb^{*}(c,\ct_{\mu})(f(x,u,w)) \mu(dw) \right)
\right|\right],
$$
where we replace $\nu$ by $\mu_n$ in the definition of $\Delta_1$. Clearly, $|\jb^{*}(c,\ct_{\mu})(f(x,u,w))| \leq \|c\|_{\infty}/(1-\beta) := B_{\beta}$ for each $(x,u,w) \in \bx \times \bu \times \bW$. Further, due to Lemma \ref{lemma:noise_approx_cts}(iii) and Lemma \ref{lipbellman}, for each $\beta \in (0,1)$, $J_{\beta}^{*}(c,\ct_{\mu})$ is Lipschitz with a constant $m_{\beta}$. Thus, for each $\beta \in (0,1)$, $\jb^{*}(c,\ct_{\mu})\circ f$ is $M_{\beta}$-Lipschitz continuous in $(x,u)$. By applying Lemma \ref{app:empirical_proc} in Appendix \ref{app:statistical_analysis} with constants $B_{\beta}$ and $M_{\beta}$ (see also Remark \ref{remark:empir_proc}), we have
\begin{align}\label{aux:driving_noise_step}
    \Exp\left[\Delta_1 \right]\leq C    \left( \frac{\|c\|_{\infty}}{1-\beta} + 
\frac{\flipxu \clip }{1-\beta \flipx}
\right) n^{-1/2}.
\end{align}
This completes the proof for the $\beta$-discounted case  in view of Theorem \ref{cor:drivingnoisemainresult1}.
\end{proof}

In Theorem \ref{cor:mu_n_result}, we achieve the parametric statistical rate $O(n^{-1/2})$ \revise{under the discounted-cost criterion}, which in general cannot be improved. In \cite{joseblanchet2024statisticallearningdistributionallyrobust},  related results are derived for the $\beta$-discounted criterion under similar conditions. Specifically, in \cite{joseblanchet2024statisticallearningdistributionallyrobust}, it is assumed that both $\bx$ and $\bu$ are compact, and that $\jb^{*}(c,\ct_{\mu})\circ f$ is assumed to be $L$-Lipschitz in $(x,u)$ (part (3) of their Assumption 2). It is not a priori clear under what conditions the composition $\jb^{*}(c,\ct_{\mu}) \circ f$ is Lipschitz continuous in $(x,u)$. In this work, we provide concrete sufficient conditions---specifically, that $\flipx < \beta^{-1}$---under which the Lipschitz constant $L$ can be chosen as ${\flipxu \clip}/{(1 - \beta \flipx)}$.

\revise{
\begin{remark}
The parametric bound in Theorem \ref{cor:mu_n_result} is stated only for the discounted-cost criterion. 
Although Theorem \ref{cor:drivingnoisemainresult1} gives both discounted- and average-cost robustness bounds, the average-cost case does not lead to a comparable empirical-process argument directly. Indeed, applying the average-cost bound requires controlling both $\Delta_2$ and $\Delta_3$, and the main difficulty is 
$$
\Delta_{3}:=\sup_{(x,u)\in\bx\times\bu}
\left|
\int_{\bW} h_{c,\ct_{\mu_n}}^{*}(f(x,u,w))(\mu-\mu_n)(dw)
\right|.
$$
Here $h_{c,\ct_{\mu_n}}^{*}$ is data-dependent. 
By Theorem \ref{thrm:vanish_main}, $h_{c,\ct_{\mu_n}}^{*}$ is $L$-Lipschitz. 
Furthermore, we can require $h_{c,\ct_{\mu_n}}^{*}(z) =0$ for a fixed $z \in \bx$.  
However, without a finer characterization of how this relative value function depends on the empirical noise distribution $\mu_n$, one is led to control the larger class
$$
\mathcal H_L
:=
\left\{
w\mapsto h(f(x,u,w)):
(x,u)\in\bx\times\bu,\; 
h:\bx \to \bR \text{ is } L\text{-Lipschitz with }
\|h\|_\infty\le L\times\mathrm{diam}(\bx)
\right\}.
$$
This is a class of functions of $w \in \bW \subset \bR^d$, indexed by $(x,u,h)$.
Assume that $\bx \subset \bR^{d_1}$ is compact and convex, $\bu \subset \bR^{d_2}$ is compact, and
$f$ is Lipschitz in $(x,u)$.
Using the standard bracketing bound for bounded Lipschitz functions on $\bx$ \cite[Corollary 2.7.2]{vanderVaart1996}, 
and combining it with a finite $\epsilon$-net of $\bx\times\bu$, the uniform Lipschitz
continuity of $f$ in $(x, u)$ yields,
up to constants,
$$
\log N_{[]}\bigl(\epsilon,\mathcal H_L,\|\cdot\|_{\mathcal{L}_2(\mu)}\bigr)
\lesssim
(1/\epsilon)^{d_1}
+
(d_1+d_2)\log(1/\epsilon).
$$
Consequently, the entropy integral required by \cite[Theorem 2.14.2]{vanderVaart1996} contains
$\int_0^1 (1/\epsilon)^{d_1/2}\,d\epsilon,
$
which is finite only when $d_1<2$. Thus, this direct empirical-process approach yields a parametric average-cost bound only in the one-dimensional state space case, and does not provide a parametric average-cost counterpart in higher-dimensional state spaces.
\end{remark}
}

\revise{
\begin{remark}[Parametric disturbance models]
The nonparametric rates obtained by applying Theorem \ref{drivingnoisemainresult} with the empirical measure $\mu_n$ are driven by the convergence rate of $\wc_1(\mu,\mu_n)$. 
If the disturbance distribution belongs to a finite-dimensional parametric family, this bottleneck can disappear. Thus, the slower rates for $\mu_n$  reflect the nonparametric nature of Wasserstein estimation; for the average-cost criterion, whether an $n^{-1/2}$ rate is attainable more generally, without such parametric structure,  remains open.

Specifically, suppose that the true disturbance distribution is $\mu_{\theta_0}$, where $\theta_0\in\Theta$ and $\Theta\subset\mathbb R^m$ is the parameter space, and let $\widehat\theta_n\in\Theta$ be an estimator based on the i.i.d.~data $\{W_t:1 \leq t \leq n\}$. Assume that
$$
\Exp\left[\|\widehat\theta_n-\theta_0\|_2\right]=O(n^{-1/2}),
$$
as holds, for example, for regular parametric estimators under standard regularity conditions; see, e.g., \cite[Chapter 5]{van2000asymptotic}. If, in addition, the parametric family is Lipschitz in Wasserstein-1 distance, namely for some constant $C_\Theta > 0$,
$$
\wc_1(\mu_{\theta_1},\mu_{\theta_2})
\le C_\Theta\|\theta_1-\theta_2\|_2,
\qquad \theta_1,\theta_2\in\Theta,
$$
then under the average-cost assumptions of Theorem \ref{drivingnoisemainresult},
$$
\Exp\left[
\left\|J_\infty(c,\ct_{\mu_{\theta_0}},\gamma^*_{\mu_{\widehat\theta_n}})
-J^*_\infty(c,\ct_{\mu_{\theta_0}})\right\|_\infty
\right]
=O(n^{-1/2}).
$$

A simple sufficient condition for the Wasserstein Lipschitz property is the following. Suppose that $\bW=\mathbb R^d$, that $\mu_\theta$ admits a density $p_\theta$ with respect to Lebesgue measure, that $\Theta$ is convex, and that $p_\theta$ is continuously differentiable in $\theta$. Assume that
$$
C_{\Theta} := 
\sup_{\theta\in\Theta}
\int_{\mathbb R^d}
\|w\|_2\,\left(\|\nabla_\theta p_\theta(w)\|_2 + p_{\theta}(w)\right)\,dw
<\infty.
$$
The term involving $p_{\theta}$ ensures that each $\mu_{\theta}$
 has finite first moment, while the derivative term gives the Lipschitz bound. Specifically,
fix an arbitrary function $k:\bR^{d} \to \bR$ with $\|k\|_{\mathrm{Lip}} \leq 1$ and any $\theta_1,\theta_2 \in \Theta$. Note that
\begin{align*}
&\left|\int_{\bW} k(w) p_{\theta_1}(w) dw - \int_{\bW} k(w) p_{\theta_2}(w) dw\right| =
\left|\int_{\bW} (k(w) -k(0))(p_{\theta_1}(w) -p_{\theta_2}(w)) dw\right|\\
\leq &
\int_{\bW} |k(w) -k(0)| |p_{\theta_1}(w) -p_{\theta_2}(w)| dw 
\leq \int_{\bW} \|w\|_2 |p_{\theta_1}(w) -p_{\theta_2}(w)| dw.
\end{align*}
Note that for each $w \in \bR^{d}$, by the Cauchy-Schwarz inequality,
\begin{align*}
\left|p_{\theta_1}(w) -p_{\theta_2}(w)\right|=   \left|\int_0^1
\nabla_\theta p_{\theta_2+t(\theta_1-\theta_2)}(w)^\top(\theta_1-\theta_2)\;dt\right| 
\leq \int_0^1 \|\nabla_\theta p_{\theta_2+t(\theta_1-\theta_2)}(w)\|_2 \|\theta_1-\theta_2\|_2 dt,
\end{align*}
which implies that
\begin{align*}
&\left|\int_{\bW} k(w) p_{\theta_1}(w) dw - \int_{\bW} k(w) p_{\theta_2}(w) dw\right| \leq C_{\Theta} \|\theta_1 - \theta_2\|_2.
\end{align*}
Finally, by the Kantorovich--Rubinstein dual representation of $\wc_1$,
$$
\wc_1(\mu_{\theta_1},\mu_{\theta_2})
\leq
C_\Theta\|\theta_1-\theta_2\|_2.
$$
Thus, for regular parametric disturbance models, 
the robustness error has the parametric order.
\end{remark}
}

\subsection{Learning the Dynamics and Noise Simultaneously}
In this subsection, we assume that $\bx \subset \bR^{d_1}$ is convex and compact, and that $\bu \subset \bR^{d_2}$ is compact. \revise{Both spaces are equipped with the Euclidean metric $\|\cdot\|_2$. Denote by $\Pi_{\bx}: \bR^{d_1} \to \bx$ the  projection onto $\bx$ in terms of the Euclidean distance; note that $\Pi_{\bx}$ is $1$-Lipschitz. 
Consider the following additive dynamic system, which is a special case of \eqref{def:disturbance}:
\begin{equation*}
X_{t+1}=f(X_{t}, U_{t}, W_t) = \Pi_{\bx}\left(r(X_{t}, U_{t})+W_{t}\right), \;\;\text{ for } t\geq 0,
\end{equation*}
}
where $\{W_{t}\}_{t=0}^{\infty}$ are i.i.d.~with some distribution $\mu$ on $\bR^{d_1}$, and $r:\bx \times \bu \to \bR^{d_1}$ is continuous.

In contrast to the previous subsection, we now consider the case in which both the function 
$r(\cdot)$ and the distribution $\mu$ are \textit{unknown} and must be learned simultaneously from data. Specifically, we have access to the following data $(\tilde{Y}_i, \tilde{X}_i,\tilde{U}_i)$ for $0 \leq i \leq n$, where
\begin{align}\label{def:data_set}
    \tilde{Y}_i = r(\tilde{X}_i,\tilde{U}_i) + \tilde{W}_i, \quad \text{ where } \{\tilde{W}_i\}_{i=0}^{\infty} \;\text{ are  i.i.d.~with  distribution } \mu.
\end{align}
The data in \eqref{def:data_set} are assumed to be unprojected simulator observations of $r(\tilde{X}_i,\tilde{U}_i)+\tilde{W}_i$, rather than projected next-state observations from the controlled process. 
Note that the noise variables $\{\tilde{W}_i\}_{0 \leq i \leq n}$ are \textit{unobserved}, which means the empirical measure in \eqref{def:empirical_measures} is not available. %
Based on this data, we denote by $\tilde{r}_n(\cdot)$ an estimator of the function $r(\cdot)$, and define the following estimator for the distribution $\mu$:
\begin{align}\label{def:empirical_measure_approx}
    \tilde{\mu}_n(\cdot):=\frac{1}{n+1}\sum_{i=0}^{n}\delta_{\hat{W}_i}(\cdot), \text{ for } n \geq 0, \text{ where }
\hat{W}_i := \tilde{Y}_i -  \tilde{r}_n(\tilde{X}_i, \tilde{U}_i),
\end{align}
where we recall that $\delta_{\hat{W}_i}(\cdot)$ is the Dirac measure at $\hat{W}_i$. Concrete examples of the estimator $\tilde{r}_n$ will be presented later. For $x \in \bx$, $u \in \bu$ and $A \in \mathcal{F}_{\bx}$, define the following controlled kernels:
\revise{
\begin{align*}
\ct_{r,\mu}(A|x,u)&:=\mu(\{w\in \bR^{d_1}: \Pi_{\bx}(r(x,u)+w) \in A\}),\\
\ct_{\tilde{r}_n, \tilde{\mu}_n}(A|x,u)&:=\tilde{\mu}_n(\{w\in \bR^{d_1}: \Pi_{\bx}(\tilde{r}_n(x,u)+w) \in A\}).
\end{align*}
The projection ensures that $\ct_{r,\mu}$ and $\ct_{\tilde{r}_n, \tilde{\mu}_n}$ are transition kernels on $\bx$.}
 Thus, $(\bx, \bu, \ct_{r,\mu}, c)$ is the reference MDP, while  $(\bx, \bu, \ct_{\tilde{r}_n, \tilde{\mu}_n}, c)$ is an approximation.  We denote by $\gamma_{n, \beta}^{*}$ the optimal policy for the approximate MDP $(\bx, \bu, \ct_{\tilde{r}_n, \tilde{\mu}_n}, c)$ under the $\beta$-discounted-cost criterion. Our goal is to bound the performance degradation incurred when they are applied under the true dynamics $\ct_{r,\mu}$. We note that the asymptotic convergence for problems of this type has been considered \cite[Section 1.3.2(iv)]{kara2020robustness}. Here, we obtain explicit and quantitative bounds \revise{under the discounted-cost criterion}.

\begin{theorem}\label{cor:model_noise_both_learned}
Assume that $c:\bx\times\bu\to\bR$ is nonnegative, bounded,   continuous, and $\clip$-Lipschitz in $x$, and \revise{that $\tilde{r}_n$ is continuous in $(x,u)$ almost surely}. Furthermore, assume that $r$ is $\rlipx$-Lipschitz continuous in $x$ and  $\rlipxu$-Lipschitz continuous in $(x,u)$. If $\rlipx < \beta^{-1}$, then there exists a constant $C > 0$, depending only on $d_1, d_2$, $\mathrm{diam}(\bx)$ and $\mathrm{diam}(\bu)$, such that %
\begin{align*}
    &\Exp\left[\left\Vert\jb(c,\ct_{r, \mu},\gamma_{n,\beta}^{*})-\jb^{*}(c,\ct_{r, \mu}) \right\Vert_{\infty}\right]\\
\leq & C \frac{\beta }{\revise{(1-\beta)^2}} \left[ \left( \frac{\|c\|_{\infty}}{1-\beta} + 
\frac{\rlipxu \clip }{1-\beta \rlipx}
\right) n^{-1/2}
+ \frac{\clip}{1-\beta \revise{\rlipx}} \Exp\left[\|\tilde{r}_n - r\|_{\infty}
\right] \right].
\end{align*}

\end{theorem}

\begin{proof}
We first focus on the $\beta$-discounted case, and denote by $J(\cdot) := \jb^{*}(c,\ct_{r, \mu})(\cdot)$ and $\Delta_{\beta} := \left\Vert\jb(c,\ct_{r, \mu},\gamma_{n,\beta}^{*})-\jb^{*}(c,\ct_{r, \mu}) \right\Vert_{\infty} $.
Further, let $\mu_n := (n+1)^{-1} \sum_{i=0}^{n} \delta_{\tilde{W}_i}$ be the empirical measure.

By a similar argument as in Theorem \ref{cor:drivingnoisemainresult1}, due to Theorem \ref{thrm:discounted_robust}, we have
\begin{align*}
&\Delta_{\beta} \leq \revise{ \frac{2\beta}{(1-\beta)^2}}\sup_{x,u} \left| \int_{\bR^{d_1}} J(\revise{\Pi_{\bx}(r(x,u)+w)}) \mu(dw) - \int_{\bR^{d_1}} J(\revise{\Pi_{\bx}(\tilde{r}_n(x,u)+w)}) \tilde{\mu}_n (dw) 
\right| \\
\leq & \revise{\frac{2\beta}{(1-\beta)^2}}\left( (I)_{n} + (II)_{n} + (III)_{n}\right),
\end{align*}
where we define
\begin{align*}
&(I)_{n} := \sup_{x,u} \left| \int_{\bR^{d_1}} J(\revise{\Pi_{\bx}(r(x,u)+w)}) \mu(dw) - \int_{\bR^{d_1}} J(\revise{\Pi_{\bx}(\revise{r(x,u)}+w)}) {\mu}_n (dw) 
\right|, \\
& (II)_{n} := \sup_{x,u} \left| \int_{\bR^{d_1}} J(\revise{\Pi_{\bx}(r(x,u)+w)}) \mu_n(dw) - \int_{\bR^{d_1}} J(\revise{\Pi_{\bx}(\tilde{r}_n(x,u)+w)}) {\mu}_n (dw) 
\right|,\\
&(III)_{n} := \sup_{x,u} \left| \int_{\bR^{d_1}} J(\revise{\Pi_{\bx}(\tilde{r}_n(x,u)+w)}) \mu_n(dw) - \int_{\bR^{d_1}} J(\revise{\Pi_{\bx}(\tilde{r}_n(x,u)+w)}) \tilde{\mu}_n (dw) 
\right|.
\end{align*}
We have bounded $\Exp\left[(I)_{n}\right]$ in \eqref{aux:driving_noise_step}, where $\flipx = \rlipx$ and $\flipxu = \rlipxu$, since $\Pi_{\bx}$ is $1$-Lipschitz. Furthermore,  in the proof of Theorem \ref{cor:mu_n_result}, we have shown that $J(\cdot)$ is $m_{\beta}$-Lipschitz, where $m_{\beta}$ is defined in \eqref{aux:constants}; thus,
$(II)_{n} \leq m_{\beta} \|\tilde{r}_n - r\|_{\infty}$. Finally, by definition,
\begin{align*}
 (III)_{n}   = \sup_{x,u} \left| 
 \frac{1}{n+1}\sum_{i=0}^{n} \left(
 J(\revise{\Pi_{\bx}(\tilde{r}_n(x,u)+\tilde{W}_i)}) - J(\revise{\Pi_{\bx}(\tilde{r}_n(x,u)+ \hat{W}_i)})
 \right)
 \right|,
\end{align*}
 where $\hat{W}_i$ is defined in \eqref{def:empirical_measure_approx}. By definition, $|\hat{W}_i - \tilde{W}_i| \leq \|\tilde{r}_n - r\|_{\infty}$. As a result,
 $(III)_{n} \leq m_{\beta} \|\tilde{r}_n - r\|_{\infty}$.
 The proof is complete.
\end{proof}

Compared to Theorem \ref{cor:mu_n_result}, Theorem \ref{cor:model_noise_both_learned} has an extra term involving $\Exp\left[\|\tilde{r}_n - r\|_{\infty}\right]$. Next, we provide two examples where this sup-norm estimation error can be controlled.

\begin{example}
 Assume that $\Exp[\tilde{W}_0] = 0$, that $\bx \subset \bR$ and $\bu \subset \bR^{d_2}$ are compact, and that $r(x,u) = \alpha_0 x + \theta_0'u$ for $x \in \bx, u \in \bu$, where $\alpha_0 \in \bR$ and $\theta_0 \in \bR^{d_2}$ are both unknown. Assume that $\{(\tilde{X}_i, \tilde{U}_i): 0 \leq i \leq n\}$ are i.i.d.,  independent from $\{\tilde{W}_i: 0 \leq i \leq n\}$. We can estimate the coefficients by the least squares method:
$(\tilde{\alpha}_n, \tilde{\theta}_n) := \argmin_{(\alpha,\theta) \in \bR^{1+d_2}} \sum_{i=0}^{n} \left(\tilde{Y}_i - \alpha \tilde{X}_i - \theta'\tilde{U}_i\right)^2$. 
If the covariance matrix  of $(\tilde{X}_0,\tilde{U}_0)$ is non-singular, under mild moment conditions,  $\Exp\left[|\tilde{\alpha}_n - \alpha_0| +\| \tilde{\theta}_n - \theta_0\|_2\right] \leq C n^{-1/2}$ for some constant $C$. Since $\bx $ and $\bu$ are compact, it implies that 
$\Exp\left[\|\tilde{r}_n - r\|_{\infty} \right] \leq C n^{-1/2}$, where $\tilde{r}_n(x,u) :=   \tilde{\alpha}_n x + \tilde{\theta}_n'u$ for $x \in \bx, u \in \bu$. Thus,  under the conditions in Theorem \ref{cor:model_noise_both_learned}, the upper bound therein decays at the parametric rate $n^{-1/2}$.
\end{example}

\begin{example}
  Assume that $\Exp[\tilde{W}_0] = 0$ and that $\bx = [0,1]$ and $\bu$ is finite. Let $m \geq 1$ be an integer. Suppose for each $u \in \bu$ and $0 \leq i \leq m$, we observe $\tilde{Y}_{u,i} = r(i/m,u) + \tilde{W}_{u,i}$, where $\{\tilde{W}_{u,i}\}$ are independent copies of $\tilde{W}_0$. Thus, the dataset is $\{(\tilde{Y}_{u,i}, i/m, u): 0 \leq i \leq m, u \in \bu\}$, which has a sample size of $n+1 := (m+1)\times |\bu|$.  Assume that there exist some constants $s, L > 0$ such that for each $u \in \bu$, $r(\cdot,u)$ belongs to H\"older class with parameter
$s, L$ \revise{(see e.g. \cite[Definition 1.2]{Tsybakov2009})}, that is, $r(\cdot,u)$ is $\ell := \lceil s \rceil-1$ times differentiable and 
$|r^{(\ell)}(x,u)-r^{(\ell)}(y,u)| \leq L |x-y|^{s-\ell}$ for $x,y \in \bx$, where $\lceil s \rceil$  is the ceiling of $s$ and $r^{(\ell)}(x,u)$ is the $\ell$-th derivative of $r(\cdot,u)$ for a fixed $u$. Further, assume $\tilde{W}_0$ is sub-Gaussian, in the sense that for some constants $a,b > 0$, $\Exp[\exp(a (\tilde{W}_0-\Exp[\tilde{W}_0])^2)] \leq b$. 
For each $u \in \bu$, let $\tilde{r}_n(\cdot,u)$ be a local polynomial estimator of order $\ell$ for $r(\cdot,u)$ based on $\{(\tilde{Y}_{u,i}, i/m): 0 \leq i \leq m\}$,  as defined in Section 1.6 of \cite{Tsybakov2009}. Then as shown in Theorem 1.8 of \cite{Tsybakov2009}, with a proper choice of bandwidth and kernel, we have $\Exp\left[\sup_{x \in \bx} |\tilde{r}_n(x,u)- r(x,u)| \right] \leq (\log(m)/m)^{s/(1+2s)}$. Since $\bu$ is finite, under the conditions in Theorem \ref{cor:model_noise_both_learned}, the upper bound  therein decays at the  rate $(\log(n)/n)^{s/(1+2s)}$.
\end{example}

\section{Concluding Remarks}\label{ivpcontinuity}
In this paper, we consider the Wasserstein model approximation problem, where we upper bound the performance gap of applying an optimal policy from an approximate model to the true dynamics. This gap is bounded linearly by the sup-norm-induced metric between the approximate and true costs, as well as the uniform Wasserstein-1 distance between the approximate and true transition kernels. We study both discounted-cost and average-cost criteria. Based on these results, we develop offline model learning algorithms and obtain their sample complexity bounds. Additionally, we recover and generalize several existing results on continuous dependence, robustness and approximations. 

An extension of the present work would be robustness of finite-step value iteration, which generalizes \cite{Rudolf2018}, as clarified in the following observations. Consider applying Corollary \ref{drivingnoisecorollary2} to the setup where $c=\altc$ and both the true and approximate system are control-free, i.e., 
\begin{align*}
&c(x,u)=c(x),\;\ct(\cdot|x,u)=\ct(\cdot|x),\;\ctp(\cdot|x,u)=\ctp(\cdot|x).
\end{align*}
Furthermore, suppose both probability kernels admit invariant probability measures, and denote them by $\rho_{\ct}$ and $\rho_{\ctp}$, respectively. By the result in Corollary \ref{drivingnoisecorollary2}, we have the following bound
\begin{equation*}
\left|\int_{\bx}c(x)\rho_{\ct}(dx)-\int_{\bx}c(x)\rho_{\ctp}(dx)\right|\leq \frac{\clip}{1-\tlip}\sup_{x\in\bx}\left(\wc_{1}\left(\ct(\cdot|x),\ctp(\cdot|x)\right)\right).
\end{equation*}
We note that the choice of the cost function is arbitrary as long as it is Lipschitz. Restricting our attention to those with Lipschitz constant less than or equal to 1, and taking supremum over such cost functions on both sides, we have
\begin{equation*}
\wc_{1}(\rho_{\ct},\rho_{\ctp})\leq \frac{1}{1-\tlip}\sup_{x\in\bx}\left(\wc_{1}\left(\ct(\cdot|x),\ctp(\cdot|x)\right)\right).
\end{equation*}
This recovers Corollary 3.1 of \cite{Rudolf2018} in an unweighted form. It is therefore natural to consider that a performance bound for finite-step approximate value iteration would be analogous to Theorem 3.1 in \cite{Rudolf2018}.

\bibliography{ZSYModelApproximation}
\appendix

\section{Technical results for the average-cost setting}
\subsection{Proof of Lemma \ref{lemma:hproperties}}\label{sec:hpropertiesproof} In this subsection, we prove Lemma \ref{lemma:hproperties}. We start with the proof to claim \textbf{(a)}. For simplicity of notation, we denote $\Delta h=h_{c,\ct,\epsilon_{1}}^{*}-h_{c,\ct,\epsilon_{2}}^{*}$. By  measurable selection theorem (see e.g. \cite[Theorem 12.1]{Schal}), there exists measurable function $u_{1},u_{2}:\bx \to \bu$ such that for $i \in \{1,2\}$ and $x \in \bx$,
\begin{align*}
g_{c,\ct}^{*}+h_{c,\ct,\epsilon_{i}}^{*}(x)&=\inf_{u\in\bu} \left\{c(x,u)+\int_{\bx}h_{c,\ct,\epsilon_{i}}^{*}(y)\ct(dy|x,u)\right\}\\
&=c(x,u_{i}(x))+\int_{\bx}h_{c,\ct,\epsilon_{i}}^{*}(y)\ct(dy|x,u_{i}(x)).
\end{align*}
By definition, it follows that for each $x \in \bx$,
$$
g_{c,\ct}^{*}+h_{c,\ct,\epsilon_{2}}^{*}(x)\leq c(x,u_{1}(x))+\int_{\bx}h_{c,\ct,\epsilon_{2}}^{*}(y)\ct(dy|x,u_{1}(x)),
$$
which implies that for each $x \in \bx$,
\begin{align*}
\Delta h(x)&\geq \int_{\bx}\Delta h(y) \ct(dy|x,u_{1}(x))\\
&=\int_{\bx}\Delta h(y) \left(\ct(dy|x,u_{1}(x))-\epsilon \rho(dy)\right)+\epsilon\int_{\bx}\Delta h(y) \rho(dy)\\
&\geq (1-\epsilon)\inf_{\bx}\Delta h+\epsilon\int_{\bx}\Delta h(y) \rho(dy).
\end{align*}
As a result, taking the infimum over $x \in  \bx$ on both side, we have
\begin{align*}
    \inf_{\bx}\Delta h&\geq \int_{\bx}\Delta h(y) \rho(dy).
\end{align*}
By a similar argument, we have
$$
\sup_{\bx}\Delta h\leq \int_{\bx}\Delta h(y) \rho(dy).
$$
which implies that $\Delta h$ is a constant function. 

Part (b) follows immediately from part (a), since adding a constant offset
does not change the Lipschitz constant.

Finally, we prove part \textbf{(c)}. Recall the definition of $\mathbb{T}_{c,\ct,\epsilon}$ in \eqref{equ:minor_operator}.
By Banach fixed-point theorem and the triangle inequality, we have that  
\begin{align*}
\left\Vert h_{c,\ct,\epsilon}^{*} \right\Vert_{\infty}&=\left\Vert \mathbb{T}_{c,\ct,\epsilon}h_{c,\ct,\epsilon}^{*} -0\right\Vert_{\infty}\leq \left\Vert \mathbb{T}_{c,\ct,\epsilon}h_{c,\ct,\epsilon}^{*} -\mathbb{T}_{c,\ct,\epsilon}0+\mathbb{T}_{c,\ct,\epsilon}0-0\right\Vert_{\infty}\\
&\leq \left\Vert \mathbb{T}_{c,\ct,\epsilon}h_{c,\ct,\epsilon}^{*} -\mathbb{T}_{c,\ct,\epsilon}0\right\Vert_{\infty}+\left\Vert\mathbb{T}_{c,\ct,\epsilon}0-0\right\Vert_{\infty}\\
&\leq (1-\epsilon)\left\Vert h_{c,\ct,\epsilon}^{*} \right\Vert_{\infty}+\left\Vert \inf_{u\in\bu}\left\{c(\cdot,u)+\int_{\bx}0\left(\ct(dy|\cdot,u)-\epsilon\rho(dy)\right)\right\}\right\Vert_{\infty}\\
&\leq (1-\epsilon)\left\Vert h_{c,\ct,\epsilon}^{*} \right\Vert_{\infty}+\Vert c\Vert_{\infty},
\end{align*}
which implies
\begin{equation*}
\left\Vert h_{c,\ct,\epsilon}^{*} \right\Vert_{\infty}\leq \Vert c\Vert_{\infty}/\epsilon.
\end{equation*} 
The proof is now complete.

\section{Statistical analysis} 

\subsection{A concentration result}\label{app:statistical_analysis}
Let $\mu$ be a probability measure on $\bW$. 
Let $W,W_0,W_1,\ldots$ be i.i.d.~random variables following the distribution $\mu$. For each $\epsilon > 0$, denote by $N(\epsilon, \bx \times \bu, d_{\bx} \times d_{\bu})$ the $\epsilon$-covering number of $\bx \times \bu$, that is, it is the smallest $M$ such that there exists $\{(x_i,u_i), 1 \leq i \leq M \} \subset \bx \times \bu$ with the property that 
$$
\bx \times \bu = \bigcup_{i=1}^{M} \left\{(x,u) \in \bx \times \bu: d_{\bx}(x,x_i) + d_{\bu}(u,u_i) < \epsilon \right\}.
$$
In addition, let $\mathcal{G}$ be a collection of real-valued functions defined on $\bW$. For each $\epsilon > 0$, denote by
$N_{[]}(\epsilon, \mathcal{G}, \|\cdot\|_{\mathcal{L}_{2}(\mu)})$ the $\epsilon$-bracketing number of $\mathcal{G}$ under $\mathcal{L}_{2}(\mu)$ norm, that is, it is the smallest $M >0$ such that there exist functions $\{\ell_i,u_i: \bW \to \bR: i \in [M]\}$ with the property that (i) for each $i \in [M]$, $\ell_i(w)\leq u_i(w)$  for any $w \in \bW$, and $\|u_i-\ell_i\|_{\mathcal{L}_{2}(\mu)} < \epsilon$; (ii) for any $g \in \mathcal{G}$,
we can find some $i^* \in [M]$ such that $g(w) \in [\ell_i(w),u_i(w)]$ for all $w \in \bW$.

Furthermore, for a measurable function $h: \bW \to \bR$, denote by $\|h\|_{\mathcal{L}_{2}(\mu)} := \left(\int h^2(w) \mu(dw) \right)^{1/2}$ the $\mathcal{L}_{2}(\mu)$ norm with respect to $\mu$.

\begin{lemma} \label{app:empirical_proc}
Let $B, M >0$, and consider a bounded, measurable function $g: \bx\times\bu \times \bW \to [-B,B]$. 
\begin{enumerate}[label=\roman*)]
        \item Assume that for some constant $K,L \geq 1$, we have that for each $\epsilon > 0$, 
        $$
        N(\epsilon, \bx \times \bu, d_{\bx} \times d_{\bu}) \leq \left(\frac{K}{\epsilon} \right)^{L}.
        $$
        \item For $\mu$-almost every  $w$, we have
        $$
        |g(x,u,w) - g(x',u',w)| \leq M  \left(d_{\bx}(x,x') + d_{\bu}(u,u')\right), \text{ for any } (x,u),(x',u') \in \bx \times \bu.
        $$
    \end{enumerate}
Then, there exists constant $C > 0$, depending on $K$ and $L$,   such that
    \begin{align*}
        \Exp\left[\sup_{(x,u) \in \bx \times \bu} \left|\frac{1}{n+1}\sum_{i=0}^{n} g(x,u, W_i) - E[g(x,u,W)] \right| \right] \leq C (B+M) n^{-1/2}.
    \end{align*}
\end{lemma}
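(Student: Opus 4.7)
The plan is to control the expected supremum by standard empirical process theory. I would view the family
$$
\mathcal{F} := \{g(x,u,\cdot) : (x,u) \in \bx \times \bu\}
$$
as a class of measurable functions $\bW \to [-B,B]$, and leverage condition (ii) to transfer the metric entropy of $(\bx \times \bu, d_{\bx} + d_{\bu})$ to a metric entropy on $\mathcal{F}$. Indeed, for any $(x,u), (x',u') \in \bx\times\bu$, condition (ii) gives
$$
\|g(x,u,\cdot) - g(x',u',\cdot)\|_{\infty} \leq M\bigl(d_{\bx}(x,x')+d_{\bu}(u,u')\bigr),
$$
so an $(\eta/M)$-net of $\bx \times \bu$ induces an $\eta$-net of $\mathcal{F}$ in $L_\infty(\mu)$, and therefore in $L_2(\mu)$. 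Combined with (i), this yields $N(\eta, \mathcal{F}, L_2(\mu)) \leq (KM/\eta)^{L}$ for all $\eta > 0$, and in particular $N(\eta, \mathcal{F}, L_2(\mu)) = 1$ once $\eta \geq 2B$.

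Setting $F(x,u) := \tfrac{1}{n+1}\sum_{i=0}^{n}\bigl(g(x,u,W_i) - \Exp[g(x,u,W)]\bigr)$, I would apply the standard symmetrization inequality
$$
\Exp\Bigl[\sup_{(x,u)}|F(x,u)|\Bigr] \leq \frac{2}{n+1}\,\Exp\Bigl[\sup_{(x,u)}\Bigl|\sum_{i=0}^{n}\varepsilon_{i}\, g(x,u,W_i)\Bigr|\Bigr],
$$
where $\{\varepsilon_i\}$ are i.i.d.\ Rademacher signs independent of $\{W_i\}$. Conditional on $\{W_i\}$, the inner Rademacher process is sub-Gaussian in the empirical $L_2$-metric on $\mathcal{F}$, whose diameter is bounded by $2B$. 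Dudley's entropy integral, together with the covering bound above, then delivers
$$
\Exp\Bigl[\sup_{(x,u)}|F(x,u)|\Bigr] \leq \frac{C}{\sqrt{n+1}}\int_{0}^{2B}\sqrt{L\,\log\bigl((KM/\eta)\vee 1\bigr)}\,d\eta.
$$

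The final step is to show this integral is linear in $B+M$. I would split the range at $\eta = M$: on $[M, 2B]$ the integrand is at most $\sqrt{L\log K}$ (and vanishes once $\eta > KM$), contributing $O(B\sqrt{L\log K})$; on $[0, M]$, the substitution $\eta = M s$ reduces the contribution to $M\int_{0}^{1}\sqrt{L\log(K/s)}\,ds$, which is $O(M\sqrt{L\log K})$ since the integral is finite for fixed $K$ and $L$. Summing the two contributions yields the claimed bound $C(B+M)/\sqrt{n}$, with the constant depending on $K$ and $L$ (and hence implicitly on the diameter of $\bx \times \bu$ via the covering constants). The main technical point is this chaining step: a crude one-step net argument would introduce an extra $\sqrt{\log n}$ factor on the $B$-term, and it is the multi-scale summation inherent to Dudley's integral that absorbs this log and produces a clean linear dependence on $B+M$.
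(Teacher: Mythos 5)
Your proof is correct, and it reaches the same bound through a parallel but technically distinct route. The paper converts the covering number of $\bx\times\bu$ into a \emph{bracketing} number for the class $\{g(x,u,\cdot)\}$ via the Lipschitz parametrization (Theorem 2.7.11 of van der Vaart--Wellner) and then invokes the packaged bracketing maximal inequality (their Theorem 2.14.2) with envelope $B$, finishing by a change of variables in the entropy integral. You instead use the \emph{uniform} $L_2$ covering number together with symmetrization and Dudley's chaining bound for the conditional Rademacher process, and split the entropy integral at $\eta=M$ to separate the $B$- and $M$-contributions. Since your $L_\infty(\mu)$-net argument controls bracketing and uniform covering numbers identically, the two routes are interchangeable here; yours is slightly more self-contained and makes the origin of the $B+M$ dependence more transparent, while the paper's is shorter because the maximal inequality is cited wholesale. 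Two small points: in the chaining step you should account for the single anchor term $\Exp\bigl|\sum_i \varepsilon_i g(x_0,u_0,W_i)\bigr| \leq B\sqrt{n+1}$, which contributes another $O(B n^{-1/2})$ and is harmlessly absorbed; and, as in the paper, the final constant actually depends on $K$ and $L$ (hence on the dimension and diameter of $\bx\times\bu$), which you correctly flag.
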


\begin{proof} For each $(x,u) \in \bx \times \bu$, define $g_{x,u}(w) = g(x,u,w)$ for $w \in \bW$. Let $\mathcal{G} := \{g_{x,u}: (x,u) \in \bx \times \bu\}$. Then by \cite[Theorem 2.7.11]{vanderVaart1996}, for each $\epsilon > 0$,
\begin{align}\label{app:rate_conv}
    N_{[]}(2\epsilon M, \mathcal{G}, \|\cdot\|_{\mathcal{L}_{2}(\mu)}) \leq N(\epsilon, \bx \times \bu, d_{\bx} \times d_{\bu}) \leq \left(\frac{K}{\epsilon} \right)^{L}.
\end{align}

By Theorem 2.14.2 in \cite{vanderVaart1996}, for some absolute constant $C > 0$, we have
\begin{align*}
            \Delta := &\Exp\left[\sup_{(x,u) \in \bx \times \bu} \left|\frac{1}{n+1}\sum_{i=0}^{n} g(x,u, W_i) - E[g(x,u,W)] \right| \right] \\
            \leq &C n^{-1/2} \int_0^1 \sqrt{1 + \log\left(N_{[]}(\epsilon B, \mathcal{G}, \|\cdot\|_{\mathcal{L}_{2}(\mu)}) \right)} \, d\epsilon B,
\end{align*}
where we note that the constant function $B$ is an envelope function for $\mathcal{G}$. By a change-of-variable, 
\begin{align*}
    \Delta &\leq C n^{-1/2} \frac{2M}{B}\int_{0}^{{B}/{(2M})} \sqrt{1 + \log\left(N_{[]}(2\epsilon M, \mathcal{G}, \|\cdot\|_{\mathcal{L}_{2}(\mu)}) \right)} \, d\epsilon B \\
    & \leq 2C n^{-1/2} {M}(B/(2M) + 1) \int_{0}^{1} \sqrt{1 + L \log(K/\epsilon)} \, d\epsilon,
\end{align*}
where we apply the upper bound \eqref{app:rate_conv} on the bracketing number in the last step. Since $\int_0^{1} \sqrt{\log(1/\epsilon)} d\epsilon < \infty$, there exists an absolute constant $C > 0$ such that $\Delta \leq C n^{-1/2} (M + B)$.
\end{proof}

\begin{remark}\label{remark:empir_proc}
The condition i) above holds if $\bx$ and $\bu$ are compact Euclidean subsets, and the constants $K$ and $L$ depend on the dimensions and diameters of $\bx$ and $\bu$.
\end{remark}

\subsection{Proof of Lemma \ref{lemma:Markov_con}} \label{subapp:proof_Markov_con}
In this subsection, we prove Lemma \ref{lemma:Markov_con}. By definition, we have that $\|\hat{c}_{N}\|_{\infty} \leq \|c\|_{\infty}$ and $\|c^{\pi,M}\|_{\infty} \leq \|c\|_{\infty}$.
For each $i \in [M]$, $u \in \bu$, define
\begin{align*}
    & A_{i,u} :=   \frac{1}{N}\sum_{n=0}^{N-1} \mathbbm{1}\{X_n \in B_i, U_n = u\}, \quad a_{i,u} := \revise{\phi(u)} \int_{B_i}   \pi(dx), \\
    &  \Xi_{i,u}  :=  \frac{1}{N}\sum_{n=0}^{N-1} C_n \mathbbm{1}\{X_n \in B_i, U_n = u\},\quad \xi_{i,u} := \revise{\phi(u)} \int_{B_i} c(x,u) \pi(dx) 
\end{align*}
By definition of $\hat{c}_N$ following equation \eqref{def:collection_MC_dyn} and $c^{\pi,M}$ in \eqref{finitemodeldef}, and the triangle inequality,
\begin{align*}
 \|\hat{c}_N -c^{\pi,M}\|_{\infty} \leq \max_{i \in [M],u \in \bu} \frac{1}{A_{i,u}}\left|\Xi_{i,u} - \xi_{i,u} \right| + \frac{|\xi_{i,u}|}{A_{i,u} a_{i,u}} \left|A_{i,u} - a_{i,u} \right|.
\end{align*}
By definition, $\frac{|\xi_{i,u}|}{a_{i,u}} \leq \|c\|_{\infty}$. Recall the definition of $\kappa_{\pi,M}$ prior to Lemma \ref{lemma:Markov_con} and define the event 
\begin{align}\label{def:Sigma1}
    \Sigma_1 := \bigcap_{i \in [M],u \in \bu} \left\{A_{i,u} \geq \kappa_{\pi,M}/2 \right\}.
\end{align}
By conditioning on the event $\Sigma_1$ and its complement $\Sigma_1^{c}$, we have
\begin{align*}
    \Exp\left[ \|\hat{c}_N -c^{\pi,M}\|_{\infty} \right] \leq 
    \frac{2}{\kappa_{\pi,M}} \Exp\left[ \left\|\Xi - \xi \right\|_{\infty} \right] +
    \frac{2\|c\|_{\infty}}{\kappa_{\pi,M}} \Exp\left[ \left\|A - a \right\|_{\infty} \right] +  \|c\|_{\infty} \Pro\left( \Sigma_1^c \right).
\end{align*}

Applying Assumption \ref{Model_Learning_Markov}(b)  with the function $f_{i,u}(x,u',x') := \mathbbm{1}\{x \in B_i, u' = u\}$ and ${f}_{i,u}(x,u',x'):= c(x,u') \mathbbm{1}\{x \in B_i, u' = u\}$, respectively, 
to the Markov chain $\{(X_n,U_n,X_{n+1}): n \geq 0\}$, we obtain  for each $i \in [M]$, $u \in \bu$, and $\epsilon > 0$,
\begin{align*}
\max\{ \Pro\left( \left|A_{i,u} - a_{i,u} \right| \geq \epsilon \right), \;\Pro\left( \left|\Xi_{i,u} - \xi_{i,u} \right| \geq \|c\|_{\infty} \epsilon \right) 
\}
\leq 2 C_0 \exp\left( - c_0  N\epsilon^2  \right).
\end{align*}
By the union bound with $\epsilon = \kappa_{\pi,M}/2$, we have
\begin{align*}
   \Pro\left( \Sigma_1^c \right) \leq 2C_0 M |\bu| \exp\left( -c_0 \kappa_{\pi,M}^2 N/4 \right).
\end{align*}
Furthermore,  by Lemma 2.2.1 in \cite{vanderVaart1996}, we have $\left\|
   \sqrt{N}(A_{i,u} - a_{i,u}) 
    \right\|_{\psi_2}^2 \leq (1+2C_0)/c_0$ for $i \in [M], u \in \bu$, where for a random variable $Z$,  its $\psi_{2}$ norm is defined as follows: $ \|Z\|_{\psi_{2}} := \inf\{C > 0: \Exp\left[\exp(|Z/C|^2)\right] \leq 2\}$.  Then by Lemma 2.2  in \cite{vanderVaart1996}, for some absolute constant $C>0$,
    \begin{align*}
        \Exp\left[\|A - a\|_{\infty} \right]
=         \Exp\left[\max_{i \in [M], u \in \bu} \|A - a\|_{\infty} \right]
\leq C \sqrt{\frac{\log(M |\bu|)}{N}} \sqrt{ \frac{1+2C_0}{c_0}}.
    \end{align*}
By a similar argument, we have 
\begin{align*}
    \Exp\left[ \left\|\Xi - \xi \right\|_{\infty} \right] \leq C \ \|c\|_{\infty} \sqrt{\frac{\log(M |\bu|)}{N}} \sqrt{ \frac{1+2C_0}{c_0}}.
\end{align*}
The proof for the first claim then is complete by combining the above inequalities. 

Now, we focus on the second claim. For each $i \in [M]$, $u \in \bu$, define
\begin{align*}
&\widehat{\Xi}_{i,u} :=    \frac{1}{N}\sum_{n=0}^{N-1} \left(\sum_{j \in [M]}g(y_j) \mathbbm{1}\{X_{n+1} \in B_j, X_n \in B_i, U_n=u\}\right), \\
&\widehat{\xi}_{i,u} := \revise{\phi(u)}\times \sum_{j \in [M]}g(y_j)\int_{B_i} \ct(B_j|x,u) \pi(dx).
\end{align*}
By definition, $d_{g}(\hat{\ct}_N, \ct^{\pi,M}) = \sup_{i \in [M],u \in \bu} \left| 
  {\widehat{\Xi}_{i,u}}/{ A_{i,u}}
  - {\widehat{\xi}_{i,u}}/{a_{i,u}
    }
    \right|$. Then, since $\|g\|_{\infty} \leq L$, by the same argument as above, 
\begin{align*}
    \Exp\left[ d_{g}(\hat{\ct}_N, \ct^{\pi,M}) \right]
    \leq     
    \frac{2}{\kappa_{\pi,M}} \Exp\left[ \left\|\widehat{\Xi} - \widehat{\xi} \right\|_{\infty} \right] +
    \frac{2L}{\kappa_{\pi,M}} \Exp\left[ \left\|A - a \right\|_{\infty} \right] +  2L \Pro\left( \Sigma_1^c \right).
\end{align*}
Further, again applying Assumption \ref{Model_Learning_Markov}(b)   with the function $f_{i,u}(x,u',x') := L+\sum_{j \in [M]} g(y_j) \mathbbm{1}\{x' \in B_j, x \in B_i, u' = u \}$, we have
\begin{align*}
    \Pro\left( \left|\widehat{\Xi}_{i,u} - \widehat{\xi}_{i,u} \right| \geq  2L \epsilon \right) 
\leq 2 C_0 \exp\left( - c_0  N\epsilon^2  \right),
\end{align*}
which implies that   $\Exp\left[ \left\|\widehat{\Xi} - \widehat{\xi} \right\|_{\infty} \right] \leq C L \sqrt{\frac{\log(M |\bu|)}{N}} \sqrt{ \frac{1+2C_0}{c_0}}$. The proof is then complete.

Finally, we focus on the last claim.  For $i,j \in [M]$ and $u \in \bu$, define
\begin{align*}
    &\widetilde{\Xi}_{i,j,u} :=    \frac{1}{N}\sum_{n=0}^{N-1}  \mathbbm{1}\{X_{n+1} \in B_j, X_n \in B_i, U_n=u\}, \quad \widetilde{\xi}_{i,j,u} := \revise{\phi(u)} \times \int_{B_i} \ct(B_j|x,u) \pi(dx).
\end{align*}
 Applying Assumption \ref{Model_Learning_Markov}(b)  with the function $f_{i,j,u}(x,u',x') := \mathbbm{1}\{x \in B_i, u' = u, x' \in B_j\}$
to the Markov chain $\{(X_n,U_n,X_{n+1}): n \geq 0\}$, we obtain   for each $i,j \in [M]$, $u \in \bu$, and $\epsilon > 0$,
\begin{align*}
\Pro\left( \left|\widetilde{\Xi}_{i,j,u} - \widetilde{\xi}_{i,j,u} \right| \geq \epsilon \right) 
\leq 2 C_0 \exp\left( - c_0  N\epsilon^2  \right).
\end{align*}
By the definition of the event $\Sigma_1$ in \eqref{def:Sigma1}, on the event $\Sigma_1$, for $i,j \in [M]$ and $u \in \bu$,
$$
\left|\hat{\ct}_N(y_j|y_i,u) - \ct^{\pi,M} (y_j|y_i,u)\right| \leq \frac{2}{\kappa_{\pi,M}}\left|\widetilde{\Xi}_{i,j,u}-\widetilde{\xi}_{i,j,u}\right| + \frac{2}{\kappa_{\pi,M}}|A_{i,u} - a_{i,u}|.
$$

By the definition of  $\mathcal{E}_{N, M}$ in \eqref{def:E_N_M} and $\kappa_{\ct,M}$ in \eqref{def:kappa_pi_M_init}, we have
\begin{align*}
    \mathcal{E}_{N, M} \cap \Sigma_1 &\;\subset\; \bigcup_{i,j \in [M], u \in \bu  } \left\{  \left|\hat{\ct}_N(y_j|y_i,u) - \ct^{\pi,M} (y_j|y_i,u)\right| > \kappa_{\ct,M}/2 \right\} \cap \Sigma_1 \\
    &
\;\subset\; \bigcup_{i,j \in [M], u \in \bu  } \left(\left\{  \left|\widetilde{\Xi}_{i,j,u}-\widetilde{\xi}_{i,j,u}\right| > \kappa_{\pi,M}\kappa_{\ct,M}/8 \right\} \bigcup 
\left\{
|A_{i,u} - a_{i,u}| > \kappa_{\pi,M}\kappa_{\ct,M}/8
\right\}
\right).
\end{align*}
As a result, by the union bound, we have
\begin{align*}
  \Pro(\mathcal{E}_{N, M}) &\leq \Pro(\mathcal{E}_{N, M} \cap \Sigma_1) + \Pro(\Sigma_1^c) 
  \leq \sum_{i,j \in [M], u \in U} \Pro\left(\left|\widetilde{\Xi}_{i,j,u}-\widetilde{\xi}_{i,j,u}\right| > \kappa_{\pi,M}\kappa_{\ct,M}/8 \right) \\
  &+ \sum_{i \in [M], u \in \bu} \Pro\left( |A_{i,u} - a_{i,u}| > \kappa_{\pi,M}\kappa_{\ct,M}/8\right)+\Pro(\Sigma_1^c).
\end{align*}
Then the proof is complete by combining the previous upper bounds on these probabilities.

\end{document}